\documentclass[11pt]{article}
\usepackage[margin=1in]{geometry}
\usepackage{listings}
\usepackage{amsmath, amssymb, amsthm, amsfonts,cite,alltt,clrscode}
\usepackage[english]{babel}
\usepackage{graphicx}
\usepackage{caption}
\usepackage{accents}
\usepackage[list=true,listformat=simple]{subcaption}
\usepackage{color,colortbl}
\definecolor{almond}{rgb}{0.94, 0.87, 0.8}
\usepackage{hyperref}

\newif\ifabstract
\abstractfalse
\newif\iffull
\ifabstract \fullfalse \else \fulltrue \fi

\ifabstract  
\usepackage{times}

{\makeatletter
 \gdef\section{\@ifnextchar*\section@star\section@normal}
 \gdef\section@normal#1{\refstepcounter{section}%
   \paragraph{\arabic{section}\hbox{~~}#1.}%
   \addcontentsline{toc}{section}{\protect\numberline{\arabic{section}}{#1}}}
 \gdef\section@star*#1{\paragraph{#1.}}}

{\makeatletter
 \gdef\subsection{\@ifnextchar*\subsection@star\subsection@normal}
 \gdef\subsection@normal#1{\refstepcounter{subsection}%
   \paragraph{\thesubsection\hbox{~~}#1.}%
   \addcontentsline{toc}{subsection}{\protect\numberline{\thesubsection}{#1}}}
 \gdef\subsection@star*#1{\paragraph{#1.}}}


\newlength\aboveparagraphskip
\aboveparagraphskip=3.25ex plus 1ex minus .2ex
\newlength\belowparagraphskip
\belowparagraphskip=-1em
\makeatletter
\def\paragraph{\@startsection{paragraph}{4}{\z@}{-\aboveparagraphskip}%
                 {\belowparagraphskip}{\normalfont\normalsize\bfseries}}
\makeatother
\aboveparagraphskip=.5ex plus .5ex minus .25ex

\fi

\urlstyle{same}

\newtheorem{theorem}{Theorem}[section]
\newtheorem{lemma}[theorem]{Lemma}

\newtheorem{corollary}[theorem]{Corollary}

\newcommand{\ul}[1] {\underline{#1}}

{\makeatletter
 \gdef\xxxmark{%
   \expandafter\ifx\csname @mpargs\endcsname\relax 
     \expandafter\ifx\csname @captype\endcsname\relax 
       \marginpar{xxx}
     \else
       xxx 
     \fi
   \else
     xxx 
   \fi}
 \gdef\xxx{\@ifnextchar[\xxx@lab\xxx@nolab}
 \long\gdef\xxx@lab[#1]#2{\textbf{[\xxxmark #2 ---{\sc #1}]}}
 \long\gdef\xxx@nolab#1{\textbf{[\xxxmark #1]}}
}

{\catcode`\^^M=\active \gdef\tablines{\catcode`\^^M=\active \def^^M{\\}}}
\newenvironment{pcode}{\tablines \topsep=0pt \partopsep=0pt \tabbing
   MM\=MM\=MM\=MM\=MM\=MM\=MM\=MM\=MM\=MM\=MM\=MM\=MM\=\kill}
  {\endtabbing\vspace*{-0\baselineskip}}
{\makeatletter \gdef\lasttab{\ifnum \@curtab<\@hightab \>\lasttab\fi}}

\def\compactify{\itemsep=0pt \topsep=0pt \partopsep=0pt \parsep=0pt}
\let\latexusecounter=\usecounter
\newenvironment{itemize*}
  {\begin{itemize}\compactify}
  {\end{itemize}}
\newenvironment{enumerate*}
  {\def\usecounter{\compactify\latexusecounter}
   \begin{enumerate}}
  {\end{enumerate}\let\usecounter=\latexusecounter}
\newenvironment{description*}
  {\begin{description}\compactify}
  {\end{description}}


\def\id#1{\textit{#1}}
\def\proc#1{\textsc{#1}}
\let\epsilon=\varepsilon
\let\keyw=\textbf

\newcommand{\lsuper}{L}
\newcommand{\lrtos}{\lsuper[r\mathbin{:}s]}
\newcommand{\lstot}{\lsuper[s+1\mathbin{:}t]}
\newcommand{\lrtot}{\lsuper[r\mathbin{:}t]}

\begin{document}

\title{Energy-Efficient Algorithms}
\author{%
  Erik D. Demaine%
    \thanks{MIT Computer Science and Artificial Intelligence Laboratory,
      32 Vassar Street, Cambridge, MA 02139, USA,
      \protect\url{{edemaine,jaysonl,geronm,ntyagi}@mit.edu}.
      Supported in part by the MIT Energy Initiative and by
      MADALGO --- Center for Massive Data Algorithmics ---
      a Center of the Danish National Research Foundation.}
\and
  Jayson Lynch\footnotemark[1]
\and
  Geronimo J. Mirano\footnotemark[1]
\and
  Nirvan Tyagi\footnotemark[1]
}
\date{\today}
\maketitle
\begin{abstract}

We initiate the systematic study of the \emph{energy complexity} of
algorithms (in addition to time and space complexity)
based on Landauer's Principle in physics, which gives a lower bound
on the amount of energy a system must dissipate if it destroys information.
We propose energy-aware variations of three standard models of computation: circuit RAM, word RAM, and transdichotomous RAM.
On top of these models, we build familiar high-level primitives such as control
logic, memory allocation, and garbage collection with zero energy complexity
and only constant-factor overheads in space and time complexity,
enabling simple expression of energy-efficient algorithms.
We analyze several classic algorithms in our models and develop low-energy
variations: comparison sort, insertion sort, counting sort, breadth-first
search, Bellman-Ford, Floyd-Warshall, matrix all-pairs shortest paths,
AVL trees, binary heaps, and dynamic arrays.
We explore the time/space/energy trade-off and develop several general
techniques for analyzing algorithms and reducing their energy
complexity.  These results lay a theoretical foundation for a new field
of semi-reversible computing and provide a new framework for the
investigation of algorithms.

\end{abstract}

\textbf{Keywords:} Reversible Computing, Landauer's Principle, Algorithms, Models of Computation

\thispagestyle{empty}
\setcounter{page}{-1}
\pagebreak

\tableofcontents 
\vfil
\clearpage

\section{Introduction} 

\paragraph{Landauer limit.}

CPU power efficiency (number of computations per kilowatt hour of energy)
has doubled every 1.57 years from 1946 to 2009
\cite{Koomey-Berard-Sanchez-Wong-2011}.
Within the next 15--60 years, however, this trend will hit a fundamental
limit in physics, known as Landauer's Principle \cite{landauer61}.
This principle states that discarding one bit of information
(increasing the entropy of the environment by one bit)
requires $k T \ln 2$ energy, where $k$ is Boltzmann's constant and
$T$ is ambient temperature,
which is about $2.8 \cdot 10^{-21}$ joules or
$7.8 \cdot 10^{-28}$ kilowatt hours at room temperature ($20^\circ$C).
(Even at liquid nitrogen temperatures, this requirement goes down by less
 than a factor of~$5$.)
Physics has proved this principle under a variety of different assumptions
\cite{landauer61,PhysRevA.61.062314,ladyman2007connection,PhysRevE.79.031105,lambson2011exp},
and it has also been observed experimentally
\cite{Landauer-Nature-2012}.

Most CPUs discard many bits of information per clock cycle, as much as
one per gate; for example, an AND gate with output 0 or an OR gate with output 1 ``forgets'' the exact values of its inputs.
To see how this relates to Landauer's principle,
consider the state-of-the-art 15-core
Intel Xeon E7-4890 v2 2.8GHz CPU.  In a 4-processor configuration,
it achieves $1.2 \cdot 10^{12}$ computations per second at 620 watts,%
\footnote{We follow Koomey et al.'s \cite{Koomey-Berard-Sanchez-Wong-2011}
  definitions, using
  \href{http://www.cisco.com/c/dam/en/us/products/collateral/servers-unified-computing/ucs-c460-m4-rack-server/c460m4-specint-rate-base.pdf}{Cisco's measured SPECint\_rate\_base2006 of $2{,}320$}
  to estimate millions of computations per second (MCPS).}
for a ratio of $7.4 \cdot 10^{15}$ computations per kilowatt hour.
At the pessimistic extreme,
if every one of the $4.3 \cdot 10^9$ transistors discards a bit,
then the product $3.2 \cdot 10^{25}$ is only three orders of magnitude greater than
 Landauer limit.  If CPUs continue to double in energy efficiency
every 1.57 years, this gap will close in less than 18 years.
At the more optimistic extreme, if a 64-bit computation discards only 64 bits
(to overwrite one register), the gap will close within 59 years.
The truth is probably somewhere in between these extremes.



\paragraph{Reversible computing.}
The only way to circumvent the Landauer limit is to do logically
\emph{reversible} computations, whose inputs can be reconstructed
from their outputs, using physically \emph{adiabatic} circuits.
According to current knowledge, such computations have no classical
fundamental limitations on energy consumption.
General-purpose CPUs with adiabatic circuits were constructed by Frank and
Knight at MIT \cite{Frank99}. The design of reversible computers is still being
actively studied, with papers on designs for adders \cite{adder}, multipliers \cite{mult}, ALUs \cite{morrison2011design}, clocks \cite{clock}, and processors \cite{BobISA}
being published within the last five years.
AMD's CPUs since Oct.~2012 (Piledriver) use ``resonant clock mesh technology''
(essentially, an adiabatic clock circuit) to reduce overall energy consumption
by 24\% \cite{cyclos}. Thus the ideas from reversible computing are already
creating energy savings today.

But what can be done by reversible computation?
Reversible computation is an old idea, with reversible Turing machines being proved universal by Lecerf in 1963 \cite{lecerf63} and ten years later by Bennett
\cite{bennett73}.  Early complexity results showed that
any computation can be made reversible, but with a quadratic space overhead
\cite{space} or an exponential time overhead \cite{time,time-space00},
in particular models of computation.
More recent results give a trade-off with subquadratic space and subexponential
time \cite{BTV01}.  These general transforms are too expensive; in particular,
in a bounded-space system, consuming extra space to make computations
reversible is just delaying the inevitable destruction of bits.

The relationship between thermodynamics and information theory is described by Zurek \cite{Zurek89}. In a series of papers, Li, Tromp, and Vitanyi discuss irreversible operations as a useful metric for energy dissipation in computers and study the trade-off between time, space, and irreversible operations. An energy cost based on Kolmogorov complexity \cite{Li97anintroduction}, a precise but uncomputable measure of the information content of a string, is introduced in \cite{Li92theoryof} and further explored in \cite{bennett1998information, Vitanyi05, li1996reversibility}. These papers study algorithms for Bennett's pebble game as well as simulating Turing machines; however, they still focus on universal results, eshrew RAM models, and analyze problems more from a complexity than an algorithms perspective.

Irreversibility is just one source of energy consumption in current chips, and
several other models of computation attempt to capture them individually:
switching energy of VLSI circuits \cite{Kissin91}, dynamic and leakage power
loss in CMOS circuits \cite{Korthikanti10,Korthikanti11}, and I/O or memory
access \cite{jain2005}.
Albers \cite{Albers2010} surveys many algorithmic techniques for reducing
energy consumption of current computers, including techniques like sleep states
and power-down mechanisms, dynamic speed scaling, temperature management, and
energy-minimizing scheduling.
Ultimately, however, we believe that irreversibility will become a critical
energy cost shaping the future of computing, and a topic now ripe for
algorithmic analysis.

\paragraph{Our results.}
This paper is the first to perform a thorough algorithmic study of partially
reversible computing, and to analyze realistic time/space/energy trade-offs.
We define the (Landauer/irreversibility) energy cost, and use it to explore
reversible computing in a novel manner.
Although there are many other sources of energy inefficiency in a computer
we believe the Landauer energy cost is a fundamental and useful measure.
A key perspective shift from most of the reversible computing literature
(except \cite{li1996reversibility}) is that we allow algorithms to destroy
bits, and measure the number of destroyed bits as the energy cost.
This approach enables the unification of classic time/space measures
with a new energy measure.
In particular, it enables us to require algorithms to properly clean up all
additional space by the end of their execution, and data structures to be
properly charged for their total space allocation.

We introduce three basic models for analyzing the energy cost of word-level
operations, similar to the standard word models used in most algorithms today:
the word RAM, the more general transdichotomous RAM, and the realistically
grounded circuit RAM.  Our models allow arbitrary computation to be performed,
but define a spectrum of ``irreversibility'', from reversible (free)
computation to completely destructive (expensive) computation.
On top of these basic models (akin to assembly language), we build a high-level
pseudocode for easy algorithm specification, by showing how to implement many
of the familiar high-level programming structures, as well as some new
structures,
with zero energy overhead and only constant-factor overheads in time and space:
\begin{enumerate*}
\item \textbf{Control logic:} If/then/else, for/while loops, jumps, function calls, stack-allocated variables.
\item \textbf{Memory allocation:} Dynamic allocation and deallocation of
  fixed-size or variable-size blocks, in particular implementing
  pointer-machine algorithms.
\item \textbf{Garbage collection:} Reference-counting and mark-and-sweep
  algorithms for finding no-longer-used memory blocks for automatic deallocation.
\item \textbf{Logging and unrolling:} Specific to energy-efficient computation,
  we describe a new programming-language feature that makes it easy to turn
  energy into space overhead, and later remove that space overhead by playing it backwards.
\end{enumerate*}

These models open up an entire research field, which we call
\emph{energy-efficient algorithms}, to find the minimum energy required
to solve a desired computational problem within given time and space bounds.
We launch this field with several initial results about classic algorithmic
problems, first analyzing the energy cost of existing algorithms, and
then modifying or designing new algorithms to reduce the energy cost
without significantly increasing the time and space costs.
Table~\ref{algorithms table} summarizes these results.

Although there are many practical papers about minimizing energy in
computation (favoring instructions that use somewhat less energy than others),
the algorithms community has not made it a standard measure to complement time
and space because, without the idea of reversibility,
energy is simply within a constant-factor of time.
By contrast, in our model, the energy cost can be anywhere between $0$
(for reversible computation) and $t \cdot w$ where $t$ is the running time
(number of word operations) and $w$ is the number of bits in a word.

\def\rowheading#1{\multicolumn{5}{l}{\cellcolor{black}\textcolor{white}{\textbf{#1}}}}
\def\columnheading#1{\multicolumn{1}{c}{\cellcolor{blue}\textcolor{white}{\textbf{#1}}}}
\def\unheading#1{\rm #1}

\begin{table}[t]
\centering
\tabcolsep=0.8\tabcolsep
\begin{tabular}{l|l|l|l|c}
\columnheading{Primitive} & \columnheading{Time} & \columnheading{Space in Log }& \columnheading{Energy}& \columnheading{Thm.}  \\ \hline
\columnheading{} & \columnheading{\unheading (ops)}  & \columnheading{\unheading (bits)} & \columnheading{\unheading (bits)} & \columnheading{} \\ \hline
\rowheading{Control Logic} \\ \hline
Paired Jump & $\Theta(1)$ & $1$ & 0 & \ref{thm:jump}\\ \hline
Variable Jump & $\Theta(1)$ & $1+w$ & 0 & \ref{thm:jump}\\ \hline
Protected If & $\Theta(1)$ & 0 & 0 & \ref{thm:if}\\ \hline
General If & $\Theta(1)$ & $1$ & 0 & \ref{thm:if}\\ \hline
Simple For loop & $\Theta(l)$ & 0 & 0 & \ref{thm:simplefor}\\ \hline
Protected For loop & $\Theta(l)$ & 0 & 0 & \ref{thm:protectedfor}\\ \hline
General For loop & $\Theta(l)$ & $\lg{l}$ & 0 & \ref{thm:generalfor}\\ \hline
Function call & $\Theta(1)$ & 0 & 0 & \ref{thm:fxncall}\\ \hline
\rowheading{Memory Management} \\ \hline
Free lists & $\Theta(N)$ & $\Theta(wN)$ & 0 & \ref{thm:freelist}\\ \hline
Reference Counting & $\Theta(N)$ & $\Theta(wN)$ & 0 & \ref{thm:refcount}\\ \hline
Mark \& Sweep & $\Theta(N)$ & $\Theta(wN)$ & 0 & \ref{thm:marksweep}\\ 

\end{tabular}
\caption{Summary of our reversible primitives analyses and results including control logic, memory management, and garbage collection. In this table, $w$ is the word size, $l$ is the number of loop iteration, and $N$ represents number of memory objects.}
\label{tab:prim-summary}

\end{table}

\begin{table*}[th!]
\centering
\begin{tabular}{l|l|l|l|c}
\rowcolor{blue}
\columnheading{Algorithm} & \columnheading{Time} & \columnheading{Space \unheading (words)}& \columnheading{Energy \unheading (bits)}& \columnheading{Thm.}  \\ \hline
\rowheading{Graph Algorithms} \\ \hline
Breadth-first Search & $\Theta(V+E)$ & $\Theta(V+E)$ & $\Theta(wV +E)$ & \ref{thm:bfs}\\ \hline
Reversible BFS\cite{Frank99} & $\Theta(V+E)$ & $\Theta(V+E)$ & 0 & \ref{thm:bfs-reversible}\\ \hline
Bellman-Ford & $\Theta(VE)$ & $\Theta(V)$ & $\Theta(VEw)$ & \ref{thm:bellman-ford}\\ \hline
Reversible Bellman-Ford & $\Theta(VE)$ & $\Theta(VE)$ & $0$ & \ref{thm:bellman-ford-reversible}\\ \hline
Floyd-Warshall & $\Theta(V^3)$ & $\Theta(V^2)$ & $\Theta(wV^3)$ & \ref{thm:floyd-warshall}\\ \hline
Reversible Floyd-Warshall \cite{Frank99} & $\Theta(V^3)$ & $\Theta(V^3)$ & $0$ & \ref{thm:floyd-warshall-reversible}\\ \hline
Matrix APSP & $\Theta(V^3 \lg V)$ & $\Theta(V^2)$ & $\Theta(wV^3\lg V)$ & \ref{thm:repeated-APSP} \\ \hline
Reversible Matrix APSP \cite{Frank99} & $\Theta(V^3 \lg V)$ & $\Theta(V^2 \lg V)$ & $0$ & \ref{thm:repeated-APSP-reversible} \\ \hline
Semi-reversible Matrix APSP & $\Theta(V^3 \lg V)$ & $\Theta(V^2)$ & $\Theta(wV^2\lg V)$ & \ref{thm:repeated-APSP-reversible} \\ \hline
\rowheading{Data Structures} \\ \hline
Standard AVL Trees (build) & $O(n \lg n)$ & $O(n)$ &  $O(w\cdot n \lg n)$ \\
\phantom{Standard AVL Trees} (search) & $O(\lg n)$ & $O(1)$ & $O(\lg n)$ & \ref{thm:avl-trees-search}\\
\phantom{Standard AVL Trees} (insert) &  $O(\lg n)$ & $O(1)$ &  $O(w \lg n)$ & \ref{thm:avl-trees-insert}\\
\phantom{Standard AVL Trees} ($k$ deletes)  &   $O(k \lg n)$ & $O(1)$ &  $O(w \lg n)$ & \ref{thm:avl-trees-delete}\\ \hline
Reversible AVL Trees (build) & $O(n \lg n)$ & $O(n)$ &  $0$  \\
\phantom{Reversible AVL Trees} (search) & $O(\lg n)$ & $O(1)$ & $0$ & \ref{thm:avl-trees-search-reversible}\\
\phantom{Reversible AVL Trees} (insert) &  $O(\lg n)$ & $O(1)$ &  $0$ & \ref{thm:avl-trees-insert-reversible} \\
\phantom{Reversible AVL Trees} ($k$ deletes)  &   $O(k \lg n)$ & $O(k)$ &  $0$ & \ref{thm:avl-trees-delete-reversible} \\ \hline
Standard Binary Heap
(insert) & $O(\lg n)$ & $O(1)$ & $O(\lg n)$ & \ref{thm:binary-heap-insert}\\
\phantom{Standard Binary Heap} (delete max) & $O(\lg n)$ & $O(\lg n)$ & $O( w \lg n)$ & \ref{thm:binary-heap-delete} \\ \hline
Reversible Binary Heap
(insert) & $O(\lg n)$ & $O(1)$ & $0$ & \ref{thm:binary-heap-insert}\\
\phantom{Reversible Binary Heap} (delete max) & $O(\lg n)$ & $O(\lg n)$ & $0$ & \ref{thm:binary-heap-delete-reversible} \\ \hline
Dynamic Array (build) & $O(n)$ & $O(n)$ &  $0$  \\
\phantom{Dynamic Array} (query) & $O(1)$ & $O(1)$ & $0$ & \ref{thm:dynamic-array}\\
\phantom{Dynamic Array} (add) &  $O(1)$ & $O(1)$ &  $0$ & \ref{thm:dynamic-array} \\
\phantom{Dynamic Array} (delete)  &   $O(1)$ & $O(1)$ &  $0$ & \ref{thm:dynamic-array} \\
\end{tabular}
\caption{Summary of our algorithmic analyses and results. In this table, $n$ is the problem size or number of elements in the data-structure, $w$ is the word size, $\lg$ is $\log_2$, and in graph algorithms, $V$ is the number of vertices, and $E$ is the number of edges.}
\label{tab:algo}
\label{algorithms table}
\end{table*}

\paragraph{Consequences.}
Reducing the energy consumption of many computations by several orders of
magnitude ($n$) will have tremendous impact on practice.
Computer servers alone constitute 23--31 gigawatts of power consumption,
which translates to \$14--18 billion annually and 1.1--1.5\% of
worldwide electricity use \cite{Koomey-2011};
there are roughly 50 times as many PCs with an annual growth rate of 12\%
\cite{Forrester}; and there are about as many smartphones as PCs
\cite{emarketer}.
Improved energy efficiency would save both environmental impact and money.
Reducing energy consumption would also improve the longevity of batteries in
portable devices (laptops, phones, watches, etc.), or enable the
use of smaller and lighter batteries for similar performance.
Perhaps most interestingly, lower energy consumption would lead to faster CPUs,
as cooling is the main bottleneck in increasing clock speeds; reducing the
energy consumption by a factor of $\alpha$, we expect to be able to run the
CPU roughly $\alpha$ times faster.  For example, the world record for
CPU clock speed of 8.429 GHz was set by AMD with liquid nitrogen cooling
\cite{techcrunch}.

Our approach is ambitious in that it requires rethinking both software
(algorithms) and hardware.  Our belief is that building a rich algorithmic
theory for (partially) reversible computation, and showing the orders of
magnitude in possible energy reduction for important problems,
will prove to hardware makers that reversibility is a lucrative feature
worth exploring intensely, even before it becomes inevitable by hitting the Landauer Limit.

\paragraph{Guide.}
This paper has several sections and does not necessarily need to be read in order or in full, depending on the reader's interest. We recommend reading Sections~\ref{sec:WordRAM} and~\ref{sec:Pseudocode} before continuing onto later parts of the paper, to set up the model which is used extensively in the rest of the paper. The remainder of Section~\ref{sec:EnergyModels} further explores our energy models and useful variations. The remaining sections of the paper can be read in any preferred order.  Parts of Sections~\ref{sec:ReversiblePrimitives}--\ref{sec:Algorithms} use results from previous sections, but these should remain understandable without having seen the prior proofs. Section~\ref{sec:ReversiblePrimitives} constructs and analyzes basic control logic and memory management, to enable high-level pseudocode for algorithm specification. Section~\ref{sec:EnergyReductionTechniques} provides some general techniques we have developed for constructing (semi-)reversible algorithms. Sections~\ref{sec:DataStructures} and \ref{sec:Algorithms} analyze several classic algorithms and data structures, and construct new algorithms and data structures that are more energy efficient.
Section~\ref{sec:future} poses open problems.

\section{Energy Models}
\label{sec:EnergyModels}
In the following sections we present three different models of computation which define an energy complexity that attempts to capture the energy loss from Landauer's Principle. We begin with a circuit model due to its intuitiveness and similarity to early work done on reversible logic and computation. We then build up RAM models which bear far more similarity to those used for the analysis of algorithms.

\subsection{Energy Circuit Model}
At the lowest level we will consider logical gates. Every gate is a Boolean function $g:x\rightarrow y$. The energy cost of a gate is defined as the log of the size ratio of the input space, $X$, to the output space, $Y = g(X)$. Thus, energy $E = \lg \left (\frac{X}{Y} \right )$, whose units are bits. The energy cost cannot be negative because a given input cannot map to more than one output. Here we forbid randomized computation. Alternatively, one could allow the creation of $b$ random bits at an energy cost of $b$. Also, the energy cost is zero exactly when the function is bijective in which case we call the gate \emph{reversible}.
\iffull
A \emph{circuit} is a directed acyclic graph of gates with input nodes of in-degree zero whose outputs are the bits corresponding to the input for a problem and whose output is the collection of outputs from the gates which do not go into another gate. The energy cost of a circuit is defined to be the sum of the energy costs of every gate. If the energy cost of a circuit is zero we call it \emph{reversible}.

\begin{lemma}
A reversible circuit must compute a bijective function.
\end{lemma}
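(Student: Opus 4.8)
The plan is to reduce the statement to the single-gate case and then propagate bijectivity through the circuit's underlying DAG.

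First I would note that the energy cost of a circuit is, by definition, the sum of the energy costs of its gates, and each such cost $\lg\bigl(|X_g|/|g(X_g)|\bigr)$ is nonnegative because $g(X_g)$ cannot be larger than $X_g$. Hence a reversible circuit (total cost $0$) must have every gate of cost $0$, and, as already observed, a gate has cost $0$ exactly when it is a bijection; in particular each such gate $g\colon\{0,1\}^a\to\{0,1\}^a$ admits an inverse $g^{-1}$ that is again a Boolean function (gate) on the same number of wires. So it suffices to show that a DAG of bijective gates computes a bijection.

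For that I would induct on the number of gates, processed in a fixed topological order $g_1,\dots,g_t$. The invariant I would maintain after the first $k$ gates have acted is: the map sending an input assignment in $\{0,1\}^n$ to the assignment of values on the set $W_k$ of currently ``live'' wires — those produced by an input node or by one of $g_1,\dots,g_k$ and still feeding a later gate or serving as a circuit output — is a bijection onto the full cube $\{0,1\}^{|W_k|}$. The base case $k=0$ is the identity on the $n$ input wires. In the inductive step, $g_{k+1}$ reads a tuple of live wires and replaces it by its output wires via the bijection $g_{k+1}$, leaving all other live wires untouched; since $g_{k+1}$ is a bijection on equally many wires and the rest of the map is unchanged, the new map is again a bijection onto a cube of the same dimension. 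Taking $k=t$, the live wires are precisely the circuit's outputs, so the circuit computes a bijection (and, as a byproduct, the output width equals the input width $n$).

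The step I expect to be the main obstacle is the wire bookkeeping when a gate's output fans out to several later gates (or to a later gate and a circuit output simultaneously): such a wire must stay live until its last consumer, and one must check that feeding its value to one gate does not consume it for the others. This is really a matter of pinning down the circuit model — whether fan-out is a primitive or must be realized by an explicit duplication gate, which in the reversible setting is a controlled-NOT acting on a fresh wire — and once that is fixed, the induction goes through with $W_k$ recording the residual fan-out degree of each wire. If the model forbids gate-to-gate fan-out outright, there is a shorter route: peel off the last gate $g_t$ in topological order; all of its output wires are circuit outputs, so deleting it yields a smaller reversible circuit whose outputs include the former inputs of $g_t$, and the original function is the composition of that (inductively bijective) function with $g_t$, hence bijective.
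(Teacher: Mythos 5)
Your proof is correct and takes essentially the same route as the paper's: zero total energy forces every gate to have zero cost and hence be bijective, and the circuit's function is then a composition of bijections. The paper states this in two sentences, while you additionally formalize the composition step by induction over a topological order and flag the fan-out bookkeeping---a careful elaboration of the same argument rather than a different approach.
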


\begin{proof}
For a circuit to be reversible, all of its gates must be reversible, and thus bijective. The composition of bijective functions is also bijective.
\end{proof}

We now have a framework for understanding the energy complexity of circuits. Although we imagine there are very interesting things that can be said about the energy complexity of circuits, we primarily want this infrastructure to help justify pieces of further models. For that purpose, general circuits are too powerful and we wish to specify a class of circuits which are simple enough they may reasonably make up the logic in a computer. Normally one restricts to AND, OR, and NOT gates; however, this would make it impossible to construct most circuits reversibly. Luckily, there are small, universal, reversible logic gates. Thus, we will also permit the use of Fredkin gates, also known as controlled-swap gates, and Toffoli gates, also known as controlled-controlled-not gates. Further, we allow the (reversible) zero-input gates \textsc{true} and \textsc{false} whose outputs are 1 and 0 respectively; as well as \textsc{endtrue} and \textsc{endfalse} whose inputs are restricted to be 1 and 0 respectively and whose output is the null set. Now for a given function $f$, we can define a new circuit-size complexity as the minimum number of \textsc{and}, \textsc{or}, \textsc{not}, \textsc{true}, \textsc{false}, \textsc{endtrue}, \textsc{endfalse} , Toffoli, and Fredkin gates needed to create a circuit which computes $f$. Similarly each circuit has a circuit-depth, corresponding to the longest path in the circuit, and thus we can similarly define a circuit-depth complexity for $f$. 

Fredkin and Toffoli give a universality result, showing that any reversible function can be computed by reversible circuits, given extra or ancillary bits. Further, only a constant-factor increase in circuit-depth and complexity is needed over a similar circuit constructed from \textsc{and}, \textsc{or}, and \textsc{not} gates \cite{conservative82}. In a recent paper, Aaronson, Grier, and Schaeffer go further classifying all reversible logic gates on bits. They also prove any reversible function can be implemented with only a constant number of ancillary bits \cite{AaronsonGS15}. To give some intuition, Figure~\ref{ToffoliLogic} shows how Toffoli gates can be used to create other logical operations. 

\begin{figure}[!ht]
  \centering
  \begin{subfigure}[b]{0.19\textwidth}
    \includegraphics[width=\textwidth]{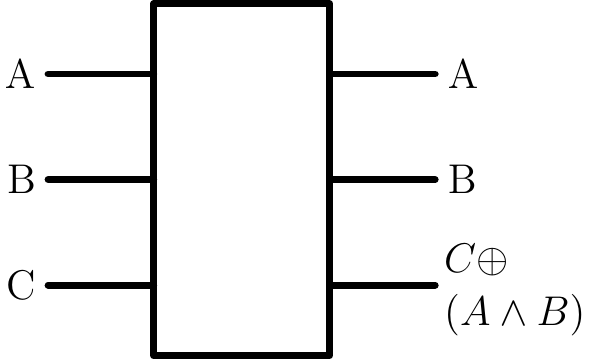}
    \caption{Toffoli gate}
    \label{Toffoli}
  \end{subfigure}
  \hfill
  \begin{subfigure}[b]{0.19\textwidth}
    \includegraphics[width=\textwidth]{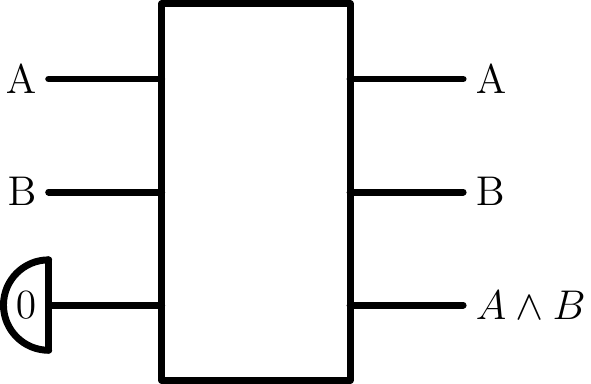}
    \caption{\proc{And}}
    \label{And}
  \end{subfigure}
  \hfill
  \begin{subfigure}[b]{0.19\textwidth}
    \includegraphics[width=\textwidth]{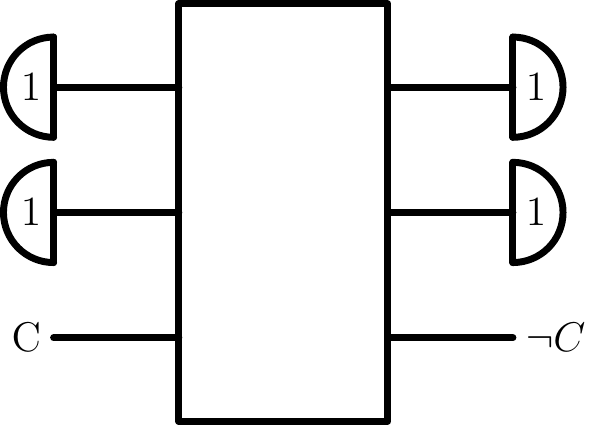}
    \caption{\proc{Not}}
    \label{Not}
  \end{subfigure}
  \hfill
  \centering
  \begin{subfigure}[b]{0.19\textwidth}
    \includegraphics[width=\textwidth]{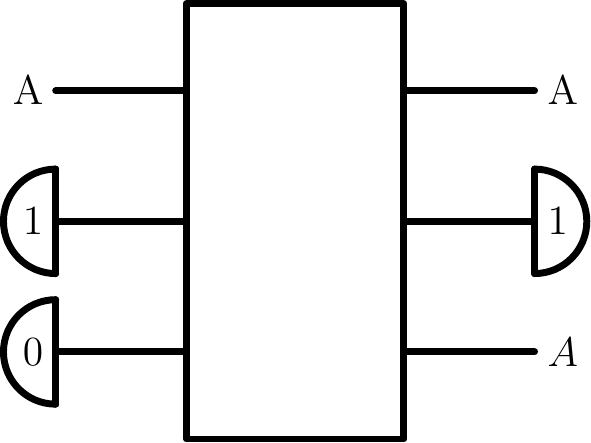}
    \caption{\proc{Copy}}
    \label{Copy}
  \end{subfigure}
  \hfill
  \begin{subfigure}[b]{0.19\textwidth}
    \includegraphics[width=\textwidth]{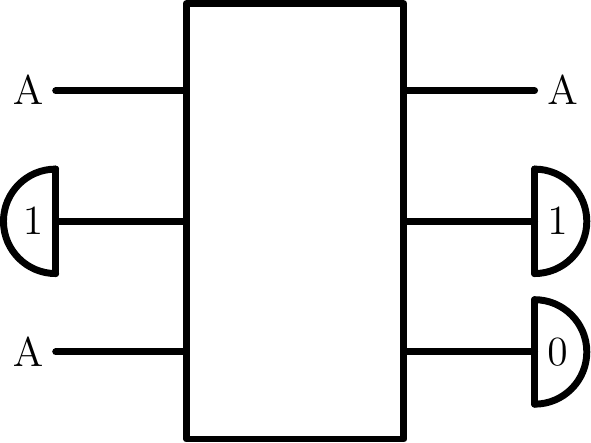}
    \caption{\proc{DeleteCopy}}
    \label{DeleteCopy}
  \end{subfigure}
  \caption{Standard logic implemented using a Toffoli gate.}
  \label{ToffoliLogic}
\end{figure}

Because Fredkin and Toffoli gates can similarly be simulated by a constant number of AND, OR, and NOT gates, this should not impact circuit complexity beyond constant-factors; however, it is necessary to be able to capture our notion of energy complexity in a useful way.
\fi

\subsection{Energy Word RAM Model}
\label{sec:WordRAM}
The Energy Word RAM model allows any contiguous segment of memory of size $w$ to be accessed in constant time and defines a fixed set of operations that can take in $O(1)$ word sized inputs in constant time. We also assume memory allocation is handled in a reversible manner. This will become a more reasonable assumption later, when we show linked-lists and stacks can be implemented reversibly. The program and operations have the following restrictions. First, we restrict ourselves to the operations typically found in high-level languages as well as their reversible analogues. Second, the operation's energy costs should be calculated based off of what can be constructed in the circuit model. Third, all reversible operations must come paired with their inverse operation. Finally, all Energy Word RAM programs must return the machine to its original state, with the exception of a copy of the output living somewhere in memory. This can be done simply but expensively by irreversibly zeroing out every bit and paying the associated energy cost.

The reversible operations we allow include in-place addition and subtraction
(e.g., $a \mathrel+= b$), increment and decrement (e.g., $a \mathrel+= 1$), swapping two
variables, testing for equality or less-than relation, copying a variable
into an initially empty variable
$$\left (\proc{Copy}(a, \underline{b}) \equiv b \mathrel+= a \right )$$ and destroying a known
copy of a variable $$\left (\proc{DestroyCopy}(a,b) \equiv b \mathrel-= a \right )$$ We have introduced here a useful notation, that of underlining variables whose values are empty, which shall serve us in writing pseudocode as well. The irreversible operations we allow include overwriting one variable
with another, and computing the bitwise \proc{And} or \proc{Or} of two variables.

\iffull
\subsubsection{Irreversible Word Operations}
\begin{itemize*}
\item $\proc{And}(a,b)$: Returns bitwise And. Costs $1$ unit of energy.
\item $\proc{Or}(a,b)$: Returns bitwise Or. Costs $1$ unit of energy.
\item $\proc{Set}(a,b)$: Sets the value of $a$ equal to the value of $b$. Costs $w$ energy.
\end{itemize*}

\subsubsection{Reversible Word Operations}
\begin{itemize*}
\item $\proc{Add}(a,b)$: Returns $a+b$ and $b$. Inverted by $\proc{Sub}(a,b)$
\item $\proc{Sub}(a,b)$: Returns $a-b$ and $b$. Inverted by $\proc{Add}(a,b)$
\item $\proc{Inc}(a)$: Returns $a+1$. Inverted by $\proc{Dec}(a)$
\item $\proc{Dec}(a)$: Returns $a-1$. Inverted by $\proc{Inc}(a)$
\item $\proc{Not}(a)$: Assumes $a$ is 0 or 1. Returns the logical negation of $a$. Inverted by $\proc{Not}(a)$.
\item $\proc{Swap}(a,b)$: Swaps the values stored in the memory locations of $a$ and $b$. Inverted by $\proc{Swap}(a,b)$.
\item $\proc{LessThanOrEqualTo}(a,b)$: Returns the inputs and 1 if $a\leq b$ and 0 otherwise. Note, if the extra output bit is deleted after it is used, it will cost $1$ energy.
\item $\proc{InverseLessThanOrEqualTo}(a,b,c)$: Returns $a$, $b$, and $c-1$ if $a\leq b$. Returns $a$, $b$, and $c$ if $a>b$.
\item $\proc{Copy}(a, \underline{b})$: Returns $a$ and $b+a$. If $b$ is known to be zero, this results in a successful copy. Inverted by $\proc{DestroyCopy}(a,b)$.
\item $\proc{DestroyCopy}(a,b)$: Returns $a$ and $b-a$. If $b$ is known to be equal to $a$, then $b$ is now zero. Inverted by $\proc{Copy}(a,b)$.
\end{itemize*}
\fi

In this model, we intend that our lowest level pseudocode correspond to an assembly-like language. For simplicity we will continue to work with variables and locations in memory as though they are all stored in RAM, rather than deal with registers, paging, and other complications that may arise depending on the computer architecture. At this level we also explicitly number every line of our program and grant the code access to the program counter, PC, which is the location in memory of the current instruction. At every instruction the PC is incremented, but it can also be manipulated manually, allowing jumps among other operations. It is very easy to make code irreversible by manipulating the PC, as this is implicitly adding control logic to the program. The instruction set we will be using in this paper is the same as the one with which we defined our Word RAM model. For an instruction set for a reversible computer that has been built see Appendix B of Frank's Thesis \cite{Frank99} or \cite{BobISA}.

\subsection{Energy Transdichotomous RAM Model}
The Energy Transdichotomous RAM model is computationally the most powerful and flexible. As with the Word RAM model, we allow access to memory segments of size $w$ in constant time and assume memory allocation is done reversibly. Generally we will assume that $w=\Omega(\lg n)$, making the word size capable of indexing the entire input of the problem. We also allow any operations on $O(1)$ words to be performed in constant time; however, every algorithm can only use a constant number of different operations. The energy cost of an operation is simply the log of the ratio of the input space to the output space, as in the circuit model. Note that this is a lower bound on the energy cost of the operation in the circuit model, and thus a lower bound in the Energy Word RAM model. Finally, we still need to leave the computer in its initial state, except for a copy of the output.

This model is convenient to work in because it is relatively easy to calculate the energy cost of many operations and the flexibility of choosing operations allows us to exploit information in the system without having to work out the details of how it would be implemented. For example, when dividing an integer by four would generally incur two bits of energy loss or two bits of garbage; however, if we happen to know that the number is even, there is really only a single bit of information being lost. Instead of having to worry about how to perform shifts and additions to save this bit, the Transdichotomous RAM model allows us to have a `divide by four when evenly divisible by 2' operation with the restriction that it only takes even inputs.

We now develop some conventions for writing programs in the Transdichotomous RAM Model. All lines are of the form $\id{TUPLE} = \id{TUPLE}$. Both tuples must contain the same number of elements, and the number of elements must be $O(1)$. The left tuple is a list of all of the values in memory which are used in the computation being performed on this line, including those simply being overwritten. The right tuple contains expressions representing the values that will be in the corresponding variables on the left. These expressions must contain no more than $O(1)$ constant time operations. One interesting convention about this language is every variable implicitly serves two purposes depending on its location. On the left, all variables refer to the memory location where they are stored, and on the right they refer to the values being represented at those memory locations.

As we did above, here we shall annotate variables whose value is known to be zero (often new, unassigned variables) with an underline. This information is often critical to the energy cost of an expression. For example, $(a, b, c) = (a, b, a+b)$ would cost $w$ units of energy because we are erasing every bit in $c$ before replacing it with the value $a+b$. However, $(a,b,\ul{c}) = (a,b,a+b)$ has no energy cost because the input has the value of $c$ assumed to be zero, thus reducing the input space by a factor of $2^w$ and making the number of inputs and outputs the same.

The following are some examples of common operations written in the format. All operations are assumed to be integer operations with reasonable overflow and rounding conventions. 
The following examples cost zero energy:
\begin{itemize*}
\item \proc{Copy}: $(a, \ul{b}) = (a,a)$
\item \proc{DestroyCopy}: $(a, b) = (a, b-a)$
\item \proc{Add}: $(a, b) = (a+b, b)$
\iffull
\item \proc{Add} a constant: $(a)=(a+5)$
\item Quotient and remainder of two numbers: $(a, b, \ul{c}) = (a/b, a\bmod{b}, b)$
\item CNOT (aka reversible XOR): $(a, b) = (a, a \oplus b)$
\item AND: $(a, b, \ul{c}) = (a, b, a \wedge b)$
\item Multiplication: $(a, b, \ul{c}) = (a, a*b >> w, a*b \bmod{2^w})$
\fi
\item \proc{LessThan}: $(a, b, \ul{c}) =(a,b,a<b)$ 
\end{itemize*}
\iffull
The following examples cost $w$ units of energy:
\begin{itemize}
\item \proc{Delete}: $(a) = (0)$
\item \proc{AND}: $(a,b) = (a \wedge b, 0)$
\item \proc{XOR}: $(a,b) = (a \oplus b, 0)$
\item \proc{Add}: $(a,b) = (a + b, 0)$
\end{itemize}
Here are some examples of common operations that cost between $0$ and $w$ units of energy:
\begin{itemize}
\item \proc{AND}: $(a,b) = (a \wedge b, b)$. This has an energy cost of $1$
\item Right shift by 3: $(a) = (a << 3)$. This has an energy cost of $3$.
\item Remainder: $(a, b) = (a \bmod{b}, b)$. This has an energy cost of $w - \lg b$
\end{itemize}
\fi
\iffull
We introduce another convention to avoid writing down every variable
that is not changing:
If a variable appears in the right-hand side of a line and not the left it is short-hand for that value being assigned to the same memory location it originally came from. We will also omit parenthesis if there is only one variable or expression. Thus, $(a,b,c) = (a+b+c,b,c)$ becomes $a=a+b+c$ and $(a, b) = (b, b)$ becomes $a = b$. 
\fi
\iffull
One may initially worry that this will obscure the energy cost of the operation; however, it actually makes no difference. If a variable appears as a single element of the input and output tuple, it contributes the same factor to the input space as the output space. These will cancel when calculating the entropy of the operation and thus omitting variables which are assigned back to themselves will not change the calculated energy cost. We encourage the use of this convention to make the code easier to understand.
\fi

\subsection{High-level Pseudocode}
\label{sec:Pseudocode}
Although the previous section provides a nice, clean way to analyze the energy, space, and time complexity of an algorithm; we may want a more concise and C-like language. Past research on reversible programming languages has focused on fully reversible programming languages and architectures. The first high-level reversible programming languages developed were Janus \cite{lutz1982janus}\cite{yokoyama2010reversible} and R \cite{Frank99}. The first reversible architecture, Pendulum, was developed by Vieri \cite{vieri1999reversible}\cite{vieri1998fully}. Along with Pendulum, Vieri introduced a reversible low-level instruction set, PISA, which is used as a basic reversible instruction set for many future works. Most recently, this architecture has been further improved with the development of Bob \cite{BobISA} using a slightly modified version of PISA known as BobISA, providing more efficient branch handling and address calculation. Axelsen \cite{axelsen2011clean} presented the first compilation techniques to translate high-level Janus to low-level PISA, two independently developed reversible languages, and showed that his techniques can be extended for use in any high-level reversible language.

We modeled our pseudocode off of these previous high and low level reversible languages while also adding a few new commands to allow for partial reversibility. We now allow lines of the form $\id{VARIABLE}=\id{EXPRESSION}$ as well as for loops, while loops, if/else statements,  and subroutine calls. We also introduce log blocks and unroll statements in Section~\ref{sec:LoggingAndUnrolling}. On lines where we are assigning a variable, we assume that every input in the expression will remain unchanged in its memory location after the computer performs the operation and that the variable on the left-hand side will have its value replaced by the value of the expression. If this is a reversible operation, the variable will merely be changed as appropriate; if it is an irreversible operation, then the variable will be changed and an additional energy cost will be incurred based on the model being used. 

\iffull
Figures~\ref{example 1} and~\ref{example 2} give some simple examples of
equivalent code in the three different levels of pseudocode conventions we've developed (high, intermediate, low). The high level is our C-like language. The intermediate language converts high level control logic to jumps and labels. The low level breaks it down further to an assembly-like language. Future sections
will use one or more conventions as needed for clarity.
\fi

\ifabstract
Figure~\ref{example 1} gives some simple examples of
equivalent code in the three different levels of pseudocode conventions we've developed (high, intermediate, low). The high level is our C-like language. The intermediate language converts high level control logic to jumps and labels. The low level breaks it down further to an assembly-like language. Future sections
will use one or more conventions as needed for clarity.
\fi

\begin{figure}
\begin{minipage}{0.45\linewidth}

\begin{pcode}
$x$ = $x+y+z$ \lasttab \textbf{high}
\end{pcode}
\begin{pcode}
$(x, y, z)$ = $(x+y+z, x, z)$ \lasttab \textbf{intermediate}
\end{pcode}
\begin{pcode}
101  \id{\ul{tempx}} = $x$ \lasttab \textbf{low}
102  $x$ += $y$
103  $y$ $-$= $x$
104  $x$ += $z$
105  $y$ += \id{tempx}
106  $y$ += \id{tempx}
107  \id{tempx} $-$= $y$
\end{pcode}
\caption{Simple example of code in high, intermediate, and low-level pseudocode.}
\label{example 1}
\end{minipage}
\hfil\hfil
\iffull
\begin{minipage}{0.45\linewidth}
\begin{pcode}
\keyw{log} \underline{x} = 5
\keyw{log} x = 10
\keyw{unroll}

$\ul{x} = 5$
$(\ul{\id{tempx}}, x) = (x, 0)$
$\ul{x} = 10$
$x = x-10$
$(\id{tempx}, \ul{x}) = (0,\id{tempx})$
$x = x-5$

101  $x$ += 5
102  \id{mem}[\id{lp}] += $x$
103  $x$ $-$= \id{mem}[\id{lp}]
104  \id{lp} += 1
105  $x$ += 10
106  $x$ $-$= 10
107  \id{lp} $-$= 1
108  $x$ += \id{mem}[\id{lp}]
109  \id{mem}[\id{lp}] $-$= $x$
110  $x$ $-$= 5
\end{pcode}
\caption{Simple example of logging in high, intermediate, and low-level pseudocode.}
\label{example 2}
\end{minipage}
\fi
\end{figure}

\subsubsection{Logging and Unrolling}
\label{sec:LoggingAndUnrolling}
Dealing with garbage data tends to become tedious when writing reversible computer code. For example, suppose that we were comparing two variables, $a$ and $b$, and that we wanted to use the result of this comparison to increase some counter; see Figure~\ref{logunroll-a}.

\begin{figure}
\centering
\begin{subfigure}{0.22\linewidth}
\begin{pcode}
\ul{$a$} = $x > y$
\id{counter} += $a$
\end{pcode}
\caption{garbage data not unrolled}
\label{logunroll-a}
\end{subfigure}
\hfil\hfil
\begin{subfigure}{0.22\linewidth}
\begin{pcode}
\keyw{log}: \+
\ul{$a$} = $x > y$ \-
\id{counter} += $a$
\keyw{unroll}
\end{pcode}
\caption{logged high-level}
\label{logunroll-c}
\end{subfigure}
\hfil\hfil
\begin{subfigure}{0.22\linewidth}
\begin{pcode}
\ul{$a$} = $x > y$
\id{counter} += $a$
$a$ $-$= $x > y$
\keyw{dealloc}($a$)
\end{pcode}
\caption{logged low-level / automatic unroll}
\label{logunroll-d}
\end{subfigure}
\caption{Three examples detailing the mechanics of logged code. }
\end{figure}

In a normal computer, by the function's end, $a$ would be garbage-collected automatically; however, in our reversible computer a naive garbage collection algorithm would destroy the information stored in $a$, clearing whatever value it held and costing a word of energy. Thus, the reversible algorithm programmer must handle the task of deallocating $a$ manually.

We call the process of using a series of commands to directly reverse some portion of the code \emph{unrolling}. Manually writing all such commands can be tedious and is prone to error.  To expedite the process, we introduce the high-level keywords \keyw{log} and \keyw{unroll}:

In Figure~\ref{logunroll-c}, the line {\ul{$a$} = $x > y$} is included inside the \keyw{log} indentation block, and so is to be reversed at the call to \keyw{unroll}. For much longer programs, this extra syntax can save the programmer a great deal of effort that would otherwise be spent writing reverse code. Note that the \keyw{log} and \keyw{unroll} commands only exist in the highest-level language, and are translated into their manual equivalent at compile-time. The above program, therefore, would compile to the low-level program seen in Figure~\ref{logunroll-d}.

The rules for unrolling are straightforward. Reversible commands can be unrolled simply by including their inverse commands in reverse calling order. Unrolling reversible control logic is discussed in Section~\ref{sec:control-logic}.

To allow our model to unroll semi-reversible programs, which may include irreversible commands, we introduce the \emph{log stack}, a data structure onto which the program can push extra bits of information to be used later to invert the otherwise-irreversible operations. We keep track of our position in the log stack with the log pointer, $lp$. In the Transdichotomous model, every operation must have its inverse and the process for logging that operation explicitly specified. Furthermore, we assume that this garbage is encoded as efficiently as possible and thus only requires as many bits of space as are needed to distinguish the input space from the output space. Once again, when we log lines with operations that were previously irreversible, we are implicitly defining new operations and should take appropriate precautions. In our Word RAM model, these operations, their inverses, and operations capable of interfacing with $lp$ and memory must be specified.

\subsubsection{Promise Notation}

We introduce another notational convention that will assist in writing low-energy pseudocode for the Transdichotomous RAM model. At the end of a standard line of code, one may add a comma, the keyword ``assert'', and then a claimed Boolean expression restricting the values of the involved variables. Some useful examples include: \\

\begin{pcode}
\qquad $\id{IsTrue} = 0,$ \keyw{assert} $0 \le \id{IsTrue} \le 1$
\qquad $x = x/y,$ \keyw{assert} $6 \mid x$
\end{pcode}

Here $\id{IsTrue}$ may have been the result of a comparison and is known to be either $0$ or $1$. Thus the energy cost of destroying it is only $1$ bit instead of $w$ bits. In the second example, we might know that the problem being computed has some symmetries that a compiler might not see which restrict the values $x$ can take on. Asserts allow us to implicitly define functions which have a restricted input space and thus reduce energy costs. Given the convenience of defining functions in this manner, we must be very careful that we are still using only $O(1)$ different operations in our algorithm.

\section{Reversible Primitives} \label{sec:ReversiblePrimitives}

In this section, we develop many high-level primitives commonly used
throughout algorithms, but which need special care to be done in an
energy-efficient manner. Before proceeding, we should discuss in slightly more
detail the architecture of our theoretical semi-reversible computer.
Our computer only has a single mode of operation, always incrementing
the program counter with every instruction. Reversing operations comes 
from writing the inverse operation in a later section of code, rather than
having a separate reversal mode which travels backward along the 
program counter, inverting those operations. This gives us more flexibility
in how to handle irreversible sections of code, and the manner in which
we reverse operations which are not dependent upon each other. However,
it comes at the cost that we cannot recover the value of the program 
counter. Thus this design will have to incur an energy cost of $w$ every
time the computer is reset. Because we can run many programs between
restarts, we do not consider this to be of major consequence.

\subsection{Control Logic}
\label{sec:control-logic}

Following Frank \cite{Frank99}, we can make branching logic reversible with constant space overhead using paired branching with the destination of a branch being a branch that points back. Thus, we have symmetry, and when running backward, we can just follow the branch we arrived on. However, because we are not working in a fully reversible model, there are  some caveats we must pay attention to: all reversible control logic in this section depends on all of the code within the control sequence being reversible. If this is not the case, we can make no guarantees about the correctness when irreversible operations are being performed within some control logic, especially if they are manipulating the variables the control logic depends on.

In Section~\ref{sec:WordRAM} we noted that we can do comparisons reversibly with a single bit of extra space. In this section, we look at jumps, branches, conditionals, for loops, and function calls.
\iffull
\subsubsection{Jumps and Branches}
\fi
Here we consider the most basic building blocks of control logic, alterations to the program counter in the form of jumps and branches (conditional jumps). Jumps can be performed by a reversible addition to the program counter, we use notation \keyw{goto}, \keyw{gotoifeq}, and \keyw{gotoifneq}. However, if the program counter is allowed to change, we can no longer assume every line was reached by an increment to the program counter, thus creating an irreversible situation. To deal with this, all program counter jumps must be paired with a \keyw{comefrom}, \keyw{comefromifeq}, or \keyw{comefromifneq} statement. In our pseudocode, we allow for \keyw{goto} to direct an absoloute or relative jump and note that a compiler can transform absolute to relative, as is used in most reversible architectures.

\begin{theorem}\label{thm:jump}
Jumps can be implemented reversibly with constant-factor increases in time and space and up to an extra word of space per jump.
\end{theorem}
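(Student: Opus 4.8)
The plan is to show that the paired-jump discipline described just above the theorem statement can be implemented in the Energy Word RAM model (and hence in the other two models) with only a constant-factor time blowup, a constant-factor space blowup, and at most one extra word of space per jump. I would split the argument by the three jump types: unconditional (\keyw{goto}/\keyw{comefrom}), equality-conditional (\keyw{gotoifeq}/\keyw{comefromifeq}), and inequality-conditional (\keyw{gotoifneq}/\keyw{comefromifneq}), since the first gives the ``1 bit'' space bound in the table (Paired Jump) and the conditional versions give the ``$1+w$'' bound (Variable Jump) because the condition variable must be preserved or logged.

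First I would handle the unconditional paired jump. The key observation is that a jump is just a reversible in-place addition to the program counter: at the source line we execute $\id{PC} \mathrel+= \delta$ where $\delta$ is the (relative) offset to the target, and at the target line we place a \keyw{comefrom} that is exactly the inverse, $\id{PC} \mathrel-= \delta$, so that executing the two in sequence is the identity on $\id{PC}$. The subtlety the paper flags is that, once $\id{PC}$ can change, we can no longer assume a line was reached by the usual increment; the \keyw{comefrom} resolves this by making the arrival point syntactically distinguished, so that when the block is later unrolled (run backwards) control knows to follow the branch it arrived on. I would argue this uses one extra word only in the sense of storing the offset $\delta$ as an immediate/constant in the instruction, and it costs zero energy because every operation on $\id{PC}$ is a reversible add/subtract; time is $\Theta(1)$ per jump. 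I should also note that $\delta$ can be taken to be a compile-time constant (absolute-to-relative conversion as mentioned), so in the basic \keyw{goto} case no runtime word of storage is consumed at all, matching the ``$1$'' in the summary table for Paired Jump.

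Next, for the conditional jumps, the construction is: evaluate the comparison reversibly into a fresh bit $\ul{b}$ (as established in Section~\ref{sec:WordRAM}, a comparison costs no energy and one bit of space), then conditionally add $\delta$ to $\id{PC}$ controlled on $b$ (a controlled addition, which is reversible — this is exactly the Fredkin/Toffoli-style conditional operation), and at the destination perform the mirror-image \keyw{comefromifeq}, which conditionally subtracts $\delta$ controlled on the same bit $b$. For this to invert correctly, $b$ must still be available and unchanged at the \keyw{comefrom} site, which is why this variant is charged the extra word $w$ (to carry the comparison's operands, or to log the bit for later unrolling) rather than just one bit. I would then remark that all of this rests on the standing assumption stated in the theorem's lead-in: every instruction between the \keyw{goto} and its paired \keyw{comefrom} must itself be reversible, since otherwise an irreversible op could corrupt $\id{PC}$ or $b$ and break the pairing; within that assumption, correctness is immediate by composing inverses. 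The main obstacle I anticipate is not any calculation but stating the bookkeeping precisely — in particular making rigorous what ``paired'' means when jumps nest or interleave, and confirming that the backward (unroll) semantics of a block of paired jumps really is the forward semantics of the reversed block — so I would devote most of the writeup to that invariant and treat the energy/time/space accounting as a short corollary of it.
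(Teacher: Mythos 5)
There is a genuine gap, and it is at the heart of the theorem. The central irreversibility of a jump is not the change to $\id{PC}$ itself (which, as you say, is a reversible addition); it is that the \keyw{comefrom} line can be reached in \emph{two} distinct ways at runtime --- via the jump, or via ordinary fall-through from the line above --- and these two predecessor states collapse to the same successor state. A syntactically distinguished arrival point cannot resolve this: the marker is fixed at compile time, while which path was taken is only known at runtime. Your claim that ``the \keyw{comefrom} resolves this\ldots so that when the block is later unrolled control knows to follow the branch it arrived on'' asserts exactly the conclusion that needs a mechanism. The paper's proof supplies that mechanism: at the jump, push onto the log stack a bit recording that a jump occurred (and, in general, the jump length), and at the non-jump path push a $0$; on unrolling, a \keyw{comefromif} conditioned on the popped bit decides whether to jump back. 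This logged bit is precisely the ``$1$'' in the summary table for Paired Jump --- it is runtime log storage, not the compile-time immediate $\delta$ as you suggest, so your statement that the unconditional case consumes no runtime storage contradicts the table except in the rare case of a \keyw{comefrom} that is unreachable by fall-through.

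Your decomposition also misattributes the extra word. The paper's Paired/Variable distinction is not conditional versus unconditional: it is whether the \keyw{comefrom} has a unique corresponding \keyw{goto} (jump length statically known, so only the one ``did I jump'' bit is logged) versus multiple \keyw{goto}s targeting the same \keyw{comefrom} (jump length determined at runtime, so a word holding that length must be logged as well, giving $1+w$). Preserving the comparison bit for \keyw{gotoifeq}/\keyw{gotoifneq} is a separate matter that the paper handles under conditionals (Theorem~\ref{thm:if}), not here. To repair the proposal you would need to (a) introduce the log stack explicitly, (b) show that both arrival paths deposit a distinguishing record on it, and (c) recover the correct reverse control flow from that record; the time/space accounting then follows, but without step (b) the construction is not reversible at all.
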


\begin{proof}
All jumps must be paired with \keyw{comefrom} statements. In the case of a regular \keyw{comefrom} statement, the program knows that it reached this location via a jump and thus will jump back in the reverse. However, it is rare that an unconditional jump such as this exist. In the more general case, the program must decide whether the comefrom was reached via a jump or from the line above by an increment in the program counter. To address this, we log two things upon jumping: (1) the length of the jump and (2) a bit indicating that a jump occurred. We then use a \keyw{comefromif} to check this bit upon reversing. At the corresponding reverse \keyw{comefromif} we'll pop the value off the log stack and use it to either change the program counter or not depending on whether the code jumped to that location. \iffull A general example is shown in Figure~\ref{jump}.\fi

We can make an additional optimization if a \keyw{comefrom} statement only has one corresponding \keyw{goto}. Because we know the jump location corresponding with the \keyw{comefrom}, this can be implemented reversibly by noting the jump length directly in the source code and not logging. In this case we only have a single bit of storage for logging whether the jump was taken. \iffull This is illustrated in Figure~\ref{jumppair}.\fi
\end{proof}
\iffull
We make the distinction above between two flavors of jumps. The first, paired jump, requires that \keyw{goto} and \keyw{comefrom} be in a $1:1$ pairing, meaning the jump length is pre-determined and only a single bit of storage is required. The second, variable jump, occurs when the jump length is determined is determined at run-time. This can occur when, for example, multiple jumps land at the same destination\iffull (\ref{jump})\fi. This requires pulling the destination off the log and takes an additional word of space.

\begin{figure}
\begin{pcode}
\keyw{log}: \+
  \mbox{[code1]}
  \keyw{goto} \id{label1}
  \mbox{[code2, $k$ lines long]}
  \keyw{label} \id{label1}
  \mbox{[code3]}  \-
\mbox{[code4]}
\keyw{unroll} 
\end{pcode}

\begin{pcode}
\mbox{[code1]}
\id{mem}[\id{lp}] += 1
\id{lp} += 1
\keyw{goto} $k$
\mbox{[code2]}
\id{lp} += 1 \lasttab{//adds 0 to log indicating no jump}
\mbox{[code3]}
\mbox{[code4]} \lasttab{//unroll begins}
\mbox{[reverse of code3]}
\keyw{comefromifeq}(\id{mem}[\id{lp}], $k$)
\id{lp} -= 1 \lasttab{//removes 0 bit from log}
\mbox{[reverse of code2]}
\id{lp} -= 1
\id{mem}[\id{lp}] -= 1 \lasttab{//removes 1 bit from log}
\mbox{[reverse of code1]}
\end{pcode}

\caption{The intermediate-level and compiled low-level code for a reversible jump. The jump and comefrom statements are paired uniquely so the jump distance can be placed directly in the low level code.}
\label{jumppair}
\end{figure}
\fi

\iffull
\begin{figure}
\begin{pcode}
\keyw{log}: \+
  \mbox{[code1]}
  \keyw{goto} \id{label1} \lasttab{//call this jump1}
  \mbox{[code2]}
  \keyw{goto} \id{label1} \lasttab{//call this jump2}
  \mbox{[code3]}
  \keyw{label} \id{label1}
  \mbox{[code4]} \-
\mbox{[code5]}
\keyw{unroll}
\end{pcode}

\begin{pcode}
\mbox{[code1]}
\id{mem}[\id{lp}] += \mbox{[distance from jump1 to label1]}
\id{lp} += 1
\id{mem}[\id{lp}] += 1
\id{lp} += 1
\keyw{goto}(\id{mem}[\id{lp-2}]) \lasttab{//this is jump1}
\mbox{[code2]}
\id{mem}[\id{lp}] += \mbox{[distance from jump2 to label1]}
\id{lp} += 1
\id{mem}[\id{lp}] += 1
\id{lp} += 1
\keyw{goto}(\id{mem}[\id{lp-2}]) \lasttab{//this is jump2}
\mbox{[code3]}
\id{lp} += 1 \lasttab{//adds 0 to log indicating no jump occurred}
\keyw{comefromifeq}(\id{mem}[\id{lp-1}], \id{mem}[\id{lp-2}]) \lasttab{//jump occurred, jump length}
\mbox{[code4]}
\mbox{[code5]} \lasttab{//unroll begins}
\mbox{[reverse of code4]}
\keyw{gotoifeq}(\id{mem}[\id{lp-1}], \id{mem}[\id{lp-2}]) \lasttab{//reverse of comefromifeq}
\id{lp} -= 1
\mbox{[reverse of code3]}
\id{mem}[\id{lp}] $-$= 1 \lasttab{//reverse of jump2}
\id{lp} $-$= 1
\id{mem}[\id{lp}] -= \mbox{[distance from jump2 to label1]}
\id{lp} $-$= 1
\mbox{[reverse of code2]}
\id{mem}[\id{lp}] $-$= 1 \lasttab{//reverse of jump1}
\id{lp} $-$= 1
\id{mem}[\id{lp}] -= \mbox{[distance from jump1 to label1]}
\id{lp} $-$= 1
\mbox{[reverse of code1]}
\end{pcode}
\caption{High-level and compiled low-level code for a general reversible jump. The jump length is stored so that the reverse jump can be made.}
\label{jump}
\end{figure}
\fi

\subsubsection{Conditional Statements}

We distinguish between two different types of conditionals, a \emph{protected} \keyw{if} statement and a general \keyw{if} statement. A protected \keyw{if} statement is one in which the conditional is not modified within the \keyw{if} statement.

\begin{theorem}\label{thm:if}
Protected If statements can be implemented reversibly with constant-factor increases in time and space, and general If statements with an extra bit of overhead in space.
\end{theorem}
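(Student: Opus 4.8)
The plan is to handle the two cases separately, building each \keyw{if} statement out of the paired-jump primitive from Theorem~\ref{thm:jump}, and then arguing that the reversibility and overhead bounds follow from the structure of the control flow. For a \textbf{protected} \keyw{if}, the key observation is that because the condition variable (or variables) is guaranteed not to be modified inside the body, the same Boolean test that selected the \keyw{then}/\keyw{else} branch on the forward pass will evaluate identically on the reverse pass. So I would compile \keyw{if}~$c$~\keyw{then}~$A$~\keyw{else}~$B$ into: a \keyw{gotoifeq}/\keyw{gotoifneq} on $c$ that skips over $A$ into $B$ (with the branch target being a matching \keyw{comefromifeq}/\keyw{comefromifneq} on $c$), the code for $A$, an unconditional paired jump over $B$, and the code for $B$. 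On reversal, we re-evaluate $c$ at the \keyw{comefrom} points; since $c$ is unchanged, we deterministically follow exactly the branch we came in on, so no log bit is needed. Because $A$ and $B$ are themselves reversible by hypothesis, the whole construct is reversible; time and space are each multiplied by a constant (a constant number of extra jump/comefrom instructions and, by Theorem~\ref{thm:jump}, at most a single pre-determined jump length recorded in the source, requiring no run-time storage).

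For a \textbf{general} \keyw{if}, the condition may be clobbered inside the body, so the reverse pass cannot recompute which branch was taken. The fix is to spend one bit: immediately upon branching, push onto the log stack a single bit recording whether the \keyw{then} branch or the \keyw{else} branch was entered (equivalently, the value of $c$ at entry, which is Boolean, hence one bit). On the reverse pass we pop this bit and use a \keyw{comefromifeq} on the popped bit — rather than on $c$ — to decide which branch to retrace, then discard the bit. This costs exactly one bit of space overhead (the log-stack entry), and again only constant-factor time and space beyond that, with $A$ and $B$ reversible by assumption. I would close by noting that nested \keyw{if}s compose: each contributes its own (protected: zero, general: one) bit, so the bounds are per-statement as claimed.

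The main obstacle — really the only subtle point — is making precise what "the condition is re-evaluable on the reverse pass" means in the protected case, and in particular ensuring that the re-evaluation of $c$ at the \keyw{comefrom} is itself reversible and energy-free. Since $c$ is a comparison, by the remark in Section~\ref{sec:WordRAM} it can be tested reversibly using a single bit of scratch that is uncomputed right after the branch; if we want even that bit to vanish inside a protected \keyw{if} we must be careful to uncompute it symmetrically on both passes. One should also check the edge case where $A$ or $B$ is empty (a one-armed \keyw{if}), which just degenerates to a single paired conditional jump and is covered by Theorem~\ref{thm:jump}. I do not expect any of this to require real work beyond carefully drawing the compiled instruction sequence, analogous to Figures~\ref{jumppair} and~\ref{jump}.
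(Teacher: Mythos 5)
Your proposal is correct and takes essentially the same approach as the paper: for a protected \keyw{if}, a paired \keyw{gotoif}/\keyw{comefromif} conditioned on the preserved condition variable lets the reverse pass re-evaluate the branch with no logging, and for a general \keyw{if}, a single logged bit recording which branch was taken replaces the (possibly clobbered) condition on the reverse pass. Your additional remarks about uncomputing the comparison's scratch bit and the one-armed degenerate case match what the paper's compiled-code figures show explicitly.
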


\iffull
\begin{proof}

Let $a$ represent the variable which holds the conditional expression of the \keyw{if} statement. First consider a protected statement. Regardless of whether $a$ was false and we jumped or $a$ was true and we entered the \keyw{if} statement body, $a$'s value will be preserved at the end of the \keyw{if} statement. As described in \cite{Frank99}, two-way branching may then be employed to make the \keyw{if} statement reversible. 

To briefly summarize Frank's results, in the forward direction, the value of $a$ determines whether we enter the body of the \keyw{if} statement or jump to the end. In a similar way, when reversing the operation, if we leave a marker at the end of the \keyw{if} statement indicating that this location may have been reached via two different paths --- either by jumping or by flowing through the \keyw{if} statement body --- depending on the value of $a$. Because $a$ will be maintained, we can use a comefromif conditioned on $a$ to determine if the \keyw{if} statement body should be undone. Because this \keyw{if} statement is reversible, it does not incur any cost in our logging space (Figure \ref{ifreversiblelogged}). 

In a general \keyw{if} statement, because $a$ is not protected and is subject to change, we must log the value of $a$ upon initial execution and use the logged value in the comefromif to determine whether to jump in the reverse direction. This gives an extra bit of storage.

\begin{figure}
\begin{minipage}{0.45\linewidth}
\begin{pcode}
\keyw{ifp}($x > y$):  \+ 
   $z$ $+$= 5 \-
$x$ $-$= 6
\end{pcode}
\begin{pcode}
101  \ul{$a$} = $x > y$
102  \keyw{gotoifneq}(a, 12)
103  $z$ $+$= 5
     \ldots
114  \keyw{comefromifneq}(a, 12)
115  $a$ $-$= $x > y$
116  $x$ $-$= $6$
\end{pcode}
\caption{A reversible \keyw{if} statement in high-level and low-level pseudocode. The \keyw{comefrom} statement on line 114 is a marker which, though it does not explicitly cause any change to the state of the program during forward execution, is essential to ensure the instantaneous reversibility of the code.}
\label{ifreversible}
\end{minipage}
\hfil\hfil
\begin{minipage}{0.45\linewidth}
\begin{pcode}
\keyw{log}: \+
    \keyw{ifr}($x > y$): \+
       $z$ $+$= 5 \-
    $x$ $-$= 6 \-
$y$ $+$= $x$
\keyw{unroll}
\end{pcode}
\begin{pcode}
101  \ul{$a$} = $x > y$
102  \keyw{gotoifneq}(a, 2)
103  $z$ $+$= 5
104  \keyw{comefromifneq}(a, 2)
105  $a$ $-$= $x > y$
106  $x$ $-$= 6
107  $y$ $+$= $x$ //unroll begins
108  $x$ $+$= 6
109  $a$ $+$= $x > y$
110  \keyw{gotoifneq}(a, 2) //reverse of comefrom
111  $z$ $-$= 5
112  \keyw{comefromifneq}(a, 2)
113  $a$ $-$= $x > y$
\end{pcode}
\caption{Logged reversible \keyw{if} statement in high-level code and compiled low-level code}
\label{ifreversiblelogged}
\end{minipage}
\end{figure}

\end{proof}

\subsubsection{For Loops}
\label{sec:forloops}
\fi

We now examine a special case of \keyw{for} loops. A \emph{simple for loop} is one in which a variable $i$ iterates over the values $1$ through $k$, each time executing some piece of code which does not alter $i$ or $k$. 

\begin{theorem}\label{thm:simplefor}
Simple for loops can be implemented reversibly with constant-factor increases in time and space.
\end{theorem}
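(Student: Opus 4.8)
The plan is to reduce the simple \keyw{for} loop to control-logic primitives already shown reversible, namely the paired/variable jumps of Theorem~\ref{thm:jump} and the protected \keyw{if} of Theorem~\ref{thm:if}, and to observe that the loop counter itself serves as the reversibility witness so that no bits are logged. Concretely, I would desugar \texttt{for $i = 1$ to $k$} into: initialize $\ul{i} = 1$ (a zero-to-nonzero assignment, hence reversible, using \proc{Inc} or \proc{Copy}); a loop-top label; a guard comparison $i \le k$ controlling a \keyw{gotoif} to the loop exit; the loop body $B$; an increment $i \mathrel+= 1$; an unconditional \keyw{goto} back to the loop top paired with a \keyw{comefrom}; and at the exit a \keyw{comefromif} marker on the same guard $i \le k$, followed by uncomputing the guard bit and then $i$ (restoring $i$ to $1$ and then to empty via \proc{Dec}'s down to $1$ and a final zeroing of the known value $1$, i.e.\ $i \mathrel-= 1$). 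Time is $\Theta(k)$ for the $k$ iterations plus $\Theta(k)$ for the symmetric uncompute, so $\Theta(l)$ with $l = k$; space is $O(1)$ words for $i$ and the guard bit, which the theorem statement charges as constant-factor.

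The key reversibility argument has three steps. First, because the body does not modify $i$ or $k$ (this is the defining hypothesis of a \emph{simple} for loop), every branch in the desugared code is a \emph{protected} \keyw{if}/\keyw{gotoif}: the guard $i \le k$ has the same value at the matching \keyw{comefromif} as it did at the \keyw{gotoif}, so by Theorem~\ref{thm:if} each conditional jump is reversible with no logged bit. Second, the back-edge \keyw{goto} is unconditional and its target is the loop-top label reached on every iteration; it pairs $1{:}1$ with a \keyw{comefrom}, so by the paired-jump case of Theorem~\ref{thm:jump} it is reversible with only the constant-factor overhead and no word of log space. Third — the crucial point that distinguishes this from a general for loop — the number of iterations is recovered from $i$ itself: running backward from the exit, the value of $i$ tells us exactly how many times to undo the body and decrement, and when $i$ reaches its initial value $1$ we know (statically) it holds $1$ and can zero it freely. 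Hence the whole construct is reversible and contributes $0$ energy, matching the table entry.

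The main obstacle I anticipate is handling the interaction between the loop body's internal control flow / garbage and the claim of zero energy: Theorem~\ref{thm:simplefor} as stated presumes — consistent with the standing caveat at the top of Section~\ref{sec:control-logic} that ``all reversible control logic in this section depends on all of the code within the control sequence being reversible'' — that the body $B$ is itself reversible (e.g.\ already wrapped in its own \keyw{log}/\keyw{unroll} if it used irreversible ops). I would make this assumption explicit and then argue compositionally: the inverse of the whole loop is obtained by running, for $i = k, k-1, \dots, 1$, the inverse of $B$ followed by the inverse increment, which is well-defined precisely because $B^{-1}$ exists and $B$ left $i,k$ untouched. A secondary fiddly point is getting the boundary conditions of the counter exactly right (the off-by-one in uncomputing $i$ down to $1$ versus $0$, and ensuring the guard bit is uncomputed against the \emph{same} $i,k$), but that is a routine diagram-chase analogous to Figure-style expansions already used for the \keyw{if} statement, so I would present it at the level of the desugaring above rather than line-numbered low-level code. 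I expect the proof to be short, essentially ``desugar, then cite Theorems~\ref{thm:jump} and~\ref{thm:if}, then observe the counter is a free witness.''
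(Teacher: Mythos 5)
Your proposal is correct and takes essentially the same approach as the paper's proof: both desugar the loop into a reversible initialization, increment, and paired branch, and both rely on the fact that the body leaves $i$ and $k$ untouched so that the counter itself witnesses the number of iterations, letting the unroll run the inverse body for $i$ from $k$ down to $1$ with nothing logged. Your version is merely more explicit in citing Theorems~\ref{thm:jump} and~\ref{thm:if} and in spelling out the standing assumption that the body is itself reversible.
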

\iffull
\begin{proof}
At the end of the body of the \keyw{for} loop we can simply increment $i$, check if it is less than or equal to $k$, and if not use a branch to jump back to the beginning of the body of the for loop (Figure \ref{simplefor}). Incrementation can be done reversibly and we have shown branching can be done reversibly. It is easy to see one can unroll one of these loops by simply running the inverse code for $i$ ranging from $k$ to $1$. Because $i$ only exists within the \keyw{for} loop and we know the loop was executed, we do not need to worry about whether or not to execute the inverse code, or how many loops were performed.
\end{proof}

\begin{figure}
\begin{pcode}
\keyw{log}: \+
   \keyw{for}($\ul{i}$=1 ; $i \ge 1$ ; $i <x$ ; $i$++): \+
      \mbox{[code1]} \-
   \mbox{[code2]} \-
\keyw{unroll} 
\end{pcode}

\begin{pcode}
\keyw{log}: \+
$\ul{i}$ = 0
\keyw{label} \id{beginFor}
$i$ += 1
\keyw{gotoifneq} ( $i$ $<$ $x$, \id{endFor}):
\mbox{[code1]}
\keyw{goto} \id{beginFor}
\keyw{label} \id{endFor}
\mbox{[code2]}\-
\keyw{unroll} 
\end{pcode}

\caption{An example of a simple for loop being translated from high-level to intermediate-level code consisting of jumps and branches.}
\label{simplefor}
\end{figure}
\fi

We consider an extension of the simple for loop. A for loop has \emph{internal conditions} if all variables used in the condition of the for loop only exist within the scope of the for loop. That value is \emph{protected} if it is never changed irreversibly. We define a \emph{protected} for loop as a for loop with protected, internal conditions\iffull (Figure \ref{protectedfor})\fi.

\begin{theorem}\label{thm:protectedfor}
A protected for loop can be performed reversibly with constant factor overhead in time and space.
\end{theorem}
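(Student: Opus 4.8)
The plan is to reduce a protected \keyw{for} loop to the reversible control primitives already in hand. First I would desugar \keyw{for}($I$; $C$; $U$; \keyw{body}) into low-level form: run the initializer~$I$; land on a label $L_{\text{top}}$; evaluate the condition~$C$ into a fresh bit; conditionally jump out to $L_{\text{exit}}$ when that bit is false (un-computing it appropriately); run \keyw{body}; run the update~$U$; and \keyw{goto} back to $L_{\text{top}}$. Because the condition is \emph{internal}, $I$, $C$, and $U$ touch only the loop-scoped variables (plus, at worst, read-only external data), and because the loop is \emph{protected}, every write to those variables is reversible. Hence the loop-internal configuration runs reversibly through a sequence $S_0 = I(\ul{0}),\, S_1,\, \dots,\, S_\ell$, with $C$ true on $S_0,\dots,S_{\ell-1}$ and false on $S_\ell$.

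Next I would apply Theorem~\ref{thm:jump} to make both jumps reversible by pairing each \keyw{goto} with a \keyw{comefrom}, and Theorem~\ref{thm:if} to handle the condition-and-body as a two-way branch, exploiting that $C$ can be recomputed at both ends of the branch. The only point in the desugared code with two incoming control edges is $L_{\text{top}}$ (reached once from~$I$ and $\ell-1$ times along the back edge); I would guard the corresponding \keyw{comefromif} there by a predicate that tells a first arrival apart from a loop-back arrival.

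The crux --- and the reason the space overhead is $0$ rather than the $\Theta(\lg \ell)$ of a general \keyw{for} loop --- is that nothing needs to be logged: since $C$ and all of its variables are internal and protected, the loop-internal configuration \emph{is} a faithful, fully-recoverable record of the loop's progress, so it can play the role of an iteration counter for free. To \keyw{unroll} the loop I would emit another loop that repeatedly runs $U^{-1}$ then $\keyw{body}^{-1}$, halting exactly when the loop-internal state has returned to $S_0$, and then runs $I^{-1}$. That halting test is computable with $O(1)$ operations and no extra storage: there are $O(1)$ loop-scoped variables, and a terminating loop never revisits a configuration, so $S_0,\dots,S_\ell$ are distinct and $S_0$ is unambiguously the first configuration seen on the way back; it is also exactly this test that serves as the guard for the header \keyw{comefromif}. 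In the common special case where \keyw{body} leaves the condition variables undisturbed, the halting test degenerates to running the loop condition backwards, recovering the simple-\keyw{for}-loop construction of Theorem~\ref{thm:simplefor}. Each iteration picks up only a constant number of extra instructions (recomputing~$C$ and testing for~$S_0$), so time grows by a constant factor, and since nothing is ever overwritten, the energy cost is~$0$.

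The step I expect to be the real obstacle is this last one: justifying that the reversal genuinely needs \emph{zero} auxiliary space. Two things must be argued carefully --- that ``the loop-internal state equals $S_0$'' is a legitimate, $O(1)$-computable stopping predicate (which leans on termination and on the condition's internality, so that $C$, $I$, and $U$ depend on nothing the reversal has not yet restored), and that the header \keyw{comefromif} guard can likewise be recomputed rather than stored. Some care is also needed when $I$ reads external variables that \keyw{body} later mutates, since the $S_0$-test must be evaluated against the already-restored versions of those variables; the clean fix is the standard convention that loop-scoped variables are freshly allocated (hence start at~$\ul{0}$), so that $S_0$ is pinned down by data that $\keyw{body}^{-1}$ has put back by the time the test runs.
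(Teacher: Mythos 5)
Your construction matches the paper's: desugar the loop into paired branches via Theorems~\ref{thm:jump} and~\ref{thm:if}, exploit that the protected, loop-internal condition variables are never irreversibly altered so they themselves serve as the record of the loop's progress, and unwind by applying the inverse update until the iterator returns to its initial value, which also supplies the guard distinguishing a first arrival from a loop-back arrival. The paper states this much more tersely (and, like your $S_0$-halting test, implicitly assumes the loop-internal state does not revisit its initial configuration mid-loop), so your proposal is correct and essentially the same proof.
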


\iffull
\begin{proof}
This follows the same reasoning as the proof of Theorem~\ref{thm:if}. We can construct the for loop out of a paired branch statement. Since we know the terminal value of the iterator is not changed after the loop ends, then when reversing, we can check it's value to see if the for loop ever executed. The update function on the iterator is also promised to be reversible. Thus to determine if one is done unwinding the loop, one can apply the inverse update to the iterator and then check whether it has reached its initial value.
\end{proof}

\begin{figure}
\begin{pcode}
\keyw{log}: \+
\mbox{[code1]}
\keyw{for} ($\underline{i}$ = $a$; $g(i, b)$; $f(i)$): \+
	\mbox{[code2]} \-
\mbox{[code3]} \-
\keyw{unroll}
\end{pcode}

\begin{pcode}
\keyw{log}: \+
\mbox{[code1]}
$\ul{i}$ = $f^{-1}(a)$
\keyw{label} \id{beginFor}
$f(i)$ \lasttab{//$f$ is a reversible function}
\keyw{gotoifneq} ( $g(i,b)$, \id{endFor}):
\mbox{[code2]}
\keyw{goto} \id{beginFor}
\keyw{label} \id{endFor}
\mbox{[code3]}\-
\keyw{unroll} 
\end{pcode}
\caption{Example of high-level to intermediate-level code for a protected for loop}
\label{protectedfor}
\end{figure}

In general, loops cannot be performed reversibly without overhead. For example, we may not be able to distinguish whether or not a loop ever executed because the variable checked in the condition may have gone outside the range during the loop or have been set that way before the loop ever executed. Even if we know a loop has executed, it is possible for the variables considered by the loop to reach the same initial value multiple times throughout the execution of the loop. If this is the case, then we cannot rely on these initial values to tell us to terminate the loop, as was the case with simple for loops. 

However, this issue can be addressed with a small amount of overhead. We will simply create an extra variable at the start of the loop and increment it whenever we loop. Once the loop terminates, we will push this variable onto the log stack, so we can recover the number of times our loop executed. Assuming the internals of the loop are reversible, then the entirety of the loop is now reversible. We call this a \emph{general} for loop.
\fi

\begin{theorem}\label{thm:generalfor}
A general for loop can be performed reversibly with constant factor overhead in time and an extra word of space representing the number of loop executions.
\end{theorem}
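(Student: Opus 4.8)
The plan is to reuse the paired-branch construction behind Theorems~\ref{thm:simplefor} and~\ref{thm:protectedfor}, augmenting it with an explicit iteration counter that is saved on the log stack so the reverse pass knows exactly how many times to replay the (reversible) loop body. First I would compile the \keyw{for} into the same intermediate form used for protected loops --- a \keyw{label} at the top of the body, a reversible update of the iterator, a \keyw{gotoifneq} guard, and a back-edge \keyw{goto} --- but additionally allocate a fresh underlined variable, say \id{count}, immediately before the loop, and insert an increment of \id{count} at the end of the body so that it advances once per pass. Since increment is a reversible word operation, this adds only $O(1)$ time per iteration and one word of space, with no energy cost.

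The key step is the handoff at loop exit. When forward execution falls out of the loop, I would push \id{count} onto the log stack --- a reversible transfer of a word that zeroes \id{count} and advances $lp$ --- so exactly one word of log space is consumed. On the reverse pass we arrive at this point, pop the word back into a fresh \id{count}, and then repeatedly run the inverse of the loop body, decrementing \id{count} after each inverse pass, until \id{count} returns to $0$, at which point it is deallocated and the machine is back in its pre-loop state. Because the paired branch at the back-edge lets the reverse traversal follow the same control path it entered on, and the body is assumed reversible, every line is inverted in the correct order; the counter simply replaces the termination test that the simple/protected cases could read off the iterator directly.

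The main obstacle --- indeed the whole point of the theorem --- is justifying why the explicit counter is both necessary and sufficient where the protected-loop argument breaks down. I would argue sufficiency by observing that the iteration count is the \emph{only} information about the loop not recoverable from the final state: the body is reversible, so its cumulative effect can be undone pass by pass, and \id{count} records precisely how many passes there are, regardless of whether a condition variable left its legal range mid-loop or revisited its initial value --- the two failure modes that defeat the protected-loop reversal. Necessity follows from the same observation: a loop whose guard variable is set out of range \emph{before} the loop is indistinguishable, from the post-loop state alone, from one that reaches that value only after several iterations, so $\Omega(\lg k)$ bits must be retained, and since $k$ can be as large as the address space this is one word; hence the overhead is tight. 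A minor point to check is that interleaving the counter increment with a body that itself performs logged irreversible operations does not corrupt the log stack --- this holds because the body's own log activity and the single terminal push occur at disjoint, properly nested program points, so the stack discipline is preserved and unrolling proceeds in strict last-in-first-out order.
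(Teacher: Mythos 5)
Your proposal matches the paper's argument: the paper (in the paragraph preceding the theorem, which serves as its justification) likewise introduces a fresh counter incremented once per iteration and pushed onto the log stack at loop exit, so the reverse pass knows exactly how many times to undo the reversible body. Your additional remarks on why $\Omega(\lg k)$ bits are unavoidable and on log-stack discipline are sensible elaborations, but the core construction is the same.
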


\iffull
\subsubsection{Function calls}
\fi

\begin{theorem}\label{thm:fxncall}
Function calls can be implemented reversibly with constant-factor increases in time and space.
\end{theorem}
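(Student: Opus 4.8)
The plan is to reduce reversible function calls to the control-logic primitives already established, in particular the paired-jump mechanism of Theorem~\ref{thm:jump}. A function call is essentially a jump to the function body together with a promise to return to the instruction following the call site; the key difficulty is that the same function body may be invoked from many different call sites, so the return address is determined at run time. First I would set up a \emph{call stack} (analogous to the log stack of Section~\ref{sec:LoggingAndUnrolling}), maintained with a stack pointer $sp$ that is manipulated only by reversible increments and decrements. On a call, we push the return address (or, equivalently, the distance from the callee's exit back to the call site) onto the call stack and then perform a \keyw{goto} into the function body; on return, the function executes a \keyw{comefromif}/\keyw{goto} pair that pops this value off the call stack and uses it to restore the program counter. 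Since pushing a word and then popping it is exactly the kind of paired reversible manipulation handled in Theorem~\ref{thm:jump}, the call/return pair is reversible with only constant-factor time overhead and $O(1)$ words of stack space per nested call.

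The second ingredient is argument passing and local variables. I would adopt the convention that arguments are passed by reference (the callee receives the memory locations, consistent with the left-tuple/right-tuple semantics of the Transdichotomous model), so no copying — hence no energy — is needed at the call boundary; where a caller genuinely needs a fresh copy, it uses \proc{Copy} into an underlined variable before the call and \proc{DestroyCopy} after, which is reversible. Locals inside the function body are just stack-allocated variables, which by the control-logic results (and the stack-allocated-variable primitive referenced in the introduction) can be allocated and, crucially, deallocated reversibly provided the function restores them to $\ul{0}$ before returning; if the function body is itself reversible this is automatic, and if it is semi-reversible the programmer wraps the body in a \keyw{log}/\keyw{unroll} block so that any garbage is unrolled before the return. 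Thus the invariant ``the machine state outside the returned output is restored'' is preserved across the call, matching the global requirement imposed on all programs in these models.

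Putting the pieces together: a call site compiles to (i) push return offset onto the call stack via reversible increments, (ii) \keyw{goto} the function label, (iii) at the function's exit, a paired \keyw{comefromif} that reads and pops the offset and jumps back, restoring $sp$. Recursion is handled automatically because each invocation pushes its own frame; the stack depth is the recursion depth, giving the claimed $O(1)$ words per active call and no energy cost as long as the body is reversible (or made so by logging). I expect the main obstacle to be arguing cleanly that the \emph{reverse} execution is well-defined when a single function label has multiple callers: when unrolling we arrive at the function exit and must decide whether to continue unrolling the body or to have "returned" here, and we must recover \emph{which} call site to jump back to. This is exactly the variable-jump situation of Theorem~\ref{thm:jump}, so the resolution is to ensure the return offset on the call stack is still present (it is — it is only popped on the forward return, so on the reverse pass we re-push it) and to condition the reverse \keyw{goto} on the bit indicating a call occurred. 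Once this bookkeeping is spelled out the rest is routine, and the constant-factor time/space bounds and zero energy cost follow immediately from the corresponding bounds in Theorems~\ref{thm:jump} and~\ref{thm:if}.
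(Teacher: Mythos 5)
Your proposal is correct and follows essentially the same route as the paper: a call stack holding the return address, a \keyw{goto} into the function paired with a \keyw{comefrom} at its entry, a return jump that reads the pushed location, and a requirement that the body clean up its own garbage before returning. The paper's only additional implementation details are that it compiles a separate reverse version of each function to be invoked during unrolling and uses a compute--copy--uncompute pattern into a return register inside the forward call, but these serve the same purpose as your \keyw{log}/\keyw{unroll} wrapping and re-push-on-reverse bookkeeping.
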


\iffull
\begin{proof}
Reversible function calls work in a similar manner to normal ones, involving the stack pointer being pushed onto the stack, following the new pointer to the function, and then returning. There are two major changes needed. First, every function that might be logged will create a second function, which is the unrolling version, and appropriate pointers to move to this function when unrolling will be used. Second, we need to be careful to not overwrite anything when moving around. In the case of functions, we know where the function is written in memory, and we know where the function ends. Thus it is easy to specify a \keyw{jump} in our code whose \keyw{comefrom} is the beginning of the function. At the end of the function, we create another \keyw{jump} which returns us to our original place in the code, reading the location which was pushed onto the stack (Figure \ref{fxncall}).
\end{proof} 
\begin{figure}
\begin{pcode}
\keyw{def} \proc{function}($x$): \+
  \mbox{[code1]}
  \keyw{return} $y$ \-
  
\keyw{def} \proc{reverseFunction}($y$): \+
	\mbox{[reverse of code1]}
	\keyw{return} $x$ \-
  
\mbox{[code2]}
$\ul{z}$ += \proc{function}($x$)
\mbox{[code3]}
\keyw{unroll}
\end{pcode}

\begin{pcode}
\keyw{def} \proc{function}($x$): \+
    \keyw{comefrom}(\id{mem}[\id{sp-1}]) \lasttab{//indicates where fxn was called from}
    $\ul{x}$ += \id{mem}[\id{sp-2}] \lasttab{//pulls input from stack}
    \mbox{[code1]} \lasttab{//body of function}
    $\ul{r0} += y$ \lasttab{//special register for return value}
    \mbox{[reverse of code1]}
    $\ul{x}$ -= \id{mem}[\id{sp-2}]
    \keyw{goto}(\id{mem}[\id{sp-1}]) \lasttab{//returns to program} \-
		
	\keyw{def} \proc{reverseFunction}($x$): \+
    \keyw{comefrom}(\id{mem}[\id{sp-1}]) \lasttab{//indicates where fxn was called from}
    $\ul{x}$ += \id{mem}[\id{sp-2}] \lasttab{//pulls input from stack}
    \mbox{[code1]} \lasttab{//body of function}
    $\ul{r0} -= y$ \lasttab{//special register for return value}
    \mbox{[reverse of code1]}
    $\ul{x}$ -= \id{mem}[\id{sp-2}]
    \keyw{goto}(\id{mem}[\id{sp-1}]) \lasttab{//returns to program} \-

\mbox{[code2]}
\keyw{push}($x$)
\keyw{push}(\id{pc})
\keyw{goto}(\mbox{[function start]}) \lasttab{//jump to function}
\keyw{comefrom}(\mbox{[function end]})
\keyw{swap}($\ul{z}$, $r0$) \lasttab{//caller sets $r0$ back to 0}
\keyw{pop}(\id{pc-4})
\keyw{pop}($x$)
\mbox{[code3]} \lasttab{//begin unroll}
\mbox{[reverse of code3]}
\keyw{push}($x$)
\keyw{push}(\id{pc})
\keyw{swap}($\ul{z}$, $r0$) \lasttab{//put output back in $r0$ to be reversed}
\keyw{goto}(\mbox{[reverseFunction start]}) \lasttab{//jump to function}
\keyw{comefrom}(\mbox{[reverseFunction end]})
\keyw{pop}(\id{pc-4})
\keyw{pop}($x$)
\mbox{[reverse of code2]}

\end{pcode}
\caption{High-level and low-level language example of function call and unroll.}
\label{fxncall}
\end{figure}
\fi

\subsection{Memory Management and Garbage Collection}
Allocating and reclaiming free memory are critical tasks to the function of any modern computer. Happily, we show that some basic forms of memory allocation and garbage collection can be done reversibly with only constant factor overhead in time and space.
\iffull
Free lists are a simple type of memory allocation which involves a list of pointers to blocks of free memory. The pointer to the next block of memory is stored in the previous block of memory. Allocation is performed by removing the pointer from the free list and handing it to the program requesting the memory. To deallocate a block of memory, the pointer to that block is put on the end of the free list.  We assume all deallocated memory blocks have been returned to their zeroed out state.
\fi

\begin{theorem}\label{thm:freelist}
Memory allocation using free lists can be done reversibly with constant-factor overheads in time and space.
\end{theorem}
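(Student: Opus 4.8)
The plan is to exploit the fact that, on a free list, allocation and deallocation are --- up to bookkeeping --- mutually inverse operations, so it suffices to implement one of them as a short sequence of the reversible word operations of Section~\ref{sec:WordRAM} and get the other for free by reversal. I would fix the usual representation: a global variable \id{head} holding the address of the first free block, and inside each free block the first word storing the address of the next free block (the remaining words of a free block being zero, as the statement assumes of all deallocated blocks). Initialization of a pool of $N$ equal-size blocks is then a simple \keyw{for} loop (Theorem~\ref{thm:simplefor}) that sets each block's next-pointer: since the address of the $(i{+}1)$-st block is a fixed affine function of the loop index and each next-pointer word starts zeroed, every write is a \proc{Copy} into a zeroed word, hence reversible and energy-free, giving $\Theta(N)$ time and $0$ energy for the build.

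For \proc{Allocate}, returning a pointer in an initially-empty variable \id{ret}, I would use exactly three reversible steps: \proc{Copy} \id{head} into \id{ret}; then \proc{Swap} the contents of \id{head} with the word \id{mem}[\id{ret}], which advances \id{head} to the second free block and leaves \id{ret}'s block holding the old value of \id{head}, i.e.\ a known copy of \id{ret}; then \proc{DestroyCopy} to remove that copy of \id{ret} from \id{mem}[\id{ret}], zeroing the handed-out block. Each step is a primitive reversible operation, so \proc{Allocate} runs in $O(1)$ time with $O(1)$ extra (stack) space and zero energy, and \proc{Deallocate}$(\id{ret})$ is literally the reverse program --- \proc{Copy} \id{ret} into \id{mem}[\id{ret}] (valid because a block is zeroed when handed back), \proc{Swap} it with \id{head}, then \proc{DestroyCopy} \id{ret} against \id{head}. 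Consequently an \proc{Allocate} inside a \keyw{log} block unrolls to a \proc{Deallocate} and conversely, and any interleaving of allocations and deallocations is restored by step-by-step reverse execution. Wrapping both procedures in reversible function calls (Theorem~\ref{thm:fxncall}) yields the claimed constant-factor overheads in time and space, with zero energy.

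The step I expect to be the main obstacle is getting the order of pointer dereferences right. Because both \id{head} and \id{mem}[\id{head}] change during the operation, a naive \proc{Swap} of \id{head} with \id{mem}[\id{head}] is not even self-inverse --- after the swap \id{head} names a different block --- and a \proc{DestroyCopy} performed before a live copy of the block address exists would degenerate into an irreversible overwrite. Keeping the extra word \id{ret} as a stable handle on the block address throughout is exactly what makes every intermediate \proc{Swap}/\proc{DestroyCopy} address a well-defined, still-valid location; this is the ``be careful not to overwrite anything when moving around'' caveat of Section~\ref{sec:ReversiblePrimitives} made concrete. I would also dispatch the boundary case of an exhausted free list by layering a trivial bump pointer over the untouched region: incrementing a frontier pointer is reversible, its inverse is reached only when reverse execution undoes such a bump, and the partition of memory into free-list, live, and untouched regions stays consistent, so no energy is spent.
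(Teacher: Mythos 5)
Your proposal is correct and follows essentially the same route as the paper, which likewise treats the free list as a linked list and makes allocation/deallocation a reversible push/pop at the head (``point the new block to the former head, then point the head to the new block''), deferring to the reversible doubly-linked-list machinery of Section~\ref{sec:LinkedLists}. Your version is simply more explicit, pinning the head update down to a concrete \proc{Copy}/\proc{Swap}/\proc{DestroyCopy} sequence and additionally covering pool initialization and the exhausted-list case, which the paper's terse proof omits.
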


\iffull
\begin{proof}
Free lists can be handled in a similar manner to doubly-linked lists  \ref{sec:LinkedLists}. Passing a pointer from the memory manager to the program or back are reversible operations. To append a block of memory to a list, we simply point the new block to the former head of the list, then point the head pointer to the new block, an operation which is straightforward to reverse.
\end{proof}
\fi

Garbage collection often uses a technique known as reference counting. Reference counting keeps track of the number of references to an object or resource and deallocates the space when it is no longer referenced. In our analysis, we do not charge the cost of freeing or destroying the objects to the algorithm. Since this destruction would need to happen regardless of what sort of garbage collection, if any, was performed we believe the energy costs involved are not a fair representation of the work done by the garbage collector itself.

\begin{theorem}\label{thm:refcount}
Reference Counting can be done reversibly with constant-factor overhead in space and time.
\end{theorem}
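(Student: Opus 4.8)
The plan is to show that every operation a reference-counting garbage collector performs --- incrementing a counter when a new reference is created, decrementing it when a reference is dropped, and freeing a block when its count hits zero --- is either already reversible or can be made reversible with only the standard constant-factor overheads established in the preceding theorems. First I would set up the data layout: each allocated block carries an integer reference-count field alongside its payload, and the free list from Theorem~\ref{thm:freelist} manages the pool of available blocks. Creating a new reference to a block is implemented as an in-place increment of that block's count, which is a reversible word operation (inverted by decrement); similarly, dropping a reference is an in-place decrement, inverted by increment. Copying and overwriting pointer variables themselves are handled by the reversible \proc{Copy}/\proc{DestroyCopy} primitives and the control-logic machinery already built, so the bookkeeping around which variable points where costs no energy.

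The only place where irreversibility threatens to creep in is the conditional test ``is the reference count now zero?'' that decides whether to deallocate. Here I would invoke the protected/general \keyw{if} construction of Theorem~\ref{thm:if}: the branch on ``count $=0$'' is reversible at the cost of at most one logged bit (or zero, if the surrounding context protects the predicate), and when the branch is taken, the deallocation itself is just the reverse of the allocation from Theorem~\ref{thm:freelist} --- unlinking the block and returning it to the free list, which we already know is reversible. Crucially, as the paragraph preceding the theorem states, we do not charge the algorithm for the energy of destroying the contents of a block that is being reclaimed: that destruction would happen under any memory-management discipline, so it is not attributed to the reference-counting logic. With that accounting convention, the per-operation energy cost is zero. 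Summing over the lifetime of the structure with $N$ objects and $O(1)$ reference-manipulation operations charged per object gives $\Theta(N)$ time, $\Theta(wN)$ space (the counts and pointers), and $0$ energy, matching the table entry.

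The main obstacle I anticipate is not any single calculation but the cycle problem inherent to reference counting: a cycle of mutually referencing objects never reaches count zero and so is never freed by the collector. This is a correctness/completeness issue for reference counting in general, not specific to the reversible setting, so the honest move is to note that the theorem concerns only the reversibility and resource overhead of the reference-counting mechanism itself, not its completeness as a garbage collector --- completeness in the presence of cycles is exactly what the mark-and-sweep result (Theorem~\ref{thm:marksweep}) addresses. A secondary subtlety worth a sentence is ensuring that when we \emph{unroll} code that performed allocations and reference updates, the log stack correctly records the at-most-one bit per conditional deallocation so that the inverse pass can decide whether a given block was freed; this is a direct application of the logging discipline from Section~\ref{sec:LoggingAndUnrolling} and the \keyw{comefromif} pairing, and introduces no new energy cost.
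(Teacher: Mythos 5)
Your proposal is correct and follows essentially the same route as the paper's proof: reversible increment/decrement of the reference count, a conditional check governing deallocation, and the accounting convention that destroying the reclaimed object's contents is not charged to the collector. The only substantive difference is that the paper tests the counter against $1$ \emph{before} decrementing so that the branch qualifies as a protected \keyw{if} with zero auxiliary space, whereas your post-decrement test for zero may cost a logged bit per operation; your added remarks on cycles and on unrolling are sensible but not needed for the theorem as stated.
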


\iffull
\begin{proof}
The reference counter requires $\lg k$ bits where $k$ is the number of references to the object. We can augment the creation of a reference to increment the counter, which is a reversible operation. In addition, we can add a simple conditional check on the destruction of a reference to see if the reference counter is at $1$. Since this meets the condition of a protected if, it takes no auxiliary space. We can now decrement the reference, destroy the object, and continue on with removing the reference. Thus, maintaining the counter only incurs a constant factor increase in time and space. The destruction of the object would have occurred regardless, and thus it is not fair to charge its energy cost to the reference counting algorithm.
\end{proof}

Mark and Sweep is another garbage collection algorithm which walks through every object reachable from the heap and marks them. It then goes through all objects in memory and destroys any which are unmarked, and unmarks the rest. Example code can be found below.

\begin{minipage}{0.3\linewidth}
\begin{pcode}
\proc{MarkAndSweep}: \+
\keyw{for each} root variable $r$: \+
    \proc{Mark}$(r)$ \-
\proc{Sweep}$()$ \-
\end{pcode}
\end{minipage}
\begin{minipage}{0.3\linewidth}
\begin{pcode}
\proc{Mark}$ (Object p)$: \+
    \keyw{if} $!p.$marked: \+
        $p.$marked = True;
        \keyw{for each} Object $q$ 
        \enspace referenced by $p'$: \+
            \proc{Mark}$(q)$; \- \- \-
\end{pcode}
\end{minipage}
\begin{minipage}{0.3\linewidth}	
\begin{pcode}		
\proc{Sweep}$()$: \+
    \keyw{for each} Object p 
    \enspace\keyw{in} the heap: \+
        \keyw{if} $p.$marked: \+
            $p.$marked = False \-
        \keyw{else}: \+
            heap.release$(p)$
\end{pcode}
\end{minipage}
\fi

\begin{theorem}\label{thm:marksweep}
Mark and Sweep can be done reversibly with constant-factor overhead in space and time.
\end{theorem}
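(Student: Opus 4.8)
The plan is to treat the two phases, \proc{Mark} and \proc{Sweep}, separately, building each out of the reversible control-logic and memory primitives already established. \textbf{Mark phase.} \proc{Mark} is a depth-first traversal from the root set. The outer loop ``\keyw{for each} root $r$: \proc{Mark}$(r)$'' ranges over a fixed list and is therefore a simple for loop, reversible by Theorem~\ref{thm:simplefor}; the recursion is carried out with reversible function calls by Theorem~\ref{thm:fxncall}; and the inner loop ``\keyw{for each} object $q$ referenced by $p$: \proc{Mark}$(q)$'' ranges over the (unchanging) reference list of $p$ and is again a simple for loop. The only delicate point is the guard ``\keyw{if} $!p.$marked'': because the body assigns $p.$marked, this is a \emph{general}, not protected, \keyw{if}, so by Theorem~\ref{thm:if} each invocation spends one extra bit of log space recording whether that invocation actually did the marking. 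On reversal it is this logged bit --- not the current value of $p.$marked, which is $1$ either way --- that the \keyw{comefromif} reads to decide whether to undo the body. Every other operation in \proc{Mark} (setting a mark bit, dereferencing a pointer, the stack pushes and pops) is reversible, so the Mark phase dissipates no energy. Since each object is marked before its children are visited, no object sits twice on the active recursion stack, so the stack uses $O(wN)$ space for $N$ heap objects, and the total number of invocations --- one per (object, reference) pair inspected --- is proportional to the heap size, so the accumulated log also fits in $O(wN)$ space; both are constant-factor overheads.

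\textbf{Sweep phase.} \proc{Sweep} makes one linear pass over the heap, which is a simple for loop over the object slots (Theorem~\ref{thm:simplefor}). For each object $p$ it tests $p.$marked reversibly (Section~\ref{sec:WordRAM}); if $p$ is marked it clears the bit, and since that bit is known to hold $1$ this is just a reversible decrement at zero energy cost; if $p$ is unmarked it is released. Releasing $p$ means zeroing its contents and then appending the block to a free list: the zeroing is precisely the destruction of a no-longer-used object, whose energy we do not charge (as noted in the discussion of garbage collection above), while the append is reversible by Theorem~\ref{thm:freelist}. Hence \proc{Sweep} also runs in constant-factor time and space and incurs only the uncharged destruction cost.

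Putting the phases together, \proc{MarkAndSweep} has constant-factor time and space overhead, and --- apart from destroying the garbage it reclaims --- costs zero energy. The main obstacle is the Mark-phase DFS: because the ``already visited'' test is overwritten inside its own branch, the traversal is not reversible as written, and the remedy is to log one bit per recursive call and then verify that the total of these bits stays within the $O(wN)$ space budget. It is cleanest to leave this log in place rather than unroll it, since \proc{Sweep} destroys the mark bits on which a reverse pass of \proc{Mark} would depend.
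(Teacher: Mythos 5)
Your proof is correct and follows essentially the same route as the paper's: the mark-bit update is reversible because the guard has just established its prior value (the paper phrases this as replacing the assignment with a toggle), the traversal's control logic is handled by logging with constant-factor overhead, and the destruction of unmarked objects is left uncharged. You simply spell out the details (general vs.\ protected \keyw{if}, the $O(wN)$ accounting for the recursion stack and log) more explicitly than the paper does.
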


\iffull
\begin{proof}
The reversible Mark and Sweep will be a slight variation on the standard code above. First, each assignment to $p.$marked can be replaced by (reversibly) toggling that bit because we just checked that it was in the opposite state. At this point the remaining operations are (1) following pointers and control logic, which can be logged and done reversibly with constant-factor overhead, and (2) destroying unmarked objects. Destroying these objects does not impact any of the variables in the control logic and thus can safely stay unlogged.
\end{proof}
\fi

\section{Energy Reduction Techniques}
\label{sec:EnergyReductionTechniques}
This section overviews some general techniques that have been helpful in constructing reversible algorithms and proves some general theorems about algorithms sharing certain properties.

\subsection{Complete Logging}
One very simple, yet surprisingly useful technique is to simply log every step of an algorithm. This incurs a space cost of $O(t(n))$ words where $t(n)$ is the runtime of the algorithm. Although this seems wasteful, the prevalence of linear time algorithms or linear time sub-routines in algorithms makes this important to remember.

\subsection{Reversible Subroutine}
Earlier we saw that function calls can be implemented reversibly. We now give a stronger result for being able to use some reversible subroutines efficiently.

\begin{theorem}
\label{RSubroutineThm}
If we have a fully reversible subroutine whose only effect on the program is through its return value, one need only store the inputs and outputs to this subroutine to later unroll it with only a constant-factor overhead in time.
\end{theorem}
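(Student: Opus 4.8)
The plan is to start from the reversible function-call construction of Theorem~\ref{thm:fxncall} and observe what extra structure the hypothesis buys us. In the general construction, a logged function call compiles to: run the function body forward, extract the return value, then \emph{immediately} run the reverse of the body to restore all of the function's local state, so that by the time control returns to the caller the only trace of the call is the return value sitting in a register. The cost of that scheme is dominated by executing the body twice — once forward, once backward — plus whatever logging the body itself needed internally if it was only semi-reversible. The point of this theorem is that when the subroutine is \emph{fully} reversible and interacts with the rest of the program \emph{only} through its return value, we can defer the inverse pass all the way to the \keyw{unroll} point and pay for it only once there, rather than re-running the body at every call.

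The key steps, in order. First, set up notation: let $S$ be the subroutine, let $x$ denote its input(s) and $y = S(x)$ its output, and suppose $S$ is fully reversible (so by the Lemma in the circuit subsection its action is a bijection on its working space, and it has a compiled inverse $S^{-1}$). Second, describe the modified call convention: at a call site we (i) copy the inputs $x$ onto the log stack using \proc{Copy} (zero energy, since the log cells are underlined/known-zero), (ii) run $S$ forward, leaving its internal scratch in place rather than unrolling it, (iii) \proc{Copy} the return value $y$ out to the caller's target variable, and (iv) push a record identifying which subroutine was called (needed only if several distinct reversible subroutines share a log, to keep the $O(1)$-operations discipline). Because $S$'s only effect on the caller is via $y$, leaving $S$'s scratch space "dirty" does not corrupt anything the rest of the program depends on — this is exactly where the hypothesis is used. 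Third, describe unrolling: when the enclosing \keyw{log} block hits \keyw{unroll}, we walk the log stack backward; upon reaching a subroutine record we pop the stored inputs $x$, re-establish them, run $S^{-1}$ once to erase the scratch and reconstruct the inputs, then \proc{DestroyCopy} the inputs against the popped log copy to zero the log cells. Fourth, do the accounting: each call costs $O(|S|)$ time for the forward run plus $O(1)$ for the copies; each call is unrolled in $O(|S|)$ time; so the total overhead over just running the algorithm and storing $(x,y)$ pairs is a constant factor in time, and the extra space is exactly the logged $(x,y)$ pairs as claimed. Energy stays zero because every new operation introduced — the copies, the forward and inverse runs of a reversible routine, the stack pushes/pops — is reversible.

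The main obstacle is making the "only effect is through its return value" hypothesis do precise work: I need to argue carefully that leaving $S$'s internal/scratch memory uncleared between the call and the eventual \keyw{unroll} cannot interfere with correctness — e.g., that $S$ allocates fresh scratch on each invocation (or that the construction can force this via the stack discipline of Theorem~\ref{thm:fxncall}), so that two calls to $S$, or a call to $S$ interleaved with other code, do not clobber one another's deferred state. A secondary subtlety is the reverse pass at \keyw{unroll}: $S^{-1}$ must be run with its scratch in exactly the post-forward-call configuration, which means the log must also pin down enough of $S$'s entry state; the cleanest fix is to have $S$ start from a known-zero scratch region each time, so that reconstructing $x$ from the log is all that is needed to replay and then invert it. Once that invariant is nailed down, the time/space/energy bookkeeping is routine.
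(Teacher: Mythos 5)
There is a genuine gap, and it is in the step you yourself flag as the ``main obstacle.'' Your construction runs $S$ forward at the call site and then \emph{leaves its internal scratch in place} until the enclosing \keyw{unroll}, deferring the single inverse pass to that point. But the theorem's content is a \emph{space} claim: one need only store the inputs and outputs. Keeping every call's scratch and internal log alive until the global unroll means the space held between call and unroll is the sum of the working space of all pending invocations, not just the $(x,y)$ pairs. This is exactly the cost the theorem is designed to avoid: in its application to semi-reversible APSP (Theorem~\ref{thm:repeated-APSP-semi-reversible}), each inner $(\min,+)$ update logs $O(V)$ words internally, and the whole point is that this space ``is freed upon completion of the subroutine''; under your scheme the $\Theta(V^2\lg V)$ calls would accumulate $\Theta(V^3\lg V)$ words of dirty scratch and destroy the $O(V^2)$ space bound. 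Your proposed fix for the interference problem --- give each call a fresh scratch region --- repairs correctness but makes the space blowup strictly worse.

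The paper's proof uses a different, two-step unrolling that you are missing. At the call site: run $S$ forward, \proc{Copy} out the output $y$, and then \emph{immediately} unroll $S$, which erases all of its internal log and returns its scratch to zero; because $S$ affects the caller only through $y$, the surviving copy of $y$ is indistinguishable from a normal return value. Later, when the enclosing block unrolls and reaches this call, the needed intermediate information is gone, so one \emph{re-runs $S$ forward} from the stored inputs to regenerate it, uses the recomputed output to \proc{DestroyCopy} the held copy of $y$, and then unrolls $S$ once more. The body is thus executed a constant number of times (forward, backward, forward, backward), giving the constant-factor time overhead, while between the call and the final unroll only $x$ and $y$ are retained. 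The full reversibility of $S$ is what licenses the immediate unroll and the later exact replay; the ``only effect is the return value'' hypothesis is what guarantees the immediate unroll does not disturb the rest of the program. Your accounting of time and energy is otherwise fine, but without the immediate-unroll-and-recompute idea the stated space guarantee does not follow.
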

\begin{proof}
To do this, we use a slightly more complicated, two-step unrolling process. First, after the subroutine has initially run, we copy out the output and immediately unroll the subroutine. This copy of the output looks no different to the rest of the program from what would normally be computed, and we've already stipulated that the subroutine cannot alter the program through any other means. When it comes time to unroll the subroutine, we may have lost important logged information needed to take us from the output back to the input. At this point, we run the subroutine forward, recovering all of that needed information. Next we delete the copy of the output and unroll the subroutine normally. 
\end{proof}
\iffull
\subsection{Pointer Swapping}
Depending on how it is managed, pointer destruction will generally take energy. Many data structures liberally create and destroy pointers to various nodes or other pieces of data. In many cases, one can keep back-pointers or self-referential pointers which would allow them to simply be moved around. In general, moving pointers around has zero energy cost, and thus this is one simple way to reduce the energy complexity of many data structures.
\fi
\subsection{Data Structure Rebuilding}
When attempting to implement data structures which support insert and delete operations reversibly, we run into a new challenge. Often the insertion or deletion operation will create some amount of garbage data which is necessary to reverse it in the future. We also need the result of the operation to remain in place, so we cannot immediately reverse the operation. Thus over the life of the data structure, its size will depend on the total number of insert and delete operations, rather than just the number of elements in the data structure. To circumvent this we can use a technique we call \emph{periodic unwinding}. Note, this technique depends exceedingly on what is considered the data-structure and what is an algorithm that uses the data-structure. This is discussed more below and in Section~\ref{sec:DataStructures}.

\begin{theorem} \label{thm:periodic-unwinding}
If a data structure which allows reversible insertions, deletions, and traversals can be constructed reversibly from $k$ insertions in $O(k)$ time and space, and its operations can be performed reversibly with constant-factor overhead in time and space, then it can be maintained reversibly in amortized time with constant-factor overheads in space and time via \emph{periodic unwinding}.
\end{theorem}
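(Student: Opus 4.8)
The plan is a global-rebuilding argument, which is what \emph{periodic unwinding} amounts to here. We run the data structure normally, but every operation is logged (Section~\ref{sec:LoggingAndUnrolling}), so after $j$ operations have been performed since the last rebuild the structure carries $O(j)$ words of extra ``history'' garbage on top of its $O(m)$ words of genuine content, where $m$ is the current number of elements. We fix a threshold and trigger a \emph{rebuild} as soon as $j$ reaches $\Theta(m_0)$, where $m_0$ is the element count immediately after the previous rebuild; to keep the space bound tied to the current size we also rebuild whenever the element count drops below $m_0/2$, tracked by a second $O(1)$-word counter.

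A rebuild proceeds in three reversible phases. First, using the hypothesis that traversals are reversible with constant-factor overhead, we walk the structure and collect its current contents into a fresh list $L$ of $m$ words. Second, we \keyw{unroll} the entire recorded history since the last rebuild --- the $j$ logged operations together with the construction that produced the current baseline --- by reverse-executing it; since each operation is reversible with constant-factor overhead, this reclaims all $O(m_0+j)$ words of history garbage and returns that memory to its clean state, at zero energy and in $O(m_0+j)$ time. Third, we invoke the given $O(m)$-time reversible construction on $L$ to build a new copy of the structure, leaving only the $O(m)$ words of construction garbage, which we declare the new baseline, exactly as for a freshly built structure. The step needing care is the hand-off between the first two phases: we must order things so that no garbage is orphaned, and in particular so that we are not left holding a ``bare'' list of $m$ unrelated words that would cost $\Theta(mw)$ energy to erase. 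The clean way is to let the second phase itself emit the list of current elements as the natural byproduct of reverse-execution --- each reversed insertion hands its element back out, each reversed deletion consumes one, and unwinding to empty emits precisely the $m$ current elements --- so the first phase is subsumed and the third consumes exactly that list; alternatively one extracts $L$ by copying, unwinds, and then disposes of the now-redundant copy against the rebuilt structure with \proc{DestroyCopy}, using the reversible-subroutine trick of Theorem~\ref{RSubroutineThm} to avoid stale logs. Either way every sub-step is reversible, so a rebuild costs $0$ energy.

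For the accounting, a rebuild costs $O(m_0+j)+O(m)=O(m_0)$ time (the unwinding is $O(j)=O(m_0)$ and the reconstruction $O(m)=O(m_0)$, since $m$ and $m_0$ differ by $O(j)=O(m_0)$ and we rebuild on shrinkage), and it fires only once per $\Theta(m_0)$ operations, so a standard potential argument --- potential proportional to the number of operations since the last rebuild plus a term growing as the size moves away from $m_0$ --- charges $O(1)$ amortized extra time to each operation. Between rebuilds the structure uses $O(m_0+j)=O(m)$ words. Combined with the per-operation constant-factor overheads assumed in the hypothesis (and the $O(1)$-word counters, handled by the reversible control logic of Section~\ref{sec:control-logic} and the general for loop of Theorem~\ref{thm:generalfor}), this yields constant-factor overheads in amortized time and in space, and zero energy.

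The main obstacle I expect is exactly the reversibility and garbage-freeness of the rebuild procedure itself: tying the unwinding of the history to the extraction of the live contents so that the process starts and ends with no loose bits (no bare list, no stale log entries), correctly treating the elements returned by reversed insertions, and making the rebuild's own loops and counters reversible while keeping their overhead within constant factors. The amortized analysis and the space bound under deletions are routine once the rebuild is pinned down; getting the ``extract / unwind / reconstruct'' interface to be genuinely energy-free is the crux.
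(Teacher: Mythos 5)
Your proposal is correct and follows essentially the same route as the paper's proof: maintain counters, trigger a rebuild when the operation/deletion count since the last rebuild is proportional to the structure's size, then copy out the live contents via a reversible traversal, unroll the accumulated history to reclaim all garbage, and reconstruct from the copy in $O(m)$ time, charging the rebuild to the intervening operations. You are somewhat more explicit than the paper about the hand-off between extraction and unwinding (avoiding an orphaned bare copy), which is a genuine subtlety the paper's proof glosses over, but the underlying argument is the same.
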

\begin{proof}
If there are only $O(n)$ deletions, then we can simply log and unroll all of the operations. If not, we need to keep track of the number of insertions and deletions that have occurred because the last rebuilding. We will keep track of these counts and increment them as part of the insertion or deletion routine. We also track the number of elements in the data structure. Whenever a delete is called, we then check to see whether the number of deletions is more than twice the number of nodes in the data structure. If this is true, we will proceeded to rebuild the data structure. We perform a reversible traversal of the tree, with the addition that we make an extra copy of the inserted data at every node. Now that we have this copy, we can proceed to unroll the data structure, clearing the log. Once this is done we construct a new data structure with the same values as before the reversing, but with none of the accumulated garbage. Construction of the new data structure takes $O(n)$ time. To trigger a rebuilding, we must have called delete a larger number of times than the size of the data structure we are building. We charge the amortized constant cost per element being added to the new data structure to the number of deletes performed, giving us constant amortized time. Our counters and copies of the data all require $O(n)$ space, meaning we never use more than a constant-factor overhead in space. 
\end{proof}

With this method, after rebuilding and clearing the log, the data structure can no longer provide any information about past items which were deleted from the data structure. This is covered by the assumption that the algorithm interacting with the data structure makes copies of all of its inputs and if it needs them to reverse itself, it is responsible for maintaining that information. Thus, depending on how the data structure is being used, this technique can be superfluous or very powerful.

\section{Data Structures}
\label{sec:DataStructures}
Data structures are meant to be used in the context of an algorithm. In the standard model for algorithms, we can draw a nice abstraction between these two and analyze their properties separately. We also wish to do so in the energy complexity model; however, we need to be more careful about the responsibilities of the data structure and the algorithm for maintaining information and reversibility. First, we assume the data structure is only accessed through the prescribed operations; we don't want the algorithm irreversibly altering stored elements or manually altering the data structure in an unknown way. Second, if it is a reversible data structure, every operation must have a reverse operation. Third, when inserting, the algorithm gives a copy of the data to the data structure and is responsible for maintaining its own reversibility after an insert has been reversed. Fourth, the algorithm will handle zeroing of the bits of elements removed from the data structure. For this purpose, the common delete operation will be replaced with \emph{Extract$(x)$} which removes an element from a data structure and returns it; however, we will generally still call this operation delete. Fifth, for a reversible algorithm, it is responsible for making the correct calls to reverse functions to reset the data structure.

This is certainly not the only way to treat this interface. We could just as well require the data structure to remember all of the calls performed on it so it could reverse itself upon command. Similarly, we could imagine that a deleted item cannot be handed back to the algorithm, but must in some way be removed by the data structure, most likely when unrolling. We've chosen our conventions because it more closely matches our idea of how subroutines should work and because it is clearer to us how to analyze such cases.

\subsection{Stacks and Linked Lists}
\label{sec:LinkedLists}
\begin{theorem}
Doubly-linked lists can be implemented reversibly with constant-factor overheads in time and space.
\end{theorem}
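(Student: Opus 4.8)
The plan is to represent a doubly-linked list in the usual way --- each node storing a data field together with a \id{next} pointer and a \id{prev} pointer, optionally with a circular sentinel node so that insert-at-front, insert-at-back, and insert-after-a-node are all instances of the same operation --- and to exploit the fact that this representation is \emph{redundant}: the invariant $P.\id{next} = Q \iff Q.\id{prev} = P$ holds at all times. This redundancy is precisely what lets us avoid destructive pointer overwrites. Whenever we must change a pointer field that currently holds a value $v$, a second copy of $v$ is sitting somewhere else in the structure, so the change is realized as a \proc{DestroyCopy} (zeroing the field, using the known duplicate) followed by a \proc{Copy} (writing the new value into the now-zeroed field, using a handle we already hold); both are reversible word operations from Section~\ref{sec:WordRAM} with zero energy cost. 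Node allocation and deallocation are handled by a reversible free list (Theorem~\ref{thm:freelist}), which returns a pointer \id{ptrX} to a freshly zeroed block --- this pointer is the reversible handle used above.

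Concretely, to insert a new node $X$ (allocated, pointer fields zeroed, handle \id{ptrX}) after an existing node $P$ with successor $Q = P.\id{next}$, I would: (1) \proc{Copy} $P.\id{next}$ into $X.\id{next}$; (2) \proc{Copy} $P$'s address into $X.\id{prev}$; (3) \proc{DestroyCopy} $X.\id{prev}$ out of $Q.\id{prev}$, which zeroes it (legal because $Q.\id{prev}$ and $X.\id{prev}$ both equal $P$'s address at this point); (4) \proc{Copy} \id{ptrX} into $Q.\id{prev}$; (5) \proc{DestroyCopy} $X.\id{next}$ out of $P.\id{next}$, which zeroes it (legal because $P.\id{next}$ and $X.\id{next}$ both equal $Q$'s address); (6) \proc{Copy} \id{ptrX} into $P.\id{next}$. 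This is $O(1)$ reversible word operations and $O(1)$ scratch (the handles and $Q$'s address, all recoverable from \id{ptrX} through the list itself), with no energy cost. Deletion of $X$ is implemented by running this sequence backwards, interchanging each \proc{Copy} with a \proc{DestroyCopy}; one checks that the invariant makes every reversed step legal, and the node is returned to the caller with its pointer fields already zeroed, per the data-structure conventions of Section~\ref{sec:DataStructures} (the caller handles the data field and may return the block to the free list). Advancing a cursor \id{c} from $X$ to $X.\id{next}$ is the same idea: \proc{Copy} $X.\id{next}$ into a fresh temporary, \proc{Swap} it with \id{c}, then \proc{DestroyCopy} the temporary away using $\id{c}.\id{prev}$, which now holds the old value of \id{c}; retreating is symmetric, and both are $O(1)$ reversible operations.

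Accounting for the overheads: every operation runs in $O(1)$ reversible word operations (constant-factor time overhead over the standard implementation), uses only the same two pointer fields per node plus the reversible free list and $O(1)$ scratch cells (constant-factor space overhead), and costs zero energy since each step is a \proc{Copy}, \proc{DestroyCopy}, \proc{Swap}, \proc{Add}, or \proc{Sub}. I expect the main obstacle to be arguing reversibility of whole operations rather than of individual instructions: one must verify that the redundancy invariant is restored after each instruction so that every \proc{DestroyCopy} --- in the forward operation \emph{and} in its reverse --- actually sees matching operands, and that the temporaries introduced during a cursor move or deletion are zeroed out along a path the reverse code can retrace. Because the invariant $P.\id{next} = Q \iff Q.\id{prev} = P$ is local and is reestablished after each of the six surgery steps, this verification is routine bookkeeping rather than a genuine difficulty.
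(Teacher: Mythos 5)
Your proposal is correct and takes essentially the same route as the paper: both implement insertion and deletion as a constant-length sequence of reversible pointer moves, \proc{Copy}s, and \proc{DestroyCopy}s whose legality rests on the redundancy between \id{next} and \id{prev} fields, with deletion obtained by running the insertion surgery backwards. Your write-up is somewhat more explicit about verifying each \proc{DestroyCopy} against the invariant; the paper additionally notes that searching by value uses protected-if comparisons, which your cursor-advance mechanism covers in substance.
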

\iffull
\begin{proof}
Doubly linked lists need to support adding, removing, and finding items. To search for an item $x$, one begins at one end of the list, performs a comparison of $x$ and the next value in the data structure, and then either returns a match or follows a pointer. Match-checking and following pointers can be done reversibly. In this case, we change neither $x$ nor the values in the data structure, so we can use a Protected If to perform our comparisons. This leaves us with a constant factor overhead in time and space.

To add an element in position $i$, we follow pointers in the list until we reach the element at position $i-1$. Our counter can be kept reversibly as was done for For Loops \iffull\ref{sec:forloops}\fi \ifabstract\ref{sec:control-logic}\fi. We move the forward pointer from element $i-1$ into our new element, and then similarly move the back-pointer from element $i$ into our new element. Next we copy the pointer to our new element into the forward pointer slot of element $i-1$ and into the back-pointer slot of element $i$. We then destroy our copy of the pointer to the new element. Since moves, copies, and destroy copies are all reversible, we incur only constant factor overheads in insertion of a new item.

Removing an element from a data-structure is very similar to adding one, and also only requires moves, copies, and delete-copies. To remove element $i$, we delete the copies of it's pointer which elements $i-1$ and $i+1$ contain. We then copy the forward and back pointers from elements $i-1$ and $i+1$ into the appropriate positions. The pointer to our removed element is then handed back to the algorithm which called the remove item method on our data structure.

\end{proof}
\fi

\begin{corollary}
Stacks, queues, and dequeues can be implemented reversibly with constant-factor overheads in time and space.
\end{corollary}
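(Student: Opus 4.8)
The plan is to observe that a stack, a queue, and a deque are each simply a doubly-linked list whose insertions and deletions are confined to its two ends, and then to invoke the preceding theorem. First I would fix the representation: maintain the list together with a \emph{head} pointer and a \emph{tail} pointer, exactly as in the doubly-linked-list construction (a stack needs only one of them). A \keyw{push} or \keyw{enqueue} is then the ``add element in position $i$'' routine of the previous theorem specialized to $i = 0$ or $i = n$; because we already hold the relevant end pointer, we never traverse the list, so there is no loop counter to maintain and the operation runs in $O(1)$ time with $O(1)$ extra space. A \keyw{pop} or \keyw{dequeue} is the corresponding ``remove element'' routine specialized to an end, again in $O(1)$ time, with the removed node handed back through \proc{Extract} as prescribed by our data-structure interface.

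Second I would check that all the pointer manipulations stay reversible. Each of these specialized operations only rewires a constant number of next/prev links on the (at most two) neighbouring nodes and the head/tail pointers, and the doubly-linked-list theorem already shows this is done entirely with moves, \proc{Copy}, and \proc{DestroyCopy} --- all reversible --- so the inverse operation simply replays these steps in reverse, and the whole structure incurs only a constant-factor overhead in time and space with zero energy. A deque is obtained by providing both the front and the back versions of push and pop; a queue by providing enqueue at the tail and dequeue at the head; a stack by providing both at one end.

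The one genuine subtlety, and the step I expect to require the most care, is the empty-list boundary. Pushing onto an empty structure must set both \emph{head} and \emph{tail} from a null value to the new node, and popping the last element must reset them; naively overwriting a pointer is irreversible. This is resolved by guarding these updates with a \emph{protected} \keyw{if} on the current size (or on whether \emph{head} is null), so by Theorem~\ref{thm:if} the branch costs no auxiliary space, and by keeping the transitions symmetric --- the reverse of ``first insertion'' is exactly ``last deletion'' --- so no extra information about emptiness needs to be logged. With that case handled, correctness and the claimed complexity bounds follow immediately from the doubly-linked-list theorem.
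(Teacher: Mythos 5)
Your proposal is correct and follows the same route the paper intends: the corollary is stated without proof precisely because stacks, queues, and deques are doubly-linked lists with insertions and deletions restricted to the ends, so the preceding theorem applies directly. Your additional care about the empty-list boundary (handled via a protected \keyw{if}) is a reasonable elaboration of a detail the paper leaves implicit.
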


\subsection{Dynamic Arrays}
\iffull
Dynamic arrays store elements in an ordered array that grows and shrinks depending on the number of elements currently being stored. They support the basic operations \proc{add}$(e)$, \proc{delete}$()$, and \proc{query}$(i)$. \proc{add} and \proc{delete} append or remove an element from the end of the array respectively. \proc{query} returns the element stored at a specific index of the array. Dynamic arrays can also support the optional operations \proc{add}$(i, e)$ and \proc{delete}$(i)$ which specify at which index to add or delete an element, adjusting all elements stored after that index accordingly.

A dynamic array stores a $\id{length}$ attribute and a $\id{size}$ attribute. The $\id{length}$ attribute represents the number of active elements are in the array. The $\id{size}$ attribute represents the amount of space allocated for the array. $\id{length}$ is incremented and decremented corresponding to \proc{add} and \proc{delete} operations. Upon an \proc{add}, if $\id{length}=\id{size}$, we double the array size by allocating new memory, copying over the elements and deallocating the old array. Similarly upon a \proc{delete}, if $\id{length}=\id{size}/4$, we halve the array size by deallocating the second half. 
\fi
\begin{theorem} \label{thm:dynamic-array}
Dynamic arrays can be implemented reversibly with constant time and space overhead with an extra bit of space for \proc{add}/\proc{delete} operations. Size of the structure grows with the number of \proc{add}/\proc{delete} operations.
\end{theorem}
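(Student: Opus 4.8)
The plan is to take the standard dynamic array --- storing \id{length} active elements in a contiguous block of \id{size} allocated slots, doubling \id{size} when \id{length} reaches \id{size}, and halving \id{size} when \id{length} drops to \id{size}$/4$ --- and verify that each of its three operations can be carried out using only reversible primitives together with at most one logged bit per \proc{add}/\proc{delete}. The easy case is \proc{query}$(i)$: it only reads slot $i$ and returns its contents into an initially-empty output variable via \proc{Copy}, which is reversible (inverted by \proc{DestroyCopy}), costs zero energy, runs in $O(1)$ time and $O(1)$ space, and leaves the structure untouched.

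Next, \proc{add}$(e)$, where (per the data-structure conventions of Section~\ref{sec:DataStructures}) the algorithm hands the structure a copy of $e$. We first test whether \id{length} $=$ \id{size} and push the one-bit result $b$ onto the log stack (Section~\ref{sec:LoggingAndUnrolling}); if $b$, we reversibly allocate a fresh block of $2\cdot\id{size}$ slots using the free-list mechanism of Theorem~\ref{thm:freelist}, \emph{move} (not copy) the \id{length} stored elements into it --- moves cost zero energy --- reversibly deallocate the now-zeroed old block, and update \id{size}. We then copy $e$ into slot \id{length} (a known-empty slot) and increment \id{length}. To invert \proc{add}$(e)$ we decrement \id{length}, clear slot \id{length} against the algorithm's retained copy of $e$ via \proc{DestroyCopy}, pop $b$, and if $b$ undo the doubling by the mirror-image sequence of reversible moves and (de)allocations. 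The logged bit is exactly what is needed: after decrementing \id{length} one cannot otherwise distinguish a post-doubling state, in which \id{length} $=$ \id{size}$/2$, from an ordinary state that merely happens to satisfy \id{length} $=$ \id{size}$/2$. \proc{delete}$()$ is symmetric: decrement \id{length}, \proc{Extract} the element from slot \id{length} and return it to the algorithm (which, per our conventions, is responsible for zeroing it), push a bit recording whether \id{length} $=$ \id{size}$/4$ triggered a halving, and if so perform the reversible reallocation to \id{size}$/2$ slots. The optional index-parametrized \proc{add}$(i,e)$ and \proc{delete}$(i)$ additionally shift the array suffix by one position, which is a sequence of reversible moves in $O(\id{length})$ time.

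For the cost accounting: \proc{query} has constant-factor time and space overhead and no extra space; \proc{add} and \proc{delete} each append one bit to the log, giving the claimed extra bit per \proc{add}/\proc{delete} and the statement that the structure's total size grows with the number of \proc{add}/\proc{delete} operations performed rather than with the current element count; and \textbf{build} is simply $n$ successive \proc{add}s, yielding $O(n)$ time, $O(n)$ space, and $0$ energy. Every word operation used is reversible, and the only bits produced (the comparison results) are immediately logged, so the energy cost is $0$. The time bound is the usual potential-function argument, and the one place where reversibility forces a design choice is the shrink threshold: I would keep the classical \id{size}$/4$ (rather than \id{size}$/2$) rule so that alternating \proc{add}/\proc{delete} near a resize boundary cannot force a reallocation on every operation, preserving $O(1)$ amortized time.

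The main obstacle is precisely the reversal of a resizing \proc{add}/\proc{delete}: one must check both that the reallocation decomposes into reversible steps --- allocation via Theorem~\ref{thm:freelist}, elementwise moves, and deallocation of a zeroed block --- and that the single logged bit genuinely disambiguates the reverse control flow despite \id{size} having changed, since without it the inverse operation is not well defined. The remainder is bookkeeping layered on the already-established reversible primitives for control logic, loops, and memory management.
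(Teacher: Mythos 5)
Your proposal is correct and takes essentially the same approach as the paper: the central point in both is that each \proc{add}/\proc{delete} logs a single bit recording whether a resize occurred, because the resulting $\id{length}$/$\id{size}$ pair alone cannot distinguish a post-resize state from an ordinary one, and all remaining steps reduce to reversible primitives. You simply spell out the resize mechanics (free-list allocation, element moves, deallocation) in more detail than the paper does.
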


\begin{proof}
We now consider how to handle these operations reversibly. Because both \proc{add}/\proc{delete} operations work at the end of the array, we check the $\id{length}$ attribute to find where to perform the reverse operation. For \proc{reverse-add}, we remove the element and decrement $\id{length}$. And for \proc{reverse-delete}, we must log the deleted element to add it back and increment $\id{length}$. We must also consider how to handle table doubling. On an \proc{add}/\proc{delete} operation, table doubling (or halving) occurs based on the result of a single comparison of $\id{length}$ and $\id{size}$. We can log a single bit representing the result of this comparison for each \proc{add}/\proc{delete} operation that will indicate whether a table doubling (or halving) needs to be reversed.

This bit is necessary in order to undo table doubling because we can not determine whether a table doubling operation occurred just by looking at the resulting $\id{length}$ and $\id{size}$ attributes. For example, consider a table with length $n$ and size $2n$ where the last operation was an \proc{add}. This state could have been reached in two ways. (1) length $n-1$, size $n$ incurring a table doubling; (2) length $n-1$, size $2n$.

We maintain the dynamic array to preserve the order and length of its elements in the reverse direction, thus a \proc{reverse-query} operation can be run in the same way as a \proc{query} operation by simply making the same query again. Periodic rebuilding of this data structure follows from Theorem \ref{thm:periodic-unwinding} because all operations are reversible with constant factor overhead and rebuild can be done in linear time.
\end{proof}

\subsection{AVL Trees} \label{sec:avl-trees}
Using and maintaining standard AVL trees incurs an energy cost proportional to the time of the associated operations.

\begin{theorem} \label{thm:avl-trees-search}
\proc{Search}$(x)$ can be performed on standard AVL trees in $\Theta(\lg n)$ time, $O(1)$ auxiliary space and $O(\lg n)$ energy.
\end{theorem}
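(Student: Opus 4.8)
The plan is to analyze the ordinary top-down search procedure and to account for exactly where, and how few, bits it is forced to destroy. A \proc{Search}$(x)$ on an AVL tree of $n$ nodes walks one root-to-node path: it keeps a single pointer $p$ to the ``current'' node, at each step compares $x$ against the key stored at $p$, and, depending on whether $x$ is less than, equal to, or greater than that key, either halts and returns the stored value or replaces $p$ by the appropriate child pointer. Since an AVL tree on $n$ nodes has height $\Theta(\lg n)$, the loop runs $\Theta(\lg n)$ times with $O(1)$ work per iteration, which gives the $\Theta(\lg n)$ time bound immediately; it remains to bound the auxiliary space and the energy.

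For space, the subtlety is that the procedure must move $p$ down $\Theta(\lg n)$ times while using only $O(1)$ words, and a naive overwrite of $p$ by a child pointer would cost $w$ energy (and would also lose information needed to keep the surrounding framework reversible). I would first observe that following a child pointer downward can be done at zero energy and $O(1)$ space: either parent pointers are present (as standard AVL implementations keep them for rebalancing), in which case the old value of $p$ is recoverable as the parent field of the new node, so a \proc{Copy} of the child pointer into a fresh register followed by a \proc{DestroyCopy} of the old register moves $p$ for free; or one retains the $\Theta(\lg n)$-bit string recording the left/right decisions, which fits in $O(1)$ words because $w = \Omega(\lg n)$. Either way the working set stays $O(1)$ words.

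For energy, I would account each iteration in turn: \proc{Copy} the current node's key into a scratch register (free, since the target is empty), perform the three-way comparison, producing $O(1)$ fresh result bits (free, since the outputs are new), branch on those bits, then \proc{DestroyCopy} the scratch key back to zero (free, since it still equals the stored key) and move $p$ as above. The only operation that is not free is discarding the $O(1)$ comparison bits once the branch has been taken, costing $O(1)$ energy per level and hence $O(\lg n)$ over the whole path; the returned value is then produced by a \proc{Copy} of the found node's value into a fresh output register, which is again free. Summing gives $\Theta(\lg n)$ time, $O(1)$ auxiliary space, and $O(\lg n)$ energy.

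The main obstacle is precisely the tension between the $O(1)$-space requirement and the need to ``remember'' $\Theta(\lg n)$ pointer hops: one has to argue that the hops are individually reversible, so that no whole word must be remembered per level, and that the only genuinely irreversible residue per level is the constant-size comparison outcome rather than an entire garbage word. (The reversible-AVL variant of this theorem will instead retain those comparison bits on the way down and uncompute them by re-comparing along a walk back up, driving the energy to $0$; the ``standard'' search simply discards them, which is the source of the $O(\lg n)$ here.)
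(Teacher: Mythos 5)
Your proposal is correct and follows essentially the same route as the paper's proof: descend the $\Theta(\lg n)$-height root-to-node path keeping only an $O(1)$-word working set, charge zero energy to pointer following, and charge $O(1)$ energy per level for discarding the irreversible comparison outcome, summing to $O(\lg n)$. Your extra care about how the pointer hop itself avoids a $w$-bit overwrite is a more explicit justification of what the paper simply asserts (``following pointers incurs no energy cost''), but it is the same argument.
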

\iffull
\begin{proof}
To search an AVL tree we follow a pointer to the root of the tree and compare our value to the value in the current node. If it is the same we have found the value. If it is smaller we follow the left child pointer in the tree and repeat the comparison, if it is larger we follow the right child pointer. We must check that those pointers are not null, and if they are we terminate knowing the item is not in the tree. Following pointers incurs no energy cost, but each irreversible compassion takes one bit of energy. AVL trees are guaranteed to have a max depth of $O(\lg n)$ and thus we use $O(\lg n)$ time and energy. Since we need only maintain the pointer to the current node and the value we are searching for there is only a constant amount of space needed.
\end{proof}
\fi

\begin{theorem} \label{thm:avl-trees-insert}
\proc{Insert}$(x)$ can be performed on standard AVL trees in $\Theta(\lg n)$ time, $O(1)$ auxiliary space and $O(w\lg n)$ energy.
\end{theorem}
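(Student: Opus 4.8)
The plan is to decompose a standard AVL-tree insertion into its three textbook phases --- the root-to-leaf search for the insertion point, splicing in the new leaf, and the bottom-up retrace that updates heights (or balance factors) and performs at most one rotation --- and to charge energy to each phase separately, exploiting that an AVL tree on $n$ elements has height $\Theta(\lg n)$.

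First I would dispatch the search phase. Locating the leaf position at which $x$ belongs is exactly a \proc{Search}$(x)$ that falls off the bottom of the tree, so by Theorem~\ref{thm:avl-trees-search} it costs $\Theta(\lg n)$ time, $O(1)$ auxiliary space, and $O(\lg n)$ energy, the latter being one destroyed bit per irreversible comparison along the path. Allocating the new leaf and writing its child/parent pointers into previously empty (underlined) fields is free by the Word RAM conventions, as is initializing its height field to a known constant.

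Next I would analyze the retrace. Walking back toward the root via parent pointers --- which keeps auxiliary space at $O(1)$ rather than requiring an explicit path stack --- we touch $O(\lg n)$ ancestors, and at each we recompute the stored height (or balance factor) from the children and overwrite the old value, costing at most $w$ bits per node, hence $O(w\lg n)$ in total. The loop control and the comparisons deciding whether to keep ascending, stop, or rotate are ordinary control logic destroying only $O(1)$ bits per iteration, negligible against $O(w\lg n)$. When an imbalance is found, a single or double rotation rearranges $O(1)$ pointers (free moves) and overwrites the heights of $O(1)$ affected nodes for $O(w)$ energy, after which the retrace terminates because a rotation restores the subtree to its pre-insertion height. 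Summing the three phases gives $O(\lg n) + O(w\lg n) + O(w) = O(w\lg n)$ energy; the running time is $\Theta(\lg n)$ since each phase traverses a root-to-leaf path a constant number of times; and only a constant number of node pointers and scratch words are live at any moment, so auxiliary space is $O(1)$.

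The only step needing real care is the energy accounting for the retrace: one must verify that the standard algorithm overwrites $O(\lg n)$ word-sized fields and no more, that the rotation machinery does not secretly destroy more than $O(1)$ words, and that the $O(1)$ auxiliary-space bound is genuinely met by using parent pointers instead of an $\Theta(\lg n)$-word path stack. A minor secondary observation closes the argument: the $O(\lg n)$ search energy is dominated by $O(w\lg n)$, so no finer bookkeeping is needed there.
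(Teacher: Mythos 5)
Your proof is correct and reaches the same bounds, but your energy accounting differs from the paper's in an instructive way. The paper charges the dominant $O(w\lg n)$ term to rotations: it bounds the number of rotations per insertion by $O(\lg n)$ and charges each one $O(w)$ for the creation and deletion of at least one pointer (noting, as you do, that pointer swapping would reduce this). You instead use the sharper textbook fact that an AVL insertion performs at most \emph{one} single or double rotation --- since a rotation restores the subtree to its pre-insertion height and halts the retrace --- and locate the $O(w\lg n)$ cost in the overwriting of the $O(\lg n)$ height/balance-factor fields along the retrace path, with the lone rotation contributing only $O(w)$. Both accountings are legitimate descriptions of where the standard algorithm destroys bits and both yield $O(w\lg n)$; yours is arguably the more faithful picture of the standard algorithm (the paper's $O(\lg n)$-rotations bound is the one appropriate to deletion, not insertion, though it is still a valid upper bound), while the paper's version makes the subsequent ``use pointer swapping to drop to $O(\lg n)$ energy'' remark more natural. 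Your explicit justification of the $O(1)$ auxiliary-space bound via parent pointers rather than a path stack is a detail the paper leaves implicit and is worth having.
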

\iffull
\begin{proof}
To insert a new element, we perform comparisons and follow child pointers, as with a search, until we reach a null pointer. We replace this with our item to be inserted, potentially deleting the value previously in the memory address holding the pointer. The tree may now need to be rebalanced. Every insertion may incur up to $O(\lg n)$ rotations. The details of the four different cases for rotation can be seen in \cite{CLRS} and the time analysis remains the same. For energy, we note that all rotations in the standard AVL algorithm involve a constant number of comparisons, each costing $1$ unit of energy, and the creation and deletion of at least one pointer, each costing $O(w)$ energy. This leads to a total of $O(w\lg n)$ energy. If we are slightly more clever and use pointer swapping techniques, the energy drops to a constant amount per rotation and comparison, resulting in $O(\lg n)$ energy. 
\end{proof}
\fi

\begin{theorem} \label{thm:avl-trees-delete}
\proc{Delete}$(x)$ can be performed on standard AVL trees in $\Theta(\lg n)$ time, $O(1)$ auxiliary space and $O(w\lg n)$ energy.
\end{theorem}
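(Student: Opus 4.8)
The plan is to follow the textbook AVL deletion procedure and charge its time, space, and energy contributions stage by stage, reusing the accounting conventions from the proofs of Theorems~\ref{thm:avl-trees-search} and~\ref{thm:avl-trees-insert}: following a child or parent pointer is free, each irreversible comparison costs one bit, and each overwrite of a word-sized pointer or key slot destroys $O(w)$ bits of information.

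First I would locate the node $v$ holding $x$ by the usual root-to-leaf search, following child pointers and doing one irreversible comparison per level; as in Theorem~\ref{thm:avl-trees-search} the depth is $O(\lg n)$, so this phase is $O(\lg n)$ time, $O(\lg n)$ energy, and only $O(1)$ auxiliary space for the running pointer and the search key. Next I would reduce to the easy case: if $v$ has two children, walk to its in-order successor $s$ (the leftmost node of $v$'s right subtree), move $s$'s key into $v$, and delete $s$ instead, noting that $s$ has at most one child. That walk is another $O(\lg n)$ pointer-following steps with $O(1)$ state; the key move costs $O(w)$ energy (or $O(1)$ if one swaps it through a scratch word), and splicing out the at-most-one-child node costs another $O(w)$ to redirect its parent's child-pointer slot.

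Then I would handle rebalancing: starting from the parent of the removed node and climbing toward the root via parent pointers, update the stored balance factors and perform a rotation wherever a node goes out of balance. The key difference from insertion is that a deletion can cascade rotations all the way up, i.e., $\Theta(\lg n)$ of them in the worst case, rather than $O(1)$; but each rotation still touches only a constant number of nodes, does $O(1)$ balance comparisons (one bit each), and rewrites $O(1)$ child/parent pointer slots ($O(w)$ each), exactly as in the analysis behind Theorem~\ref{thm:avl-trees-insert}. Summing over the $O(\lg n)$ levels this pass adds $O(\lg n)$ time, $O(1)$ auxiliary space, and $O(w\lg n)$ energy. Combining the three phases yields $\Theta(\lg n)$ time, $O(1)$ auxiliary space, and $O(w\lg n)$ energy; the zeroing of the removed node's contents is charged to the caller through the \proc{Extract} convention rather than to the data structure, matching the conventions of Section~\ref{sec:DataStructures}.

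The step I expect to be the main obstacle is keeping the auxiliary space at $O(1)$ while still being able to rebalance: the rebalancing pass must travel back up toward the root, which is only affordable without an $O(\lg n)$-word ancestor stack if each node stores a parent pointer, the same assumption implicit in Theorems~\ref{thm:avl-trees-search} and~\ref{thm:avl-trees-insert}. With parent pointers in place the upward climb and all pointer surgery use only $O(1)$ extra space, and the energy bound is then dominated, as claimed, by the $\Theta(\lg n)$ word-sized pointer overwrites incurred across the rotations and the final splice.
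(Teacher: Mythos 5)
Your proposal is correct and follows essentially the same route as the paper's proof: a root-to-leaf search charged as in the search theorem, a splice costing $O(w)$ energy, and a rebalancing pass of up to $O(\lg n)$ rotations each destroying $O(w)$ bits through pointer overwrites, for $O(w\lg n)$ total. You are somewhat more explicit than the paper about the two-child/successor case and about needing parent pointers to keep the auxiliary space at $O(1)$, and your parenthetical about swapping through a scratch word matches the paper's closing remark that pointer swapping reduces the energy to $O(\lg n)$.
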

\iffull
\begin{proof}
To delete an item we first perform a search on the tree for the element to be deleted, incurring the associated costs. The element and the pointer to it are then deleted at a cost of $w$ units of energy. We may then need to rebalanced the tree. Just as with insertion, re-balancing after a deletion may incur up to $O(\lg n)$ rotations. The standard AVL algorithm involves the creation and deletion of a pointer during a rotation, each costing $O(w)$ energy. If we are slightly more clever and use pointer swapping techniques, the energy drops to a constant amount per rotation and comparison, resulting in $O(\lg n)$. 
\end{proof}
\fi

We will show that, provided only \proc{Search} and \proc{Insert} operations are invoked, reversible AVL trees can be maintained with only constant-factor auxiliary space consumption. If \proc{Delete} is to be invoked, then the structure will accumulate an extra $\Theta(k)$ words of space for $k$ \proc{Delete} operations invoked over the lifetime of the tree. Such space consumption can still be made reasonable within the context of a larger algorithm, provided that runs of \proc{Insert} and \proc{Delete} form a small part of the algorithm, and are unwound periodically to refresh log space.

Note that because these algorithms employ only conditional branches which do not modify their conditions (for example, in \proc{Search} when comparing a value against a node of the tree to choose a branch, we leave the value of the comparison intact post-search), they are completely reversible with no logging penalty.

\begin{theorem} \label{thm:avl-trees-search-reversible}
\proc{Search}$(x)$ can be performed on reversible AVL trees in $\Theta(\lg n)$ time, $O(1)$ auxiliary space and 0 energy.
\end{theorem}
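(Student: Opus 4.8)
The plan is to mirror the analysis of \proc{Search} on standard AVL trees (Theorem~\ref{thm:avl-trees-search}), but to replace each irreversible step by a reversible counterpart of the same asymptotic cost and then check that nothing needs to be logged. Recall that a search descends from the root; at each visited node $u$ it compares the query key $x$ against $u$'s key to decide whether to stop (equal), follow the left child pointer (smaller), or follow the right child pointer (larger), halting when the key is found or a null child is reached. Since an AVL tree on $n$ keys has height $\Theta(\lg n)$, this performs $\Theta(\lg n)$ pointer-following steps, which already cost no energy in our model. Hence the only places an irreversible event could arise are (i) the key comparisons and (ii) the control flow of the descent loop, and the job is to dispatch both.

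For (i), every comparison is \emph{protected}: the branch it governs (choosing a direction, or exiting the loop) alters neither $x$ nor the stored key, so by Theorem~\ref{thm:if} it can be realized by a paired branch with a matching \keyw{comefromif} at zero energy and zero auxiliary space; in particular the extra bit produced by the reversible comparison is uncomputed at that \keyw{comefromif} for free, because $x$ and the node key are still in place. For (ii), the descent is a loop whose control is protected and internal — the role of the loop variable is played by the ``current node'' pointer, which no iteration mutates except by advancing, and the termination test only inspects node contents — so Theorems~\ref{thm:simplefor} and~\ref{thm:protectedfor} give a reversible implementation with constant-factor time overhead and no extra space (the reverse of ``descend until found/null'' being ``ascend until the root'', which is recognizable). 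Finally, the result of the search — a pointer to the located node, or a distinguished ``not found'' value — is returned by copying it (\proc{Copy}) into the caller's empty output variable, which is reversible and free; per our data-structure conventions the caller is then responsible for its own bookkeeping with respect to that returned copy.

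The step I expect to need the most care is maintaining the single current-node pointer reversibly within $O(1)$ space: naively, advancing $p$ to $p.\mathit{left}$ overwrites $p$, destroying $w$ bits, and a plain AVL node stores no back-reference from which the parent's address could be recomputed. The clean fix, which I would adopt, is to take reversible AVL trees to be augmented with parent pointers (a constant-factor space overhead that leaves every other cost unchanged, and which is in any case needed to make the rotations of reversible \proc{Insert}/\proc{Delete} pointer-swappable later); then advancing and retreating along the search path are mutually inverse pointer reads, so the whole descent runs in $O(1)$ working space with zero energy. (Alternatively one can exploit that \proc{Search} is a pure function of the unmodified tree: after copying out the answer, re-running the descent regenerates and then cancels the scratch.) Either way the costs match Theorem~\ref{thm:avl-trees-search} verbatim except that the energy term drops to $0$, giving $\Theta(\lg n)$ time, $O(1)$ auxiliary space, and $0$ energy as claimed.
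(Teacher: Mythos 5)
Your proposal is correct and matches the paper's proof in all essentials: the paper likewise assumes the tree is built from two-way nodes (your parent-pointer augmentation), uncomputes each comparison bit after the jump by re-comparing against the node just left (your protected-branch observation), and reverses the descent at the end to clear any remaining scratch. The costs you derive are identical, so this is the same argument.
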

\begin{proof}
Provided that our reversible AVL tree is constructed using two-way nodes, performing \proc{Search}$(x)$ reversibly is straightforward. Upon reaching a node $v$ in the tree, we compare its value with $x$ and use the resulting bit to determine whether to jump left or right. After jumping, as we have maintained in memory a pointer back to $v$, we can compare $x$ to $v$ again to destroy this bit and free the space gain. Once the final node is reached and our answer is found or determined to be absent, we can log our result somewhere and reverse our computation to destroy any remaining garbage bits. This procedure uses constant auxiliary space, and produces our answer reversibly in $O(\lg n)$ time.
\end{proof}

\proc{Insert} is the next operation we address. It includes the task of reversibly rebalancing the tree, a slightly more complicated task than that of \proc{Search}.

From a given tree, there may be multiple legal trees that underwent a different rotation to produce it. Thus, if we didn't store any auxiliary information about the AVL rotations as we performed them, it would not be possible to immediately reverse a tree's configuration. A key insight into the space consumption of this process is to note that, for a tree containing $n$ unique elements, each of those elements must occupy at least $\Omega(\lg n)$ bits of space each on average (in the word model in particular, each element takes a constant $w = \Theta(\lg n)$ bits of space). Thus, we must store a $\Theta(\lg n)$-sized entry to a rotation log for each inserted element. The space cost of this logging is absorbed into the space cost of the element's value in the tree itself. This is the premise of the following theorem:

\begin{theorem} \label{thm:avl-trees-insert-reversible}
\proc{Insert}$(x)$ for $x$ not yet present in the reversible AVL tree can be performed in $\Theta(\lg n)$ time, preserving the $\Theta(n)$ space cost of the tree and using 0 energy.
\end{theorem}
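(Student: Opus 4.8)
The plan is to mirror the standard AVL insertion algorithm, replacing each primitive step with a zero-energy reversible counterpart, and to record just enough auxiliary information --- a logarithmic-size ``rotation record'' per call --- to make the whole operation invertible; the point is that this record fits inside the $\Theta(\lg n)=\Theta(w)$ bits the freshly inserted element already costs, so the structure stays $\Theta(n)$ words. First I would handle the descent: as in Theorem~\ref{thm:avl-trees-search-reversible}, the tree uses two-way (parent-pointer) nodes, so we walk from the root comparing each node's key with $x$, branch on the resulting bit, and then uncompute that bit by re-comparing $x$ against the node we just came from (whose address we still hold). This reaches, in $O(\lg n)$ time and $O(1)$ auxiliary space with no destroyed bits, the null child slot where $x$ belongs.

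Next, attach the leaf: allocate a node reversibly (it arrives zeroed), then \proc{Copy} $x$ into its key field, \proc{Copy} the held parent address into its parent field, and \proc{Copy} the new node's address into the parent's previously-null (underlined) child field; every write is into a zeroed slot, hence costs zero energy, and the height/balance field is initialized the same way. Then retrace and rebalance: walking back up the parent pointers, the subtree height along this path changes by at most $1$ at each ancestor, and which case holds is determined by a local comparison of the two children's heights, so each stored height is updated by a reversible $\mathrel+= 1$ / $\mathrel-= 1$ (or left alone) --- reversible and self-describing given the tree --- stopping at the first unchanged ancestor, or performing the $O(1)$ rotations of standard AVL insertion at the lowest unbalanced ancestor $z$. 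A rotation re-wires only $O(1)$ edges; since with two-way nodes every edge is recorded at both endpoints, we can carry out the pointer-swapping argument of Section~\ref{sec:EnergyReductionTechniques} to its conclusion, moving each affected pointer (or removing it with \proc{DestroyCopy} against its sibling copy and re-\proc{Copy}ing) rather than overwriting it destructively, and fixing the $O(1)$ affected balance fields with reversible $\pm 1,\pm 2$ updates. Nothing is destroyed, so the energy is $0$ and the total time is $\Theta(\lg n)$.

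The crux --- and the step I expect to be the main obstacle --- is invertibility together with the space accounting. The warning in the text is exactly the issue: from the resulting tree alone one cannot tell which rotation, if any, was applied, and ``detach the last-inserted leaf'' is not valid because after a rotation $x$ need not be a leaf. So the reverse routine needs (i) the fact that this \proc{Insert} occurred, supplied by the calling algorithm (which by our data-structure conventions also still holds its own copy of $x$), and (ii) a record naming which of the rebalancing outcomes $\{\text{no rotation, single-L, single-R, double-LR, double-RL}\}$ occurred and how far the height change propagated --- $O(\lg n)$ bits, \proc{Copy}ed onto the structure's log at the moment the branch is chosen, hence reversibly and at zero energy. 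I would then exhibit the reverse routine: re-descend for $x$, walk up undoing the $\pm 1$ height updates, and upon reaching the logged rotation point undo the constant-size pointer/balance rewiring, finally detaching and deallocating the node holding $x$; one checks this restores the pre-insertion configuration bit-for-bit.

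Finally, the space bound: each \proc{Insert} adds one element ($\Theta(w)$ bits) together with one $O(\lg n)$-bit rotation record, and since $\lg n = \Theta(w)$ the per-insert space is $\Theta(w)$, so after any number of insertions the structure remains $\Theta(n)$ words --- this is precisely the sense in which the log is ``absorbed'' into the element's own footprint, and it is what distinguishes \proc{Insert} from \proc{Delete}. (If a run of \proc{Insert} operations is later to be unwound, Theorem~\ref{thm:periodic-unwinding} can be invoked, but it is not needed for the statement here.) Assembling the three phases gives $\Theta(\lg n)$ time, $\Theta(n)$ space, and $0$ energy.
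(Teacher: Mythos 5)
Your proposal is correct and follows essentially the same route as the paper's proof: a reversible \proc{Search}-style descent, reversible rebalancing, and a per-insertion rotation record of $O(\lg n)$ bits whose cost is absorbed into the $\Theta(\lg n)=\Theta(w)$ bits the new element itself occupies, preserving $\Theta(n)$ words overall. The only cosmetic difference is the encoding of the record (the paper logs two bits per level of the tree, 01/10/00 for right/left/no rotation, whereas you log the rotation case plus the propagation height), which does not change the argument or the bounds.
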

\begin{proof}
Insertion consists of traversing the reversible AVL tree, adding the new element to the tree, and making any rotations that are necessary to balance the tree.

Traversing the tree can be done reversibly as in \proc{Search}$(x)$, and we refer to its proof for reversibility. Once we know where $x$ is to be added into the tree, we create a new node for it and proceed to rebalance the tree.

During the balancing step, rotations begin at the lowest level and proceed upward. To perform our operations reversibly, we will keep a log of every rotation performed at each of the $\lg n$ levels of the tree. For each level, we will store 01 for a right-rotation, 10 for a left-rotation, and 00 for no rotation. By keeping this log, we have enough information to go in the reverse direction, proceeding from the top of the tree to its bottom and checking $x$'s value against those of the encountered nodes to progress. In this way, we keep our \proc{Insert}$(x)$ action reversible.

Each log entry need only store a number of bits proportional in size to the maximum height of the tree. Because there are $n$ unique entries in the tree at any given time, each call to \proc{Insert} incurs only $O(\lg n)$ bits of space cost. Thus our log, which stores $\Theta(\lg n)$ additional bits per element, results in only a constant-factor increase in the space consumption of the tree. This holds even if deletions from the tree have also been performed, as inserting into a tree will always grow the tree's space consumption asymptotically by $\Omega(\lg n)$ bits per insertion, while the log size will grow by $O(\lg n)$ bits per insertion. The space consumption of the tree is thus preserved to within constant factors.
\end{proof}

\begin{theorem} \label{thm:avl-trees-delete-reversible}
\proc{Delete}$(x)$ can be incorporated into reversible AVL trees, taking $O(\lg n)$ time and incurring an additional $\Theta(k \lg n) = O(kw)$ bits or $O(k)$ words of space for $k$ delete operations.
\end{theorem}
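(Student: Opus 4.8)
The plan is to mirror the structure of the reversible \proc{Insert} analysis (Theorem~\ref{thm:avl-trees-insert-reversible}), but to account for the fundamental asymmetry of deletion: whereas inserting an element grows the tree by $\Omega(\lg n)$ bits and so can absorb its own $O(\lg n)$-bit rotation log, a deletion \emph{removes} $\Theta(\lg n)$ bits of structure while still needing to log $\Theta(\lg n)$ bits to be reversible. Thus the log space cannot be charged against the current contents of the tree, and instead accumulates, giving the stated $\Theta(k \lg n) = O(kw)$ bits after $k$ deletions. First I would describe the forward \proc{Delete}$(x)$: reversibly traverse the tree as in \proc{Search}$(x)$ to locate $x$; if $x$ has two children, reversibly locate its in-order successor (or predecessor), and record in the log which choice was made plus the path taken so the swap can be undone; perform the node splice; then rebalance bottom-up, logging at each of the $O(\lg n)$ levels a constant number of bits (say $00$/$01$/$10$) encoding whether a left-rotation, right-rotation, or no rotation occurred, exactly as in the insertion proof. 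The extracted element's value is handed back to the calling algorithm per the data-structure conventions of Section~\ref{sec:DataStructures}, so the tree itself does not pay to destroy it.

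Next I would argue reversibility: given the logged rotation bits, we replay the rotations top-down, and given the logged successor-choice/path bits we reinsert the spliced-out node, using comparisons against $x$ (which we still hold) to retrace the search path — all of this uses only reversible control logic (Theorems~\ref{thm:if}, \ref{thm:jump}, \ref{thm:generalfor}) and reversible pointer manipulation (moves, copies, destroy-copies on two-way nodes), hence $0$ energy and $O(\lg n)$ time. The key accounting step is the space bound: each \proc{Delete} writes $O(\lg n) = O(w)$ bits to the log and never reclaims them within the operation, so $k$ deletions contribute $O(kw)$ bits $= O(k)$ words; \proc{Search} and \proc{Insert} continue to contribute nothing beyond constant factors (the insert log is still absorbed into the growing tree). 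Finally I would note that this accumulated log can be flushed by the periodic-unwinding mechanism of Theorem~\ref{thm:periodic-unwinding}: since the reversible AVL tree can be rebuilt from $k'$ insertions in $O(k')$ time and all operations have constant-factor reversible overhead, once the number of deletions exceeds (say) the current tree size we rebuild, amortizing the rebuild cost against the deletions and bounding the live log at $O(n)$ words — matching the row in Table~\ref{tab:algo}.

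The main obstacle I anticipate is handling the two-children deletion case reversibly with only $O(\lg n)$ logged bits: the successor-splicing step moves a value from deep in the tree up to the deleted node's position, and to reverse it we must know both that this case occurred and the path to the successor, but that path is itself recoverable by a reversible search (it is the leftmost path in $x$'s right subtree, which is determined by the post-deletion tree plus $x$'s value), so only $O(1)$ extra bits are genuinely needed there. The rest is routine: the rebalancing-log argument is identical to the insertion case, and the amortization is a direct invocation of Theorem~\ref{thm:periodic-unwinding}.
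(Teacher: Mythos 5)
Your proposal is correct and follows essentially the same route as the paper's (much terser) proof: traverse as in \proc{Search}, extract the node, log $O(\lg n)$ bits of location/rotation information per deletion, and observe that because deletion shrinks the tree this log cannot be absorbed into the tree's own space and so accumulates to $\Theta(k\lg n) = O(kw)$ bits over $k$ deletions. Your additional details on the two-children splice case and the pointer to periodic unwinding are consistent with the paper's surrounding discussion rather than a departure from it.
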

\iffull
\begin{proof}
\proc{Delete}$(x)$ simply requires the same traversal operations as \proc{Search}$(x)$, followed by the extraction of the desired node. Given that it takes no more than $O(\lg n)$ bits to specify a location in the tree, and because we will need $\Theta(\lg n)$ bits to store the subsequent rotation information, we must, as is the case for \proc{Insert}$(x)$, store additional information in the log proportional to the size of the original elements. This incurs a cost while simultaneously reducing the size of the tree, and thus for $k$ deletions we require $\Theta(k \lg n) = O(kw)$ extra bits or $O(k)$ words of space beyond that occupied by the tree itself, for a total space consumption of $O(n + k)$ words.
\end{proof}
\fi

\subsection{Binary Heaps}
Binary heaps are binary trees with the property that all children are either less than their parent, in the case of max-heaps or greater in the case of min-heaps. They support inserting an arbitrary element into the data-structure as well as popping the max/min element. We give a brief energy analysis as well as an efficient reversible algorithm.
\iffull
\begin{figure}
\small
\begin{minipage}{0.4\linewidth}
\begin{pcode}
\proc{MaxHeapify}$(A, i)$: \+
  ($i$, \id{left}) = $(i, 2*i)$
  ($i$, \id{right}) = $(i, 2*i + 1)$
  ($i$, \id{largest}) = $(i, i)$
  \keyw{if} \id{left} $\leq$  $A$.\id{length}() 
  \enskip   \keyw{and} $A[\id{left}] > A[\id{largest}]$: \+
    \id{largest} = \id{left} \-
  \keyw{if }\id{right} $\leq$ $A$.\id{length}()
  \enskip   \keyw{and} $A[\id{right}] > A[\id{largest}]$: \+
    \id{largest} = \id{right} \-
  \keyw{if} \id{largest} $\neq i$: \+
    Swap$(A[\id{largest}], A[i])$
    \proc{MaxHeapify}$(A, \id{largest}) $\- \-
		
\proc{Insert}$(a)$: \+
  \id{heapSize} = \id{heapSize}$ + 1$
  $A[\id{heapSize}] = a$
  \keyw{for} $i = 0$ to $\lfloor \lg$(\id{heapSize})$ \rfloor$: \+
    \keyw{if} $A[\lfloor$\id{heapSize}$/2^{i+1}\rfloor$]
      \enskip $< A[\lfloor$\id{\id{heapSize}}$/2^i\rfloor]$: \+
      swap$(A[\lfloor$ \id{heapSize}$/2^{i+1}\rfloor]$, 
               $A[\lfloor$ \id{heapSize}$/2^i\rfloor ])$ \- \- \-
	
\proc{DeleteMax}$()$: \+
  Swap$(A[1], A[\id{heapSize}])$
  $(A[\id{heapSize}], \id{maxValue}) = (0, A[\id{heapSize}])$
  \proc{MaxHeapify}$(A, 1)$
  \keyw{Return} \id{maxValue}
\end{pcode}

\caption{Binary Heap Code}
\label{Binary heap code}
\end{minipage}
\hfil\hfil
\begin{minipage}{0.4\linewidth}
\begin{pcode}
$\proc{ReversibleMaxHeapify}(A, i)$: \+
  $(i, \ul{\id{left}}) = (i, 2*i)$
  $(i, \ul{\id{right}}) = (i, 2*i + 1)$
  \keyw{if} $\id{right} \leq A.$\id{length}
     \keyw{and} $A[\id{right}] > A[\id{largest}]$: \+
    $\ul{\id{largest}} = \id{right}$ \-
  \keyw{else if} $\id{left} \leq  len(A)$ 
        \keyw{and} $A[\id{left}] > A[\id{largest}]:$ \+
    $\ul{\id{largest}} = \id{left}$ \-
  \keyw{if} $\id{largest} \neq 0$: \+
    $swap(A[i], A[\id{largest}])$
    $\proc{ReversibleMaxHeapify}(A, \id{largest})$ \-
  $\id{left} = \id{left} - 2*i$
  $\id{right} = \id{right} -2*i - 1$ \-

\proc{ReversibleDeleteMax}$()$: \+
  $\id{deleteCount} = \id{deleteCount} + 1$
  \keyw{if} $\id{deleteCount} > \id{heapSize}:$ \+
    $\proc{Rebuild}(A)$ \-
  $\left(A[1], A[\id{heapSize}]\right) = \left(A[\id{heapSize}], A[1]\right)$
  $\left(A[\id{heapSize}], \ul{MaxValue}\right) = \left(0, A[\id{heapSize}]\right)$
  $\proc{ReversibleMaxHeapify}(A, 1)$
  $\keyw{Return} \id{MaxValue}$

\end{pcode}
\caption{Reversible Binary Heap}
\label{Fig:ReversibleBinaryHeapCode}
\end{minipage}
\end{figure}
\fi
\begin{theorem} \label{thm:binary-heap-insert}
Binary Heaps can have items inserted irreversibly with $\Theta(\lg n)$ time, $\Theta(1)$ space, and $\Theta(\lg n)$ energy; or reversibly with $\Theta(\lg n)$ time, $\Theta(1)$ space, and $0$ energy.
\end{theorem}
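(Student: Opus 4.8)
The plan is to handle the two claims separately while reusing the standard \proc{Insert} sift-up in both. For the irreversible bound, I would observe that sift-up walks the $\Theta(\lg n)$ levels of the heap, performing one parent--child comparison and one conditional swap per level inside a \emph{simple} \keyw{for} loop (its bound $\lfloor \lg \id{heapSize} \rfloor$ is fixed and the loop index is untouched by the body, so the loop itself is reversible by Theorem~\ref{thm:simplefor}). Writing $a$ into the freshly zeroed new leaf slot is a \proc{Copy} and the swaps are reversible, so the entire energy cost is the $\Theta(\lg n)$ bits spent discarding the $\Theta(\lg n)$ comparison results (equivalently, each ``order two elements'' step $(x,y)\mapsto(\max,\min)$ is $2$-to-$1$ and loses a bit). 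Auxiliary space is $\Theta(1)$, and since the loop runs over every level regardless of outcome, the time is $\Theta(\lg n)$.

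For the reversible bound, the crucial point is that although each conditional swap is individually lossy, \proc{Insert}$(a)$ as a whole is invertible given only the resulting heap and $a$. The inserted element comes to rest on the root-to-last-leaf path, which is sorted, so all occurrences of the value $a$ on that path form a single contiguous block with the new copy at its bottom; moreover deleting \emph{any} element of this block and sliding the lower part of the path up one level returns exactly the pre-insert heap (the choice within the block is immaterial, since the multiset along the path is unchanged and off-path children only ever receive a smaller value). I would use this twice. First, the forward \proc{Insert} runs the ordinary sift-up but records the $\Theta(\lg n)$ comparison bits packed into one word $L$ (legal since $w=\Omega(\lg n)$, costing $O(1)$ words and $0$ energy since nothing is destroyed); then, with the heap now final, it descends the leaf-path to find the bottom of the $a$-block, reads the resting level $j$, recomputes the identical word $\widehat L$ (bit $i$ equal to $[\,i<j\,]$, the sift-up swap pattern being $1^{j}0^{\ast}$), executes $L \mathrel-= \widehat L$ to zero $L$, and uncomputes the descent (whose comparisons touched an unmodified heap, hence are reversible). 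This yields $\Theta(1)$ net auxiliary space, $0$ energy, and $\Theta(\lg n)$ time. Second, the paired \proc{reverse-Insert}$(a)$ descends the leaf-path to any position holding $a$, slides the part of the path below it up one level (overwriting that $a$ is a \proc{DestroyCopy} against the known value, hence free), and decrements \id{heapSize}; by the block-equivalence claim this inverts \proc{Insert} within the same bounds.

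I expect the main obstacle to be exactly this reconstructibility lemma: proving that the final heap together with $a$ pins down the whole sequence of sift-up outcomes --- so that $\widehat L$ can be recomputed and cancelled --- and that ``delete any block element, then compact the path'' is well defined and equals the inverse of \proc{Insert}. This requires a careful analysis of the sorted root-to-leaf path: that the new $a$ lands at the bottom of its value-block, that the swap pattern is a monotone prefix of ones, and that compaction preserves the heap property at off-path children. The boundary cases --- $a$ not moving at all, $a$ rising to the root, and ties with existing keys on the path --- should be verified but are routine.
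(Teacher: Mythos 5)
Your proposal is correct, and your irreversible half coincides with the paper's (count the $\Theta(\lg n)$ discarded comparison bits inside a simple \keyw{for} loop), but your reversible half takes a genuinely different --- and more careful --- route. The paper disposes of the reversible case in one sentence: the sift-up loop is a simple \keyw{for} loop (Theorem~\ref{thm:simplefor}) and the conditional swap is a protected \keyw{if} (Theorem~\ref{thm:if}), hence zero energy. Strictly read, that appeal is shaky: the body of the conditional swaps the very array entries appearing in the condition, so the comparison's truth value is \emph{not} preserved across the body (and after the fact one cannot locally distinguish ``swapped, so parent is now strictly greater'' from ``never swapped because parent was already greater''), which is exactly the failure mode a protected \keyw{if} is supposed to exclude. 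Your argument confronts this head-on with a global uncomputation: you bank the comparison bits in a word $L$, prove that the final heap plus $a$ determines the swap pattern ($a$ lands at the bottom of its value-block on the sorted root-to-leaf path, so the pattern is the monotone prefix $1^{j}0^{\ast}$ recoverable from $a$'s resting depth), recompute $\widehat L$ and cancel it against $L$, and supply an explicit \proc{reverse-Insert}. What the paper's approach buys is brevity and uniformity with its control-logic primitives; what yours buys is an actually airtight justification of the $0$-energy claim, at the cost of a nontrivial reconstructibility lemma (which you correctly identify, and whose sketch --- sorted path, strict inequality below $a$, tie-block equivalence --- is convincing). Both yield the stated $\Theta(\lg n)$ time, $\Theta(1)$ auxiliary space, and $0$ energy bounds.
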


\iffull
\begin{proof}
Insert begins by incrementing the heap size and assigning an empty entry in the array to the value to be inserted. These are both reversible operations. We then proceed to fix the heap by checking whether the current node is smaller than the parent, and if so swapping them. Normally the comparison and loop would destroy bits, leading to a $O(\lg n)$ energy cost, however this is a protected if and simple for loop so Lemmas~\ref{thm:if} and \ref{thm:simplefor} allow us to do so reversibly with no energy cost.  
\end{proof}
\fi

\begin{theorem} \label{thm:binary-heap-delete}
Binary Heaps can have the root node deleted irreversibly with $\Theta(\lg n)$ time, $\Theta(\lg n)$ space, and $\Theta(w \lg n)$ energy.
\end{theorem}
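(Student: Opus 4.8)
The plan is to walk through the standard \proc{DeleteMax} routine line by line in the Word RAM model, charging time, space, and energy as we go, and then to exhibit one family of heaps on which every bound is tight. Recall the routine (cf.\ the standard code of Figure~\ref{Binary heap code}): swap $A[1]$ with $A[\id{heapSize}]$, move the element now sitting in the last slot into a return variable \id{maxValue} while zeroing that slot, decrement \id{heapSize}, and call \proc{MaxHeapify}$(A,1)$. \proc{MaxHeapify}, at a node with index $i$, computes the child indices $2i$ and $2i+1$ into local scratch, does $O(1)$ comparisons to decide which of the node and its two children is largest, and — unless the node itself is largest — swaps the node with the larger child and recurses into it, bottoming out after at most $\lfloor \lg n \rfloor$ levels.

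For the upper bounds: each \proc{MaxHeapify} frame does $O(1)$ work and there are $O(\lg n)$ frames, so the time is $\Theta(\lg n)$; the recursion stack has depth $O(\lg n)$ with $O(1)$ words per frame, so the peak auxiliary space is $\Theta(\lg n)$ words (the $O(1)$ words used by the outer swap and extraction are absorbed). For energy, observe that the structurally essential operations are all free: swapping two array entries is reversible, copying a child index into a freshly allocated (zeroed) local is reversible, decrementing \id{heapSize} is reversible, extracting \id{maxValue} out of its slot is a swap against a zero word (cost $O(w)$ at worst, which is dominated), and the call/return mechanism is free by Theorem~\ref{thm:fxncall}. The only energy is spent discarding information that the standard (non-reversible) implementation never bothers to uncompute: the $O(1)$ comparison result bits per level (total $O(\lg n)$ bits), and — the dominant term — the $O(1)$ words of stale index scratch left in each of the up-to-$\lg n$ stack frames, which must be zeroed to return the machine to its initial configuration as the model requires. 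Since deleting an unknown $w$-bit word costs exactly $w$ by the model's cost rule, clearing the stack costs $O(w\lg n)$, giving the claimed $\Theta(w\lg n)$ energy upper bound.

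For the lower bounds I would take, for each $n$, the array $[\,n, n-1, \dots, 1\,]$, which one checks is a legal max-heap. On this input \proc{DeleteMax} swaps the root with the global minimum, and the subsequent \proc{MaxHeapify} performs a swap and recurses at every one of the $\Theta(\lg n)$ levels all the way down to a leaf. This realizes recursion depth $\Theta(\lg n)$ (hence $\Omega(\lg n)$ time and $\Omega(\lg n)$ space) and leaves $\Theta(\lg n)$ stack frames each holding genuinely nonzero index scratch; because the standard implementation disposes of this scratch by a full-word delete at each level, and each such delete costs $w$, it is forced to pay $\Omega(w\lg n)$ energy. Combining the two directions gives $\Theta(\lg n)$ time, $\Theta(\lg n)$ space, and $\Theta(w\lg n)$ energy.

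The main obstacle I anticipate is making the energy bookkeeping airtight. One has to distinguish carefully between words whose contents are \emph{known} (e.g.\ freshly allocated zeros, which cost nothing to touch) and words that are genuine garbage, and to decide consistently whether the per-level $\Theta(w)$ charge is attributed to the in-loop overwrite of the ``largest'' variable with a child index or to zeroing the dead frame on return — these two accountings agree at $\Theta(w\lg n)$, but one must verify that the chosen worst-case input really forces a full $w$-bit loss at $\Theta(\lg n)$ distinct points, rather than only the roughly $\lg n$-bit ``essential'' content of each index (which would give merely $\Theta((\lg n)^2)$ and would not match the stated bound). It is precisely the naivety of the standard implementation — storing and discarding indices as full $w$-bit words — that produces the $w$ factor, which is why the separate reversible theorem can beat it.
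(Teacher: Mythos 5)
Your proof is correct and follows essentially the same route as the paper's: the energy bound comes from the $O(1)$ full-word overwrites/discards of index scratch (the \id{largest} and child-index variables) in each of the $\Theta(\lg n)$ \proc{MaxHeapify} frames, each costing $w$ by the model's deletion rule, with comparisons contributing only $O(\lg n)$ bits. You are somewhat more careful than the paper in supplying an explicit worst-case witness ($[n, n-1, \dots, 1]$) to justify the lower-bound direction of the $\Theta$ claims, which the paper's proof leaves implicit.
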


\iffull
\begin{proof}
DeleteMax moves out the largest element of the heap and then restructures it with a recursive call to MaxHeapify. This can be called up to $O(\lg n)$ times. Every call to MaxHeapify looks at the children of the current node, determines if it is larger than the value in the current node, and if so swaps them. The maintenance of the current largest value as well as those of the children involves overwriting those values. The conditionals additionally take a constant amount of energy, leading to an energy complexity of $O(w\lg n)$. The standard time and space analysis remains unchanged.
\end{proof}

To implement Reversible Binary Heaps, we make two changes:
MaxHeapify is written reversibly,
and we use periodic unwinding to keep $O(n)$ space. The Rebuild function is not given here, but can be derived from Theorem~\ref{thm:periodic-unwinding} 
\fi

\begin{theorem} \label{thm:binary-heap-delete-reversible}
Binary Heaps can have the root node deleted reversibly with $\Theta(\lg n)$ time, $\Theta(\lg n)$ space, and $0$ energy.
\end{theorem}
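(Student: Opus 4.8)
The plan is to take the standard \proc{DeleteMax} routine --- swap $A[1]$ with $A[\id{heapSize}]$, move $A[\id{heapSize}]$ out into a fresh variable \id{maxValue}, decrement the heap size, and call \proc{MaxHeapify}$(A,1)$ to sift the new root down --- and make each piece reversible without destroying any bits. The first three steps are already reversible: \proc{Swap} is its own inverse; moving $A[\id{heapSize}]$ into the underlined (empty) variable \id{maxValue} while zeroing the slot is a reversible exchange; and decrementing \id{heapSize} inverts to incrementing. So essentially all of the work is in making \proc{MaxHeapify} reversible.

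For the reversible \proc{MaxHeapify}$(A,i)$ I would, at each recursive call, (i) compute the child indices $\id{left}=2i$ and $\id{right}=2i+1$ into fresh underlined variables, which is reversible because $i$ itself is preserved, so the indices can be uncomputed by the inverse arithmetic (or simply carried on the recursion stack); (ii) perform the bounds-checked comparisons of $A[i]$ against $A[\id{left}]$ and $A[\id{right}]$ to decide whether to stop or to swap with a child, but rather than discarding the comparison outcomes, push onto the log stack the $O(1)$ bits that encode the decision made at this level (stop / swap-left / swap-right, plus whatever is not already recoverable from the preserved \id{heapSize} and $i$ about which children were in range); and (iii) if a swap is called for, execute \proc{Swap}$(A[i],A[\id{largest}])$ --- reversible --- and recurse on \id{largest}. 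The recursion has depth $O(\lg n)$, so it contributes $O(\lg n)$ words of stack frames and logs $O(\lg n)$ bits in total, matching the $\Theta(\lg n)$ space of the irreversible version in Theorem~\ref{thm:binary-heap-delete}. The control flow uses the reversible conditional and loop/recursion primitives of Theorems~\ref{thm:if} and~\ref{thm:generalfor}, which cost zero energy, and every elementary step invoked --- \proc{Swap}, increment/decrement, the index arithmetic, moving a value into an empty variable, and logged comparisons --- also costs zero energy; hence the whole operation is energy-free.

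Reversal then just runs the logged path backwards: from \id{maxValue}, the current heap, and the $O(\lg n)$ logged bits, increment \id{heapSize}, swap \id{maxValue} back into $A[\id{heapSize}]$, and undo the sift-down by replaying the recorded swaps from the deepest point of the path upward --- at each level the corresponding log bit says which child was swapped with (or that we had stopped), and the parent index is recovered as $\lfloor i/2\rfloor$, so the inverse \proc{Swap}s restore the original array; finally swap $A[1]$ and $A[\id{heapSize}]$ back. Since the logged bits exactly capture the only branching performed in the forward pass and everything else is built from individually invertible operations, this is a genuine inverse, so \proc{DeleteMax} is reversible with $0$ energy, $\Theta(\lg n)$ auxiliary space, and the usual $\Theta(\lg n)$ time. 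To keep the structure's total size $O(n)$ over many deletions instead of letting these logs accumulate, we invoke periodic unwinding (Theorem~\ref{thm:periodic-unwinding}), whose hypotheses hold since a binary heap is built reversibly from $k$ insertions in $O(k)$ time and space (Theorem~\ref{thm:binary-heap-insert}) and each operation is reversible with constant-factor overhead.

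The main obstacle is step~(ii): one cannot recompute, during reversal, which child \proc{MaxHeapify} swapped into, because after the swap the value now sitting at $A[i]$ dominates both children, so replaying the comparisons always reports ``no swap'' and the left/right distinction is irretrievably lost. This forces us to genuinely log $\Theta(1)$ bits per level, and the crux of the argument is verifying that this $\Theta(\lg n)$-bit log --- together with the preserved index and the parent-index arithmetic --- is exactly enough to drive the reverse pass with no residual garbage, including the fiddly boundary cases where \id{left} or \id{right} exceed \id{heapSize} and the handling of ties in comparisons, both of which must be made reversible without leaking extra bits.
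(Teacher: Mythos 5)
Your proposal is correct and follows essentially the same route as the paper: make \proc{MaxHeapify} reversible by computing child indices into fresh variables, using reversible \proc{Swap}s, and retaining $O(1)$ bits per level of the $O(\lg n)$-deep sift-down path to record the branch taken, with periodic unwinding (Theorem~\ref{thm:periodic-unwinding}) controlling long-term space. If anything you are more explicit than the paper about \emph{why} the swap direction must be logged rather than recomputed (the paper frames the branches as protected ifs and keeps the \id{largest} variables around, which amounts to the same $\Theta(\lg n)$-space log).
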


\iffull
\begin{proof}
MaxHeapify has been rewritten to ensure that no values are ever overwritten during the function call by reordering the logic and subtracting away known values from variables. Figure~\ref{Fig:ReversibleBinaryHeapCode} shows the substitutions to ensure no values are overwritten. We then note that the conditionals are protected if statements, and thus reversible by Theorem~\ref{thm:if}.
\end{proof}
\fi

As with AVL trees, binary heaps subject to $k$ insertions and deletions will accumulate an extra $O(k)$ space to be maintained. In some cases this can be resolved by periodic unwinding.

\section{Algorithms}
\label{sec:Algorithms}

This section includes the analysis for the time, space, and energy complexity of several standard algorithms in our model. We also give a number of improved algorithms. Some of our results for algorithms with zero energy complexity are similar to results claimed or proved in \cite{Frank99} about reversible algorithms. However, we prove these results within our own model, which differers slightly from \cite{Frank99}.

\subsection{Sorting}
Sorting is among most fundamental and well understood algorithmic problems. In this section we give reversible algorithms for comparison and counting sorts which match the time and space complexities of know irreversible algorithms. It is especially interesting to see this is achievable despite the known entropy change during comparison sorts which give us a lower bound on their time complexity.

\subsubsection{Comparison Sort} \label{sec:comparison-sort}
\begin{theorem}
A comparison sort destroying its input must consume $\Omega(\lg n!)$ energy.
\end{theorem}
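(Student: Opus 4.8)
The plan is to argue via an entropy/information-theoretic counting bound using the energy definition established for the Transdichotomous RAM model, namely that the energy cost of an operation (or a whole computation) is at least the logarithm of the ratio of the size of its input space to the size of its output space. First I would set up the input space: the inputs are the $n!$ distinct permutations of a fixed set of $n$ pairwise-distinct keys (restricting to distinct keys and to inputs that are permutations of one canonical sequence is legitimate, since a lower bound over a sub-family of inputs is a lower bound for the algorithm in general). A comparison sort that \emph{destroys its input} must, by the model's convention, leave the machine in its initial state except for a copy of the output; but the output is the \emph{sorted} sequence, which is the same for all $n!$ inputs. Hence the map from the $n!$-element input space to the (single-element, up to the fixed sorted string) output space collapses all $n!$ inputs to one value.

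The key steps, in order: (1) Recall that the whole computation, being a composition of the machine's operations, has energy cost at least $\lg(|X|/|Y|)$ where $X$ is the set of reachable initial configurations and $Y = f(X)$ the set of final configurations; this is exactly the circuit-model inequality lifted to the RAM model, stated in Section~2.3. (2) Observe that ``destroying its input'' means no trace of the original permutation remains: the final configuration consists only of the sorted output (plus the fixed, input-independent scratch/zeroed state the model requires a program to return to), so $|Y| = 1$ across the family of $n!$ inputs while $|X| = n!$. (3) Conclude $E \ge \lg(n!/1) = \lg n!$, i.e. $E = \Omega(\lg n!)$ (and note $\lg n! = \Theta(n \lg n)$ by Stirling, matching the comparison lower bound).

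The main obstacle — really a subtlety to be careful about rather than a deep difficulty — is pinning down precisely what ``destroying its input'' entails and making sure the argument does not secretly assume the algorithm is comparison-based in a way that's circular: the bound here is purely about information erasure and would apply to \emph{any} sorting algorithm that discards its input, so I should state the hypothesis cleanly (the algorithm's final memory state is a function of the multiset of keys only, not of their initial order) and then the counting bound is immediate. A secondary point to address is that the algorithm may allocate and then deallocate scratch space; since the model already requires all auxiliary space to be returned to its zeroed initial state, this scratch does not enlarge $Y$, so the bound is unaffected. I would also remark that if the algorithm were instead allowed to keep a permutation recording where each element came from (i.e.\ not destroy the input), then $|Y| = n!$ and the bound degenerates to $0$ — consistent with the reversible sorting results claimed elsewhere in the paper — which is exactly why the ``destroying its input'' hypothesis is essential.
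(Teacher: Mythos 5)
Your proposal is correct and follows essentially the same argument as the paper: both count the $n!$ input permutations collapsing to a single sorted output and invoke the $\lg(|X|/|Y|)$ energy definition to conclude a cost of $\lg n!$ bits. Your write-up is simply a more careful elaboration of the paper's two-sentence entropy argument, with useful attention to the role of the ``destroying its input'' hypothesis and the zeroed scratch space.
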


\iffull
\begin{proof}
Given a list of $n$ unique items, there are $n!$ permutations of that list, only one of which is in sorted order. Thus the end-to-end entropy of a sorting function is $\lg (n!)$, resulting in an energy complexity of $\Theta(\lg n!)$. This was noted in \cite{Frank99}.
\end{proof}
\fi

We can achieve this energy bound with Merge Sort. Merge Sort takes in an array of numbers, recursively calls itself on half of the array until it reaches the sorted array of size 1. It then merges the returned arrays by iteratively comparing the smallest values in each array and moving it to the beginning of a new sorted array.

\begin{theorem} \label{thm:comparison-sort}
Comparison sort destroying its input can be done in $\Theta(n \lg n)$ time,  $\Theta(n)$ space, and  $\Theta(n \lg n)$ energy.
\end{theorem}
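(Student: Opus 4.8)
The plan is to analyze standard Merge Sort in the energy model and show that, with appropriate care, both its time/space costs are the classical $\Theta(n\lg n)$ and $\Theta(n)$, and its energy cost is exactly $\Theta(n\lg n)$, matching the $\Omega(\lg n!) = \Theta(n\lg n)$ lower bound established just above. The central tension is that Merge Sort is usually written with out-of-place merges that allocate fresh arrays and overwrite old ones — which would naively cost $\Theta(w)$ energy per element moved, giving $\Theta(wn\lg n)$ energy. So the real content is: (i) the merge step's comparisons can be made cheap or reversible, and (ii) the data movement can be done by reversible moves/swaps rather than irreversible overwrites, so that the only energy actually spent is the genuine information loss inherent to the permutation being computed.

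First I would set up the recursion: on input of length $n$, split into two halves, recursively sort each, and merge. The time recurrence $T(n) = 2T(n/2) + \Theta(n)$ gives $\Theta(n\lg n)$, and the space is $\Theta(n)$ by reusing a single auxiliary buffer of size $n$ across all merges at a given level (the classical bound), together with the reversible control-logic overhead (Theorems~\ref{thm:simplefor}, \ref{thm:generalfor}, \ref{thm:fxncall}), which is only constant-factor. Next I would argue the energy bound from below and above. The lower bound is the preceding theorem: a comparison sort destroying its input has end-to-end entropy $\lg n!$, so $\Omega(\lg n!) = \Omega(n\lg n)$ energy is forced. For the upper bound, I would observe that there are $n\lg n$ comparisons performed overall; each comparison produces one result bit, and if that bit is ultimately discarded it costs $1$ unit of energy (as in the \proc{LessThanOrEqualTo} discussion in Section~\ref{sec:WordRAM}), for a total of $O(n\lg n)$ energy from comparisons. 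The remaining operations — following indices, incrementing loop counters, and the actual relocation of array elements — I would implement using only reversible primitives: element moves are done via \proc{Swap} into the (zeroed) auxiliary buffer, so no $w$-bit overwrite is ever charged, and control logic is handled by the reversible for-loop and function-call machinery. Thus the total energy is $\Theta(n\lg n)$, dominated entirely by the comparison bits.

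The one subtlety to nail down — and the main obstacle — is reconciling ``destroying its input'' with the reversible-move implementation: if every move is a swap, at the end of the computation the original array locations hold a scrambled-but-recoverable permutation's worth of residue, plus the comparison bits, and one must account for all of it. The clean way is to invoke complete logging on the comparison decisions (Section on complete logging, plus \keyw{log}/\keyw{unroll}), or equivalently observe that the $n\lg n$ comparison bits are exactly the information distinguishing the $n!$ input orderings, so after sorting we may unroll all reversible moves using those logged bits, returning the auxiliary space to zero, and are then left holding precisely $\lg n!$ bits of garbage, whose erasure costs $\Theta(n\lg n)$ energy and completes the ``destroying its input'' semantics while restoring all working space. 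This matches the lower bound, so the energy is tight; I would close by noting the space overhead of the log is $\Theta(n)$ words (one bit per comparison, $n\lg n$ bits $= \Theta(n\lg n)$ bits $= \Theta(n)$ words since $w = \Theta(\lg n)$), consistent with the claimed $\Theta(n)$ space bound.
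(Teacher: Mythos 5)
Your proposal is correct and follows essentially the same route as the paper's proof: analyze standard Merge Sort, charge $\Theta(n\lg n)$ energy to the discarded comparison bits, and avoid the extra factor of $w$ by moving elements into empty (zeroed) arrays rather than overwriting. Your additional discussion of the matching $\Omega(\lg n!)$ lower bound and the unrolling/garbage accounting is a more careful elaboration of the same argument, not a different approach.
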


\iffull
\begin{proof}
Here we look at the standard merge sort which runs in $\Theta(n \lg n)$ time and  $\Theta(n)$ space. This sorting algorithm recursively divides up the items to be sorted, then sorts increasingly larger groups during the merge step. For the merge step, one takes two sorted lists and starts comparing the smallest element. The smaller one is put into the new sorted list and the process continues. The $\Theta(n \lg n)$ comparisons will consume $\Theta(n \lg n)$ energy. We also need to ensure that all of the entries are moved into empty arrays, so no overwriting of values occurs. If one copies and deletes during this process there will be an extra factor of $w$ in the energy cost.
\end{proof}
\fi

If we do not destroy the input and are careful with our algorithm, we can do better:

\begin{theorem} \label{thm:comparison-sort-reversible-word}
Comparison sort, not destroying its input, if performed on an array of $n$ w-bit elements with $\lg n = O(w)$, can be done reversibly in $\Theta(n \lg n)$ time, $\Theta(n)$ auxiliary space, and 0 energy.
\end{theorem}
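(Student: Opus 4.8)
The plan is to adapt merge sort so that every step is either reversible outright or logged into an array that pays for itself. The key observation is the same one used for reversible AVL trees: since each element occupies $w = \Theta(\lg n)$ bits anyway, and a comparison sort on $n$ elements has total entropy $\lg n! = \Theta(n \lg n)$ bits, we are allowed to accumulate $\Theta(n \lg n)$ bits of logged garbage — i.e.\ $\Theta(1)$ words per element — without exceeding the $\Theta(n)$ auxiliary-space budget. The trick is that this garbage must be a permutation-like object we can interpret as ``how to invert the sort,'' and by hypothesis we are not destroying the input, so the original array is still present at the end to justify keeping that much data around; but in fact we will do better and unwind everything, ending with $0$ energy and only the sorted output plus the untouched input.

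First I would set up the merge subroutine reversibly. Given two sorted sublists $L[r:s]$ and $L[s+1:t]$ occupying known locations, I allocate a fresh zeroed output block of length $t-r+1$ and repeatedly compare the two current front elements, using a \keyw{log} block so that the comparison bit is recorded and \proc{Copy} moves each element into the output (a move = \proc{Copy} into an underlined slot followed by \proc{DestroyCopy} of the source, both reversible). The loop is a \emph{general for loop} (Theorem~\ref{thm:generalfor}) since the number of iterations from each side is data-dependent, costing an extra word per merge call; but summed over one level of recursion this is $O(n)$ words, and summed over all $\lg n$ levels it would be $O(n \lg n)$ bits total — within budget. After the merge, the two input sublists are empty; their space is returned to the allocator reversibly (Theorem~\ref{thm:freelist}). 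Recursing, the whole sort runs in $\Theta(n \lg n)$ time, touches $\Theta(n)$ words of live data at any instant, and leaves behind a \keyw{log} of size $\Theta(n \lg n)$ bits encoding every comparison outcome.

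Next I would unwind. The crucial point is that the sorted output together with the complete comparison log determines the original input permutation: running the merge-sort recursion \emph{backwards} — splitting each merged block back into its two halves by replaying the logged comparison bits in reverse — reconstructs the pre-sort arrangement exactly, which is precisely the original input. But we are told the input is \emph{not} destroyed, so a pristine copy of it already sits in memory. Hence after the forward sort I \proc{Copy} the sorted output into a fresh result block, then \keyw{unroll} the entire computation: the unroll consumes the log, destroys the working sorted array, restores the input array to its original state in place, and deallocates all scratch space — all reversibly, since every forward operation had a stated inverse. What remains is the original input (restored) plus the copied sorted array, with the log stack empty. Total energy: $0$. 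Auxiliary space peaks at $\Theta(n)$ words (the scratch merge buffers plus the $\Theta(n \lg n)$-bit $=\Theta(n)$-word log), and time is $\Theta(n \lg n)$ for the forward pass plus $\Theta(n \lg n)$ for the unroll, hence $\Theta(n \lg n)$ overall.

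The main obstacle is bookkeeping the merge step so that the per-merge ``number of iterations'' word and the per-comparison bit together stay within $\Theta(n)$ words \emph{and} genuinely suffice to invert the merge. Concretely, I must verify that replaying the logged bits reconstructs which half each output element came from — this is immediate because each bit records exactly the branch taken — and that the general-for-loop overhead (one word per merge, Theorem~\ref{thm:generalfor}) telescopes correctly: there are $n-1$ merge calls across all levels, so $O(n)$ words, dominated by the $\Theta(n)$-word comparison log. The hypothesis $\lg n = O(w)$ is what makes indices, lengths, and loop counters fit in $O(1)$ words each, so none of the auxiliary structures blow up; I would flag this explicitly. Everything else — reversibility of \proc{Copy}/\proc{DestroyCopy}, of the for-loop control, of allocation — is quoted from the earlier sections.
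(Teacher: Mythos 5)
Your proof is correct, but it takes a genuinely different route from the paper's. You use the generic ``complete logging plus copy-out-and-unroll'' strategy: record one provenance bit per output element of every merge, for $\Theta(n\lg n)$ bits total, observe that the hypothesis $\lg n = O(w)$ packs this into $O(n)$ words, copy the sorted result into fresh zeroed space, and unroll the whole computation to empty the log and restore the input. The paper instead avoids a comparison log altogether: it augments each element with its original index, so that during any merge around pivot $s+\tfrac12$ the provenance of every output element is recoverable by comparing its stored index to the pivot; the merge is therefore instantaneously reversible via protected two-way branches, and the only delicate step is \emph{removing} the index augmentation at the end, which the paper does with a pass over the original array and a logged binary search per element into the unaugmented sorted array. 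Both approaches lean on $\lg n = O(w)$ in an essential way (you to bound the log by $O(n)$ words, the paper to fit each index in a word) and both land on the same $\Theta(n\lg n)$ time, $\Theta(n)$ words, $0$ energy. Your version is simpler and more mechanical; the paper's buys a sort whose forward pass carries no accumulated history, which is the idea that feeds into its subsequent treatment of small ($d$-bit, $d = o(\lg n)$) keys. Two small points worth tightening in your write-up: log one bit per \emph{output element} rather than per comparison (so the tail of a merge, after one sublist is exhausted, is still invertible without a separate iteration count), and note that the per-merge loop can be taken over the fixed total length $t-r+1$, so the extra general-for-loop word per merge is not actually needed, though as you say it is harmless.
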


\begin{proof}
This algorithm is a modification of Merge Sort. In summary, we will we augment each element of the array with its index in the array, and so-equipped shall reversibly merge sort the elements in $\Theta(n \lg n)$ time. After we remove these indices from the sorted array, the output of the algorithm will be the original array $L$ and a sorted copy of the array $L_{sorted}$.

During a traditional Merge Sort, there are three main steps: dividing the array in two, recursing on each half of the array, and merging the two resultant arrays into a complete, sorted array.

The first step, dividing an array in two, is reversed trivially: Given the two resultant lists of such an operation $\lrtos$ and $\lstot$, the original subarray $\lrtot$ is their concatenation. 

The second step, recursing, will be reversible if our entire algorithm is reversible. We know the base case of sorting a size-1 array is reversible, and thus if steps 1 and 3 of our algorithm are also reversible, then this recursive step will be as well.

The third step, in contrast to the first two, presents us with some difficulties. Given a resultant, fully-merged subarray $L_{sorted}[r:t]$, it is not at all obvious how to go backwards, i.e. how to reproduce the input subarrays $L_{sorted}[r:s]$ and $L_{sorted}[s+1:t]$. Information will be lost in this merge step, and to allow our algorithm to be done reversibly, we must find some way to preserve it.

Our augmentation makes this step possible. Before sorting, we transform $L$ into a new array $L'$ of twice the size, which consists of the elements of $L$ each augmented with their index in $L$. Thus, the elements of our array are 2-tuples $(v, i)$ of each element's value and original location in $L$. The above transformation is sufficient to make the merge step reversible. To see why, consider any step in the algorithm in which we are trying to merge two sorted subarrays $\lrtos$ and $\lstot$. Denote the merge subroutine that we are trying to compute as $M_{s+\frac{1}{2}}(\lrtos, \lstot)$; that is, we are merging around a pivot $s+\frac{1}{2}$, determined by which step of the algorithm we are presently carrying out. All elements $(v,i)$ with $i < s+\frac{1}{2}$ must have come from $\lrtos$, and elements with $i > s+\frac{1}{2}$ from $\lstot$. Given a resultant array $L_{sorted}[r:t] = M_{s+\frac{1}{2}}(L_{sorted}[r:s], L_{sorted}[s+1:t])$, we can reverse the merge operation step-by-step simply by checking each element's index against the pivot $s+\frac{1}{2}$ to determine where it came from. This enables us to construct two-way branches that perform the merge in a way that is instantaneously reversible. Because the pivot is fixed for each step in the algorithm, no information is lost in computing and decomputing it, and thus this step of the algorithm may be implemented reversibly with only constant additional auxiliary space.

The output of the above algorithm is a list $L_{sorted}'$ of $(v,i)$ tuples sorted in the $v$ keys (whereas our original list $L'$ was ``sorted'' in the $i$ keys).
Now we need only remove the auxiliary indices from the elements $(v,i)$ to produce our unaugmented sorted list $L_{sorted}$. This step must be handled with care to ensure that every step is reversible. We begin by reproducing the original array $L$ via a single pass over $L_{sorted}'$, a simple operation that has not yet destroyed any data. Next, we copy out $L_{sorted}$, the final, sorted array that we care about. What remains is to dismantle $L_{sorted}'$, and here we shall employ a special trick: We will perform a single pass over the \emph{original} array $L$, and for every value $v$ encountered we will perform a logged binary search for this element in the \emph{unaugmented} sorted list $L_{sorted}$. When the element is found, we will know the value $v$, index $i$, and the location of the element $(v,i)$ in the augmented sorted array $L'_{sorted}$. This is sufficient to destroy this element, setting its entry to zero before unrolling the log of our binary search. The complete dismantling operation uses only $\Theta(\lg n)$ additional logging space total, and only takes time $\Theta(n \lg n)$, so our runtime and space consumption are preserved.

This algorithm is instantaneously reversible at every step, and could be implemented using only simple for loops and two-way conditional branches. Thus, the algorithm is completely reversible under our model. Given an array of size $n = O(2^w)$ which occupies $nw$ space in memory, we can reversibly comparison-sort the array using $\Theta(nw)$ bits of auxiliary space in $\Theta(n\lg n)$ time, matching the best irreversible algorithm to within constant-factors of space and time.
\end{proof}

\begin{theorem} \label{thm:comparison-sort-reversible-general}
Comparison sort, not destroying its input, can be done reversibly on an array of $n$ $d$-bit elements which require $nd$ space in $\Theta(n \lg n)$ time, $\Theta(n d)$ auxiliary space, and 0 energy.
\end{theorem}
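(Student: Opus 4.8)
The plan is to dispatch on how $d$ compares with $\lg n$, since the only obstacle to re-using the construction of Theorem~\ref{thm:comparison-sort-reversible-word} is space accounting. When $d = \Omega(\lg n)$ I would run exactly the index-augmented merge sort of that proof: pair each element with its $\lceil \lg n\rceil$-bit index to form $L'$, reversibly merge-sort $L'$ (each merge instantaneously reversible because an element's half of origin is recovered by comparing its index against the step's fixed pivot), copy off the value fields to obtain $L_{sorted}$, rebuild $L$ from $L'_{sorted}$ by one pass, and dismantle $L'_{sorted}$ by scanning $L$ and locating each augmented entry via a logged (reused, $\Theta(\lg n)$-space) binary search in $L_{sorted}$ before zeroing it. An augmented element is $d + \lceil\lg n\rceil = \Theta(d)$ bits, so total space stays $\Theta(nd)$, time is $\Theta(n\lg n)$, and no step is irreversible.

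The substantive case is $d = o(\lg n)$, equivalently $2^d < n$. Here the augmented array alone would occupy $\Theta(n\lg n) = \omega(nd)$ bits, over budget --- and indeed a direct reversible merge sort would need $\Theta(n\lg n)$ bits of merge provenance (stored as indices or as a log) live alongside the sorted output. The resolution is that such an array has at most $2^d < n$ distinct values, so we sort by a reversible counting pass rather than by comparisons. Allocate a zeroed count array $C$ indexed by the $2^d$ possible values; for each $x \in L$ perform the reversible increment of $C[x]$; emit $L_{sorted}$ by writing, for each value $v$ in increasing order, exactly $C[v]$ copies of $v$ into zeroed (hence free-to-fill) output cells; then, because $C$ is merely the histogram of the still-present $L$, run the counting pass backwards (a reversible decrement of $C[x]$ for each $x \in L$) to zero $C$ and deallocate it. Every primitive is reversible, so the energy cost is $0$; from $2^d < n$ we get $2^d\lg n \le nd$, so $C$ fits in $\Theta(nd)$ bits; and the running time $\Theta(n + 2^d) = \Theta(n)$ can be padded up to the stated $\Theta(n\lg n)$ by a reversible busy loop (Theorem~\ref{thm:simplefor}) if the bound is to be met exactly rather than merely as an upper bound. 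If a genuinely comparison-based routine is wanted in this regime, replace the direct-addressed $C$ by a reversible balanced BST keyed on value and carrying a per-node count (built via Theorem~\ref{thm:avl-trees-insert-reversible}, augmented to increment the count when the key is already present, in $O(n\lg 2^d) = O(nd)$ time and $O(2^d\lg n) = O(nd)$ space), produce $L_{sorted}$ by a reversible in-order traversal, and uncompute the tree by running the inserts in reverse.

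The main obstacle is precisely this tension: a worst-case sorting permutation carries $\Theta(n\lg n)$ bits, which must be stored somewhere in order to undo the sort, whereas the budget $O(nd)$ is smaller than that once $d = o(\lg n)$. The escape is that the small-$d$ regime forces many collisions, so the (stable) sorted output is determined by $L$ --- which we keep --- together with only the $\le 2^d$ per-value counts, information that does fit in $O(2^d\lg n) = O(nd)$ bits; a counting/bucketing sort makes those counts the only state it must retain. What remains is routine: the arithmetic $2^d\lg n \le nd$ for $d \le \lg n$, checking that each primitive invoked (reversible increment/decrement, writing into zeroed cells, reversible for loops, and in the alternative the reversible AVL operations) is already known to be reversible with the claimed overheads, and confirming the two-way-branch reversibility of the merge step in the large-$d$ case.
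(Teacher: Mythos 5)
Your proposal is correct, and its skeleton matches the paper's: both split on whether $d=\Omega(\lg n)$ (in which case the index-augmented reversible merge sort of Theorem~\ref{thm:comparison-sort-reversible-word} already fits the budget, since the $\Theta(\lg n)$-bit indices are absorbed into the $\Theta(d)$-bit elements) or $d=o(\lg n)$ (in which case the sort is reduced to a counting-style sort whose retained state is per-value information rather than a full $\Theta(n\lg n)$-bit permutation). The difference is in how the small-$d$ case is realized. The paper builds a reversible AVL tree over the $k$ distinct keys of $L$ (with an $n$-bit uniqueness log and the rotation log of Theorem~\ref{thm:avl-trees-insert-reversible}), then runs Counting Sort consulting the static tree at $O(\lg k)$ per element, for $O(n\lg k)$ time and $\Theta(nd)$ space; this keeps every operation on element values a comparison and lets the auxiliary structure scale with the actual number $k\le 2^d$ of distinct values. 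Your primary route instead allocates a direct-addressed histogram of all $2^d$ possible values, which is simpler, avoids the AVL machinery entirely, and still fits the space budget by your (correct) arithmetic $2^d\lg n\le nd$ for $d\le\lg n$; the cost is that it is no longer comparison-based (values are used as addresses) and always pays for $2^d$ cells even when $k$ is tiny. Your fallback --- an AVL tree keyed on value carrying per-node counts --- is essentially the paper's construction, so you have both endpoints covered. One small point of care in your histogram version: the emission loop that writes $C[v]$ copies of $v$ must be organized as a protected for loop (the count $C[v]$ is not modified during emission) with a running output pointer that ends at the known value $n$, so that it unrolls cleanly; with that noted, every step is reversible and the claimed bounds hold.
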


\begin{proof}
As we saw in the preceding theorem, reversible comparison sort is straightforward to perform if we first augment each element in the array with a number corresponding to its index in the original list. When the size of the values $d$ is $\Omega(\lg n)$, then  we attain the optimum space bound as the $\lg n$-sized indices get absorbed into the space cost of the $d$-bit elements. However, we are faced with a conundrum when $d$ is $o(\lg n)$.

To handle this case, we shall employ counting sort in order to reduce the problem of sorting $L$ to the problem of sorting the \emph{unique keys} of $L$. We utilize a reversible AVL tree (described in Section~\ref{sec:avl-trees}) to achieve this.

This algorithm works by reducing the array $L$ only to its unique elements, to sort those elements, and finally to perform a Counting Sort of the original array, consulting our sorted elements to determine the final order. Let $k$ be the number of distinct elements of $L$. We employ a reversible AVL tree, with actions carefully specified so as to keep them reversible. First, we read the distinct elements of $L$ into the tree, bringing it to a size $O(kd)$, and keeping an $O(n)$-bit uniqueness log and an $O(nd)$-bit rotation log (see reversible AVL trees discussion) as we go. This step takes $O(n \lg n)$ time. Next, we apply Counting Sort on the original array (see Section~\ref{sec:counting-sort}), consulting the static tree in $O(\lg n)$ time for each element and achieving the $O(n \lg n)$ runtime in this step as well. The output array may be copied and the entire algorithm reversed (Note: not an unrolling of a log, but rather an execution of a reversed version of the algorithm) to leave us with our desired arrays $L$ and $L_{sorted}$.

This algorithm works by reducing the array $L$ only to its unique elements, to sort those elements, and finally to perform a Counting Sort of the original array, consulting our sorted elements to determine the final order. Let $k$ be the number of distinct elements of $L$. We employ a reversible AVL tree, with actions carefully specified so as to keep them reversible. First, we read the distinct elements of $L$ into the tree, bringing it to a size $O(kd)$, and keeping an $O(n)$-bit uniqueness log and an $O(nd)$-bit rotation log (see reversible AVL trees discussion) as we go. This step takes $O(n \lg n)$ time. Next, we apply Counting Sort on the original array (see Section~\ref{sec:counting-sort}), consulting the static tree in $O(\lg n)$ time for each element and achieving the $O(n \lg n)$ runtime in this step as well. The output array may be copied and the entire algorithm reversed (Note: not an unrolling of a log, but rather an execution of a reversed version of the algorithm) to leave us with our desired arrays $L$ and $L_{sorted}$.

In terms of the intricate details glossed over above, the most involved are in the first step: reversibly constructing a reversible AVL tree out of the unique elements of $L$. We proceed as follows: making a single pass over our array $L$, we add every element into the tree. If an element is the first of its exact value to be encountered, we store a corresponding uniqueness bit as true and add the element to the tree. If an element's value already exists in our tree, we store its uniqueness bit as false and move on (never adding duplicates to the tree). These $n$ uniqueness bits allow us to reverse the algorithm, as we know for which elements we modified the tree and on which ones we did not. By theorem \ref{thm:avl-trees-insert-reversible}, these insertions take $\Theta(nd)$ space and $O(n \lg n)$ time. 

Once the AVL tree is constructed, the rest of the algorithm is a straightforward Counting Sort with a slower $\Theta(\lg k)$ lookup time, yielding the $O(n \lg k)$ time which is optimal for Comparison-based Sort. In addition to the fully-reversible AVL tree data structures, our algorithm employs only simple for-loop passes over the input and reversible two-way branching in the AVL tree, ensuring its reversibility. The entire algorithm takes $\Theta(nd)$ space and $O(n \lg n)$ time, matching the best irreversible comparison sorts up to constant-factors of space and time.
\end{proof}

\iffull
We can also create reversible version of other sorting algorithms.

\begin{pcode}
$\proc{Duplicated-Insertion-Sort}(A)$ \+  
    $\id{for} i=1$ \id{to} $A.$Length: \+
        $ \ul{\keyw{count}} = 0 $
        $ \id{for} j=1$ \id{to} $A.$Length: \+
            $ \id{if} A[i] > A [j]$: \+
                $ \keyw{count}$ += $1$  \-
        $B[\keyw{count}] = A[i] $
\end{pcode}
\fi

\begin{theorem} \label{thm:insertion-sort-reversible}
Reversible Duplicated Insertion Sort runs in $\Theta(n^2)$ time,  $\Theta(n)$ space, and  $0$ energy.
\end{theorem}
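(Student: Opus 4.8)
The plan is to verify the three claimed bounds for the routine $\proc{Duplicated-Insertion-Sort}$ displayed above, treating the time and space accounting as routine and putting essentially all of the content into the energy bound. Note first that this algorithm is really a rank-by-comparison sort: for each $i$ it counts how many entries of $A$ are smaller than $A[i]$ and writes $A[i]$ into that slot of a fresh output array $B$, leaving the input $A$ (the ``duplicated'' copy) intact.

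\textbf{Time and space.} The routine is two nested loops over the $n = A.\id{Length}$ array indices, doing $O(1)$ work per inner iteration (one comparison, a conditional increment, or an array write), so the running time is $\Theta(n^2)$. The only new storage is the output array $B$ of $n$ words together with the scalar counter $\keyw{count}$ that is freshly allocated and released once per outer iteration. Crucially, every piece of control logic used below is reversible \emph{without} logging overhead (simple for loops and protected \keyw{if}s cost $0$ extra space, per Table~\ref{tab:prim-summary}), so nothing accumulates across the $\Theta(n^2)$ iterations and the total space is $\Theta(n)$ words.

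\textbf{Energy.} I would show the routine compiles to instantaneously reversible code, so its energy cost is $0$ by the definitions of Section~\ref{sec:WordRAM}. Both the outer loop on $i$ and the inner loop on $j$ are \emph{simple for loops} (Theorem~\ref{thm:simplefor}): each iterator ranges over a fixed $1,\dots,n$, and neither the iterator, the bound $A.\id{Length}$, nor (in the inner loop) the entry $A[i]$ is modified in the body, so each is reversible at zero energy. The test $A[i] > A[j]$ is a \emph{protected} \keyw{if} (Theorem~\ref{thm:if}): the variables in its condition are never written inside, which only does the reversible increment $\keyw{count} \mathrel+= 1$. The write $B[\keyw{count}] = A[i]$ is the reversible $\proc{Copy}(A[i], \ul{B[\keyw{count}]})$ provided the target slot is still zero. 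Finally, before the next outer iteration I would reset $\keyw{count}$ back to $0$ by re-running the inner loop with the increment replaced by a decrement --- again a simple/protected loop whose condition reads only the unmodified $A$ --- and then deallocate $\keyw{count}$. At the end $A$ is bit-for-bit unchanged and $B$ holds a sorted copy; no bit was ever irreversibly overwritten, so the energy is $0$.

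\textbf{Main obstacle.} The delicate point is keeping $B[\keyw{count}] = A[i]$ a genuine $\proc{Copy}$ into an empty cell when $A$ has repeated keys: two equal elements would be assigned the same rank and collide, corrupting $B$. I would resolve this by breaking ties on original index --- counting the $j$ with $A[j] < A[i]$, or with $A[j] = A[i]$ and $j < i$ --- so that $\keyw{count}$ is the unique rank of $A[i]$ and each slot of $B$ is written exactly once; this is the same index-augmentation device used in the reversible merge sort of Theorem~\ref{thm:comparison-sort-reversible-word} (with an appropriate $\pm 1$ offset if $B$ is $1$-indexed). A secondary check, which is immediate once the above is in place, is that the reverse pass zeroing $\keyw{count}$ terminates correctly: its loop is simple and $A$ is fixed, so it exactly undoes the forward counting pass.
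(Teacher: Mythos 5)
Your proposal matches the paper's proof in its essential structure: both treat the loops as simple for loops and the comparison as a protected \keyw{if}, perform the write into $B$ as a \proc{Copy} into an empty cell, and then undo (``unroll'') the inner counting pass after each outer iteration to return $\keyw{count}$ to zero so that no garbage accumulates. The one place you go beyond the paper is in explicitly handling repeated keys via an index tie-break so that $B[\keyw{count}]$ is always a genuinely empty target; the paper's proof silently assumes this (``we know the output array, $B$, will be empty''), so your extra step is a welcome patch rather than a divergence.
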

\iffull
\begin{proof}
The primary adaption we will make to the algorithm is unrolling after each iteration of the outer for loop. The for loops are simple for loops and thus incur only constant-factor overheads when done reversibly. During each inner loop, we perform $n$ comparisons, $O(n)$ increments, and a single addition, because we know the output array, $B$, will be empty. This takes $\Theta(n)$ space in the log and after every iteration we will unroll all operations except for the addition to $B$. This unrolling returns our counter back to zero. The counter and log are empty after each unrolling, so we are finished once we perform the final insertion into $B$.
\end{proof}
\fi
\subsubsection{Counting Sort} \label{sec:counting-sort}
Counting sort involves counting the number of elements at or below a specific value, and then running through them and adding them to an array based on how many elements are below them. This achieves $\Theta(n+k)$ time and space where $k$ is the size of the maximum integer to be sorted.
\iffull
Pseudocode is given below.
\begin{pcode}
$\proc{Counting-Sort}(A,k):$ \+                                                                
     \keyw{for} $j$=1 \keyw{to} \id{length}[A]:       //simple for loop, $\lg(\id{length}[A])$ energy \+
          $C[A[j]]$=$C[A[j]]$+1        //inc is free \-
     \keyw{for} $i$=1 \keyw{to} $k$:                 //simple for loop, $\lg k$ energy \+
          $C[i]$=$C[i]$+$C[i-1]$       //addition is free \-
     \keyw{for} $j$=\id{length}$[A]$ \id{to} 1:        //simple for loop, $\lg(\id{length}[A])$ energy \+
          $B[C[A[j]]]$=$A[j]$       //free if we assume the sorted elements are unique
                                // otherwise unconstrained replacement costs w energy
          $C[A[j]]$=$C[A[j]]$-1    //dec is free \-
     \keyw{delete}($\ul{C}$)         //free delete because we promised C is now empty
     \keyw{return}($B$) 

log \proc{Counting-Sort}$(A,k):$ \+                                                                
     \keyw{for} $j$=1 \keyw{to} \id{length}$[A]$:   //simple for loop, $\lg(\id{length}[A])$ space \+
          $C[A[j]]$=$C[A[j]]+1$        //inc is free \-
     \keyw{for} $i$=1 \keyw{to} $k$:             //simple for loop, $\lg k$ space \+
          $C[i]$=$C[i]$+$C[i-1]$   //reversible and the answer is always available, \-
     \keyw{for} $j$=\id{length}$[A]$ \id{to} 1:     //simple for loop, $\lg(\id{length}[A])$ space \+
          $B[C[A[j]]]$=$A[j]$     //free if we assume the sorted elements are unique
                             //otherwise unconstrained replacement costs w space
          $C[A[j]]$=$C[A[j]]$-1    //dec is free \-
     \keyw{delete}($\ul{C}$)      //free delete because we promised C is now empty
     \keyw{return}($B$)
\end{pcode}
\fi

\begin{theorem} \label{thm:counting-sort}
Counting Sort can be done in $\Theta(n + k)$ time,  $\Theta(n + k)$ space, and  $\Theta(wn + \lg k)$ energy.
\end{theorem}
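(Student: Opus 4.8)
The plan is to walk through the three passes of the standard counting sort — counting, prefix-summing, and scattering — together with the cleanup of the auxiliary count array, tallying the energy each contributes; since all overheads are constant-factor, the classical $\Theta(n+k)$ time and space bounds carry over unchanged. The arithmetic is free: in the counting pass the only non-control operation is $C[A[j]] \mathrel+= 1$, in the prefix-sum pass it is $C[i] \mathrel+= C[i-1]$, and in the scatter pass it is $C[A[j]] \mathrel-= 1$, each an in-place reversible word operation of zero energy. For loop control, the counting and scattering loops run $\Theta(n)$ times and the prefix-sum loop runs $k$ times; taking the straightforward (non-reversible) \keyw{for} loop, the iteration counter is discarded at loop exit, costing $\Theta(\lg n)$, $\Theta(\lg k)$, and $\Theta(\lg n)$ energy respectively (these could instead be made free by Theorem~\ref{thm:simplefor}, but even charged naively they contribute only $\Theta(\lg n + \lg k)$).

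The dominant term comes from the scatter step, which executes $B[C[A[j]]] = A[j]$ once per input element. Since the keys of $A$ need not be distinct, the standard algorithm realizes this as an ordinary word overwrite destroying $w$ bits, giving $\Theta(wn)$ over all $n$ elements; if the keys were promised distinct and the output block promised zeroed, each write could instead be a free \proc{Copy}, and it is precisely the absence of that promise that the factor $w$ records. Finally, the count array $C$ occupies $\Theta(k)$ words and is not all-zero when the scatter finishes, so a bulk erase would cost $\Theta(wk)$; instead I would reversibly undo the prefix-sum pass and then the counting pass (each step an in-place subtraction, hence free), restoring $C$ to all zeros, after which deallocation is free by Theorem~\ref{thm:freelist}. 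Summing, $\Theta(wn) + \Theta(\lg n) + \Theta(\lg k) + \Theta(\lg n) = \Theta(wn + \lg k)$; the matching lower bound follows because this implementation genuinely destroys $\Theta(\lg k)$ bits in the prefix-sum loop counter and $\Theta(w)$ bits at each of the $n$ scatter writes when keys repeat.

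The hard part will be the bookkeeping of exactly what is charged and why: justifying that the $n$ scatter writes truly cost $\Theta(wn)$ rather than being free (which hinges on not assuming distinct keys or a zeroed output, and on whether the original array is retained afterward), and arranging the dismantling of the $\Theta(k)$-word count array by undoing its construction rather than by a bulk erase, so the bound remains $\Theta(wn + \lg k)$ and does not degrade to $\Theta(w(n+k))$. One must also track the regime relations — $\lg n = O(wn)$, while $\lg k$ can dominate $wn$ when $k$ is astronomically large — so that the asymptotics collapse to exactly $\Theta(wn + \lg k)$.
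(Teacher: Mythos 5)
Your proposal is correct and follows essentially the same decomposition as the paper's proof: the dominant $\Theta(wn)$ term from the unconstrained overwrites $B[C[A[j]]]=A[j]$ in the scatter pass (free only under a distinctness promise), the $\Theta(\lg n + \lg k)$ term from the discarded loop counters, and free in-place increments, additions, and decrements everywhere else, yielding $\Theta(wn+\lg k)$ with the classical time and space bounds unchanged. Your explicit reversal of the counting and prefix-sum passes to restore $C$ to zero before deallocation is a slightly more careful treatment than the paper's bare assertion that $C$ is empty at the end, but it reaches the same conclusion that the cleanup contributes no energy.
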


\iffull
\begin{proof}
The time and space complexity comes from maintaining and iterating over the input as well as an array of size $k$ which represents the range of the numbers. The unconstrained replacement when sorting the numbers results in an $O(wn)$ energy cost and irreversibly running the for loop over the range requires $\Theta(\lg k)$ energy, because we lose the index which ranged up to $k$.
\end{proof}
\fi

\begin{theorem}
If all entries are unique, then Counting Sort has an energy complexity of $\Theta(\lg n + \lg k)$ energy.
\end{theorem}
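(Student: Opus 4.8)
The plan is to derive this as a refinement of Theorem~\ref{thm:counting-sort}, pinning down exactly which part of the $\Theta(wn+\lg k)$ bound survives when the keys are distinct. In the proof of Theorem~\ref{thm:counting-sort} the $\Theta(wn)$ term is charged entirely to the body $B[C[A[j]]]=A[j]$ of the third loop: with no promise on the destination slot this is an unconstrained word overwrite costing $w$ bits, executed $n$ times. The residual $\Theta(\lg k)$ comes from running the range loop \keyw{for} $i=1$ \keyw{to} $k$ without retaining its index, and, hidden inside the $wn$ term, the two passes \keyw{for} $j=1$ \keyw{to} $\id{length}[A]$ similarly cost $\Theta(\lg n)$. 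So the target bound is obtained by knocking out the $wn$ term and keeping track of these small loop-counter costs.

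First I would eliminate the $wn$ term. When the $n$ input keys are pairwise distinct, the histogram built by the first loop is $0/1$-valued, so after the prefix-sum loop every key occupies a bucket of width one; hence the $n$ addresses $C[A[j]]$ visited by the third loop are pairwise distinct and exhaust $\{1,\dots,n\}$. Since the output array $B$ is allocated zeroed, each write in the third loop lands in a slot known to hold $0$, so it can be written $\proc{Copy}(A[j],\underline{B[C[A[j]]]})$, or equivalently annotated with \keyw{assert} that the target is empty. By the energy accounting of the Word/Transdichotomous RAM a copy into a known-zero word is free, and because $A$ is never overwritten no sorting entropy $\lg n!$ is ever incurred; the third loop therefore costs $0$ energy and the $\Theta(wn)$ contribution disappears.

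Next I would account for the surviving $\Theta(\lg n+\lg k)$. All three loops are simple \keyw{for} loops (Theorem~\ref{thm:simplefor}) whose bodies --- increment, in-place addition, and now the free copy --- are reversible, so the only energy left is in disposing of the terminal loop indices once the routine finishes and returns its scratch to canonical form: $\Theta(\lg n)$ for each of the two passes over the input and $\Theta(\lg k)$ for the pass over the universe, for $\Theta(\lg n+\lg k)$ total, matching the claim. (Equivalently, one may trade this energy for $O(\lg n+\lg k)$ additional space by logging the iteration counts as in the general-for-loop construction; the theorem states the energy of the standard implementation.)

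The step I expect to be the main obstacle is the $\Omega(\lg n+\lg k)$ direction, i.e.\ reconciling the claimed lower bound with the fact that simple \keyw{for} loops are advertised as free in Theorem~\ref{thm:simplefor} and in the summary table. The resolution I would commit to is the interface convention of Section~\ref{sec:DataStructures}: the routine must return all scratch state --- including any live copies of the declared bounds $n$ and $k$ and the counters derived from them --- to a canonical zeroed form, and one must check that the sorted output of distinct keys does not by itself reveal the declared universe size $k$ (nor, for a streaming-style accounting, the pass length $n$), so that $\Omega(\lg k)$ and $\Omega(\lg n)$ bits genuinely must be destroyed by the counting-sort strategy even though they are dwarfed by the $wn$ of the general case.
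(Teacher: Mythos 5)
Your proposal is correct and follows essentially the same route as the paper's own (much terser) proof: uniqueness of the keys makes every write $B[C[A[j]]]=A[j]$ land in a known-empty slot, killing the $\Theta(wn)$ overwrite term, and what survives is the $\Theta(\lg n + \lg k)$ cost of irreversibly discarding the loop counters. Your added justifications --- the distinctness of the target addresses via the $0/1$ histogram, and the reconciliation with the zero-energy claim for simple for loops --- are elaborations the paper leaves implicit rather than a different argument.
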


\iffull
\begin{proof}
If all the entries in the Counting Sort are unique, then the final for loop never overwrites a previous value, thus not incurring that cost. However, we still must pay the cost for the irreversible for loop, yielding an energy complexity of $\Theta(\lg n + \lg k)$.
\end{proof}
\fi

\begin{theorem} \label{thm:counting-sort-reversible}
Reversible Counting Sort can be done in $\Theta(n + k)$ time,  $\Theta(n + k)$ space, and  $0$ energy.
\end{theorem}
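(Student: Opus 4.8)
The plan is to instrument the counting-sort pseudocode given above so that every step is one we have already certified as reversible and free, and then to account for the one piece of residual scratch state. First, all three loops are \emph{simple for loops} in the sense of Theorem~\ref{thm:simplefor} --- the loop variable ranges over a fixed interval and the body never touches it or its bound --- so each runs reversibly with constant-factor overhead and zero energy; this alone removes the $\Theta(\lg n + \lg k)$ that Theorem~\ref{thm:counting-sort} charged for discarding the loop indices. The count updates $C[A[j]] \mathrel+= 1$ and $C[A[j]] \mathrel-= 1$ are increments/decrements, and the prefix-sum step $C[i] \mathrel+= C[i-1]$ is an in-place addition, all reversible at zero cost; its inverse is the loop $i = k, \dots, 1$ performing $C[i] \mathrel-= C[i-1]$, which is correct precisely because that index order leaves $C[i-1]$ untouched until after it has been used.

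The step that costs $\Theta(wn)$ in the irreversible version is the placement $B[C[A[j]]] \leftarrow A[j]$, since an unconstrained overwrite of a $w$-bit cell destroys $w$ bits. I would argue that here no overwrite happens: $B$ is allocated zeroed, and the decrement-after-write discipline assigns the successive elements of any common value $v$ to the distinct, still-zero cells $C[v], C[v]-1, C[v]-2, \dots$; hence every write is a \proc{Copy} into empty memory, costs $0$ energy, and has inverse \proc{DestroyCopy}, even when the input has duplicates. I expect this duplicate-handling argument --- verifying that the output positions are genuinely distinct (equivalently, that the sort is stable in the right direction) and that the reverse placement loop therefore correctly re-zeros $B$ cell by cell --- to be the one delicate point of the proof; everything else is bookkeeping.

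It remains that the scratch array $C$ and the loop counters do not return to their initial all-zero state on their own, whereas our model requires the machine to finish in its original configuration apart from a copy of the output. The simplest fix is compute--copy--uncompute (cf.\ Theorem~\ref{RSubroutineThm}): run the forward pass to build $B$, \proc{Copy} $B$ into a fresh zeroed array $B'$, then execute the inverse of the whole pass --- reverse placement (using the still-intact input to restore $C$ while re-zeroing $B$ via \proc{DestroyCopy}), reverse prefix sum, reverse count-building --- leaving the input untouched, $C$, $B$, and all counters zero, and $B'$ holding the sorted output. Alternatively one can uncompute $C$ and the position counters directly from the input without ever touching $B$ (their final contents are a deterministic function of the multiset of keys), which avoids the extra copy; either way no log stack is used, so the overhead is only constant-factor.

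Putting the pieces together: the forward pass, the copy, and the reverse pass are each $\Theta(n+k)$, for $\Theta(n+k)$ time; storage is the input array, $B$ (and $B'$), $C$, and $O(1)$ counters, i.e.\ $\Theta(n+k)$ space; and the energy is exactly $0$, since every operation performed is a simple for loop, an increment/decrement, an in-place add/subtract, or a copy into zeroed memory --- none of which destroys a bit. The main obstacle, as noted, is the collision-freeness/stability argument for the placement step that makes each write a \proc{Copy} rather than an overwrite.
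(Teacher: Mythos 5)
Your proposal is correct, and on the key step it takes a genuinely different route from the paper. The paper's proof treats the placement $B[C[A[j]]]=A[j]$ in the third loop as an overwrite that must be made reversible by \emph{logging}: it observes that at most $n$ items are ever overwritten, so recording them costs only a constant-factor overhead within the $\Theta(n+k)$ space bound (the complete-logging technique), and leaves the rest to the reversibility of increments, additions, and simple for loops. You instead argue that no logging is needed at all, because the decrement-after-write discipline sends the $n$ elements to $n$ distinct cells of the initially zeroed array $B$, so every placement is a \proc{Copy} into empty memory with \proc{DestroyCopy} as its inverse; you then restore $C$ and the counters with a compute--copy--uncompute pass in the style of Theorem~\ref{RSubroutineThm}. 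Your version buys a tighter argument with no log stack for the placement step, and it explicitly handles the cleanup of the scratch array $C$ --- a point the paper's proof glosses over (its pseudocode asserts $C$ is empty after the third loop, but after that loop $C[i]$ actually holds the count of elements strictly below $i$, so an explicit reverse prefix-sum and reverse counting pass, as in your uncompute phase, is really needed to zero it for free). The paper's version buys brevity and avoids the collision-freeness argument you correctly flag as the delicate point. Both yield the same $\Theta(n+k)$ time and space and $0$ energy.
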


\iffull
\begin{proof}
Increments, addition, and simple for loops can all be done reversibly with constant-factor overheads. The assignment in the third loop is the only potential problem; however, we cannot have more than $n$ items overwritten, so if we log these values we suffer no more than a constant-factor overhead in space.
\end{proof}
\fi

\subsection{Graph Algorithms}
Frank \cite{Frank99} argues that Breadth-first Search and Depth-first search can be done reversibly. We reproduce this result in our model and give a different analysis.
\iffull
Breadth-first search requires keeping a list of visited nodes and a queue of nodes to be visited. The visited node is selected from the queue and then checked to see if it is the target node. If not, its neighbors are checked to see if they are in the list of visited nodes, and if not they are added to the queue. 
\fi

\begin{theorem}\label{thm:bfs}
Breadth-first Search runs in $\Theta(V+E)$ time,  $\Theta(V)$ space, and  $\Theta(wV+E)$ energy.
\end{theorem}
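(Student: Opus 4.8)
The plan is to fix a concrete textbook implementation of BFS inside the Energy Word RAM model and analyze its three costs directly. The time and space bounds will be essentially the classical ones, since the energy model does not otherwise perturb control flow; the energy bound comes from a careful accounting of exactly which primitive operations of this standard implementation destroy information.

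First I would spell out the implementation: a $V$-bit \id{visited} array, a FIFO \id{queue} of vertex indices (stored either in a flat array of $V$ words with head/tail pointers, or as a reversible doubly-linked list via the linked-list theorem), and a parent/distance array of $V$ words initialized to zero to hold the BFS output. The source is marked visited and enqueued; the main loop repeatedly dequeues a vertex $u$ into a local variable, walks $u$'s adjacency list, and for each neighbor $v$ tests \id{visited}$[v]$ and, if $v$ is new, marks it visited, writes $u$ into $v$'s (known-zero) parent slot, and enqueues $v$. For time, each vertex is enqueued and dequeued exactly once and each edge of the source's component is scanned $O(1)$ times, with all pointer-following, end-of-list tests, and counter maintenance costing $O(1)$ per step, so the running time is $\Theta(V+E)$. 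For space, \id{visited} is $\Theta(V)$ bits, the queue never holds more than $V$ indices so it is $\Theta(V)$ words, and there are $O(1)$ auxiliary pointers, giving $\Theta(V)$ auxiliary words (the $\Theta(V+E)$ entry in Table~\ref{algorithms table} additionally counts the input graph itself).

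The heart of the argument is the energy bound $\Theta(wV+E)$. For the upper bound I would classify the primitives: (i) advancing adjacency-list and queue pointers, incrementing loop counters, flipping a known-zero \id{visited} bit to $1$, and copying $u$ into a known-zero parent slot are all reversible, costing $0$; (ii) each neighbor test \id{visited}$[v] \stackrel{?}{=} 0$ produces an $O(1)$-bit comparison result that the standard algorithm discards after branching, and since the taken branch mutates \id{visited}$[v]$ this bit cannot be uncomputed for free, so we pay $\Theta(1)$ per edge-scan, i.e. $\Theta(E)$ total; (iii) each of the $V$ dequeued vertices sits in a $w$-bit local variable whose value is an index that the algorithm, keeping no copy of it, zeroes out (equivalently, deallocates naively) at cost $w$, contributing $\Theta(wV)$; and (iv) restoring the machine to its initial state at termination means zeroing the $V$-bit \id{visited} array and tearing down the queue, which is subsumed in $O(wV)$. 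Summing yields $O(wV+E)$. For the matching lower bound I would observe that these steps are forced for this implementation: on every input it performs exactly $V$ dequeues, each destroying $w$ bits of unrecoverable index data, and $\Theta(E)$ neighbor tests whose result bits are genuinely erased, so its execution dissipates $\Omega(wV+E)$ bits; the $wV$ term is the dominant one on a path graph ($E = V-1$) and the $E$ term dominates on a dense graph ($E = \Theta(V^2)$), so both are realized.

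The main obstacle I anticipate is keeping the lower bound honest rather than an artifact of a wastefully-written program — after all, Theorem~\ref{thm:bfs-reversible} shows BFS can be run with zero energy. I would handle this by committing up front to one fixed ``standard'' pseudocode (the one above, with no \keyw{log} blocks and naive deallocation of dequeued variables) and stating clearly that the claim is a tight analysis of \emph{that} procedure: every dequeued index and every discarded comparison bit in its trace is information this procedure does not retain, so Landauer's principle charges it. The contrast with the reversible version is then exactly that the latter applies complete logging together with the reversible control-logic and linked-list primitives to preserve precisely these $\Theta(wV+E)$ bits.
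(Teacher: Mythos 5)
Your proposal is correct and follows essentially the same accounting as the paper's (much terser) proof: the $\Theta(E)$ term from the one-bit comparison results discarded per edge scan, and the $\Theta(wV)$ term from the per-vertex destruction of queue entries/pointers and the clearing of the visited structure. Your added care about the lower bound being a property of the fixed standard implementation (in contrast with Theorem~\ref{thm:bfs-reversible}) is a reasonable elaboration but not a different method.
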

\iffull
\begin{proof}
In a BFS we visit every node at most once and we check to see if a node should be queued up no more than once for every edge. We also have this number of comparisons which each require one bit of energy if done irreversibly. Further, an irreversible queue requires the creation and destruction of pointers, and the list of visited nodes must be cleared, leading to $\Theta(V)$ words being overwritten.
\end{proof}
\fi

\begin{theorem}\label{thm:bfs-reversible}
Reversible Breadth-first Search can runs in $\Theta(V)$ time,  $\Theta(V+E)$ space, and  $0$ energy.
\end{theorem}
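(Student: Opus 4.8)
The plan is to follow Frank's reversible BFS \cite{Frank99} and rebuild it entirely from the reversible primitives established earlier, so that proving the theorem reduces to bookkeeping of the three resources. I keep the usual BFS state — a visited array indexed by vertex, a BFS/parent forest, and a FIFO queue — but replace every destructive step of the irreversible version (the $\Theta(E)$ membership comparisons, the enqueue/dequeue pointer churn, and the final clearing of the visited array) with a logged, reversible counterpart. Reversible queues come from the corollary on stacks and queues, which is built on reversible doubly linked lists; each comparison deciding whether a neighbor has already been visited is a protected \keyw{if} (Theorem~\ref{thm:if}), reversible at zero energy provided its outcome bit is retained; and pointers are moved and swapped rather than overwritten, so no pointer is ever destroyed. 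The control flow is expressed with the reversible for/while machinery of Section~\ref{sec:control-logic} wrapped in a \keyw{log} block.

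For the energy bound, because the whole procedure is assembled only from reversible control logic, reversible linked-list/queue operations, and protected conditionals — none of which destroys a bit — the energy cost is exactly $0$. The three sources of energy in the irreversible analysis (Theorem~\ref{thm:bfs}) are each neutralized: the $\Theta(E)$ comparison bits are pushed onto the log stack instead of being discarded; the queue moves pointers rather than creating and destroying them; and instead of zeroing the $\Theta(V)$-word visited array at the end, we unroll the \keyw{log} block, which restores it reversibly. Hence energy $=0$, matching the claim.

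For the space bound, on top of the $\Theta(V)$ words for the visited array and the BFS forest, the reversible version must store the log that makes the traversal invertible. The dominant logged quantity is one bit for each of the $\Theta(E)$ edge examinations (the outcome of each ``already-visited?'' test), together with the general-for-loop iteration counts (Theorem~\ref{thm:generalfor}) and the $O(1)$-per-jump paired-branch bits (Theorem~\ref{thm:jump}). Summing these gives a log of size $\Theta(V+E)$, so the total space is $\Theta(V+E)$, as claimed.

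For the time bound, each reversible primitive I use incurs only a constant-factor time overhead over its irreversible analogue (Theorems~\ref{thm:jump}, \ref{thm:if}, \ref{thm:simplefor}, \ref{thm:generalfor}, and the queue corollary), so the reversible traversal runs within a constant factor of a standard BFS that performs $\Theta(1)$ work per dequeued vertex. The delicate point — and the step I expect to be the main obstacle — is discharging the neighbor-scanning into the claimed $\Theta(V)$ time rather than the naive $\Theta(V+E)$: the plan is to work in the Energy Transdichotomous RAM, represent each vertex's adjacency packed into $O(1)$ words, and identify-and-enqueue its not-yet-visited neighbors with a constant number of word-level operations per vertex, charging the logged comparison bits to \emph{space} rather than to additional \emph{time}. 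Making this word-level neighbor step simultaneously reversible and genuinely constant-time per vertex — so that the overall traversal is $\Theta(V)$ while the log still accounts for all $\Theta(E)$ comparison outcomes in space — is the crux of the argument; everything else is a routine application of the reversible primitives.
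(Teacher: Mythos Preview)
Your argument for the energy and space bounds is sound and matches the paper's approach: make the queue a reversible doubly linked list, do comparisons reversibly, and move pointers instead of destroying them. The paper's own proof is in fact nothing more than those three sentences; it does not engage with the time bound at all.

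The gap in your proposal is the attempt to justify the $\Theta(V)$ running time. That bound is almost certainly a typographical slip in the theorem statement: the paper's own summary table (Table~\ref{tab:algo}) lists Reversible BFS as $\Theta(V+E)$ time, and the paper's proof gives no hint of any sublinear-in-$E$ trick. Your plan to achieve $\Theta(V)$ time --- packing each vertex's adjacency into $O(1)$ words and processing all neighbors with a constant number of word operations --- cannot work for general graphs. A vertex of degree $d$ needs $\Omega(d/w)$ words merely to name its neighbors, so for any graph with $E=\omega(Vw)$ the adjacency does not fit in $O(1)$ words per vertex, and you cannot enqueue $\Theta(E)$ vertices in $\Theta(V)$ total operations. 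There is also an internal inconsistency: you propose to push $\Theta(E)$ comparison-outcome bits onto the log stack, but each such push is itself an operation, so the log you describe already forces $\Omega(E)$ time regardless of how cleverly the neighbor scan is packed. You cannot ``charge the logged comparison bits to space rather than to time'' --- writing them \emph{is} time.

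In short: drop the word-parallel neighbor step entirely, read the intended bound as $\Theta(V+E)$ time, and your first two paragraphs already constitute a correct (and more detailed) version of the paper's proof.
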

\iffull
\begin{proof}
Comparison and pointer following can be done reversibly. If the queue maintained as a doubly-linked list, then we can add and remove things reversibly via pointer swapping. Copying values into a list is also a reversible operation.
\end{proof}
\fi

\begin{corollary}\label{thm:dfs-reversible}
Reversible Depth First Search can runs in $\Theta(V+E)$ time,  $\Theta(V)$ space, and  $0$ energy.
\end{corollary}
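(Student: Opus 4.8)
The plan is to mirror the proof of Theorem~\ref{thm:bfs-reversible}, replacing the breadth-first frontier queue by a depth-first recursion (equivalently, an explicit LIFO stack) and checking that every ingredient remains reversible. Recall that standard irreversible DFS visits each vertex once and examines each directed edge once, giving $\Theta(V+E)$ time and $\Theta(V)$ auxiliary space (a discovery/visited array together with a recursion stack of depth at most $V$); the goal is to match these bounds with zero energy, so the time bound is simply inherited up to constant factors and the work is all in the space and energy accounting.

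First I would implement DFS recursively using reversible function calls (Theorem~\ref{thm:fxncall}): a routine $\proc{Visit}(v)$ marks $v$ discovered, then loops over the neighbors $u$ of $v$ in $v$'s adjacency list, and for each still-undiscovered $u$ records a parent pointer and recurses. The neighbor loop is a simple for loop over a list (Theorem~\ref{thm:simplefor}), contributing only a constant-factor overhead and no energy; pointer following and the comparison testing whether $u$ is discovered cost no energy; and writing the parent pointer $\id{parent}(u)$ is a \proc{Copy} into an initially-empty slot, hence reversible. Each recursion frame stores only $v$ and the current neighbor-iterator ($O(1)$ words), and the recursion depth never exceeds $V$, so the call stack uses $\Theta(V)$ words; alternatively one may keep an explicit stack of (vertex, iterator) pairs, which is reversible by the stack construction of Section~\ref{sec:LinkedLists} and likewise has size $\Theta(V)$. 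An outermost simple for loop over all vertices launches $\proc{Visit}$ on each undiscovered one to handle disconnected graphs.

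The one genuinely non-trivial point is the conditional ``if $u$ is undiscovered, then mark it and recurse''. When $u$ is already discovered the body is skipped and the tested bit is untouched, so that execution behaves as a \emph{protected} if (Theorem~\ref{thm:if}) with no logging penalty; only on the single execution in which $u$ is first discovered does the body flip $u$'s discovery bit, so that one instance is a \emph{general} if and logs one bit. The key observation is that each vertex is discovered exactly once over the whole run, so the total logging from these conditionals is $O(V)$ bits, absorbed into the $\Theta(V)$ auxiliary-space budget and costing no energy. To reverse the outer loop and recover how deep the recursion went, it suffices to retain the discovery order / DFS forest, which is the output of size $\Theta(V)$; with that in hand the entire computation is a composition of reversible primitives, and by the complete-logging and reversible-subroutine techniques of Section~\ref{sec:EnergyReductionTechniques} it can be run forward, have its output copied out by a reversible \proc{Copy}, and then be executed in reverse to restore the initial machine state, exactly as in Theorem~\ref{thm:bfs-reversible}. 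The main obstacle is precisely this bookkeeping: naively one fears a logged bit per examined edge, which would blow the space up to $\Theta(E/w)$ words and break the $\Theta(V)$ bound; once one sees that discovery-bit flips occur only $\Theta(V)$ times, the space bound $\Theta(V)$ and the energy bound $0$ follow, completing the corollary.
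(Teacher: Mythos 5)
Your proof is correct in substance but takes a genuinely different and far more detailed route than the paper. The paper's own proof is a two-sentence reduction: DFS is BFS with the frontier queue replaced by a stack, and since stacks (like queues) are maintained reversibly via the doubly-linked-list construction of Section~\ref{sec:LinkedLists}, the bounds of Theorem~\ref{thm:bfs-reversible} carry over verbatim. You instead rebuild DFS from scratch as a recursive procedure, invoking Theorem~\ref{thm:fxncall} for the call stack, Theorem~\ref{thm:simplefor} for the neighbor loops, and a per-conditional analysis of the discovery test. What your approach buys is an actual justification of the $\Theta(V)$ auxiliary-space figure in the corollary statement, which the paper's one-liner does not really deliver (the paper's reversible-BFS analysis budgets $\Theta(V+E)$ space, so ``same as BFS'' would give the weaker bound). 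One soft spot: your claim that the discovery conditional ``behaves as a protected if'' on executions where the body is skipped is not literally licensed by Theorem~\ref{thm:if} --- protectedness there is a static property of the statement, and since the body flips the very bit being tested, the merge point cannot distinguish the two paths from the discovery bit alone, which naively forces one logged bit per edge examination, i.e.\ $\Theta(E)$ bits. Your closing observation rescues this: the predicate ``the body ran during the examination of edge $(v,u)$'' is equivalent to ``$\id{parent}(u)=v$,'' so the DFS forest (which is output anyway and occupies $\Theta(V)$ words) serves as the reversal information and no separate per-edge log is needed. Stating the \keyw{comefrom} condition explicitly in terms of the parent pointer, rather than appealing to a per-execution reading of the protected-if theorem, would make the argument airtight.
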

\iffull
\begin{proof}
A DFS works exactly the same as a BFS except a stack of nodes to be visited is used instead of a queue. Stacks can be maintained reversibly with the same time, space, and energy complexity as queues, so the resulting DFS algorithm will have the same time, space and energy complexity as the BFS.
\end{proof}
\fi

\subsection{Bellman-Ford}
\iffull
\begin{pcode}
$\proc{Bellman-Ford}():$ \+
     $\keyw{for each}$ $v$ \keyw{in} vertices: \+
          \keyw{if} $v \neq$ source \keyw{then} \ul{v.distance} $ = \infty $  //rest are 0
          $\keyw{\ul{v.$predecessor$}} = null $ \-
		
     \keyw{for} $i = 1$ to vertices.size $- 1:$ \+
          \keyw{for each} edge $(u, v)$ with weight $w$ \keyw{in} edges: \+
               $(w, u.$distance$, \id{tempDist}) = (w, u.$distance$, u.$distance$ + w)$
               \keyw{if} $\id{tempDist} < v.$distance: \+
                    $v.$distance $= \id{tempDist}$
                    $(u, v.$predecessor$) = (u, u)$    //overwriting costs w energy \- \- \-
				
     \keyw{for each} edge $(u, v)$ with weight $w$ \keyw{in} edges: \+
          $(w, u.$distance$, \id{tempDist}) = (w, u.$distance$, u.$distance $ + w)$
          \keyw{if} $\id{tempDist} > v.$distance: \+
               \keyw{print} ``negative weight cycle detected''
               \keyw{fail}
\end{pcode}
\fi

\begin{theorem} \label{thm:bellman-ford}
Bellman-Ford runs in $\Theta(VE)$ time,  $\Theta(V+E)$ space, and  $\Theta(VEw)$ energy.
\end{theorem}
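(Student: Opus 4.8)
The plan is to analyze the standard Bellman--Ford relaxation routine, exactly as written in the pseudocode above, operation by operation in the Energy Word RAM model, using the control-logic results of Section~\ref{sec:control-logic} to discharge the loop overheads. For the time and space bounds I would simply recount the textbook analysis: the initialization loop runs $\Theta(V)$ times, the main double loop performs $\Theta(VE)$ edge relaxations each consisting of $O(1)$ word operations, and the closing negative-cycle check adds $\Theta(E)$ more; every loop here is a simple, protected, or general \keyw{for} loop, so by Theorems~\ref{thm:simplefor}--\ref{thm:generalfor} it costs only a constant-factor time overhead and $O(\lg(VE))$ bits of counter space. Hence the running time is $\Theta(VE)$, and the space is the stored graph ($\Theta(V+E)$) together with the \id{distance} and \id{predecessor} arrays ($\Theta(V)$ words) and a constant number of scratch words, i.e.\ $\Theta(V+E)$.

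For the energy I would walk through every operation that is not already reversible. Initialization writes the sentinel $\infty$ and \textbf{null} into zeroed (underlined) fields, which is free. Inside a relaxation of an edge $(u,v)$ of weight $w_{uv}$: forming $\id{tempDist} = u.\mathit{distance} + w_{uv}$ in a freshly zeroed word is reversible, and it is uncomputed at the iteration's end by subtracting $w_{uv}$ and then $u.\mathit{distance}$ (both still present, since $u \neq v$ on an ordinary edge), so \id{tempDist} contributes nothing; the comparison $\id{tempDist} < v.\mathit{distance}$ costs $1$ bit when its result is discarded; and, when the relaxation fires, the updates $v.\mathit{distance} \mathrel{=} \id{tempDist}$ and $v.\mathit{predecessor} \mathrel{=} u$ are unconstrained overwrites of full words, each costing $w$ energy, because no copy of the previous \id{distance}/\id{predecessor} values is retained to undo them against --- this is the irreducible irreversibility of the algorithm, flagged in the pseudocode comment. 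Summing over the $\Theta(VE)$ iterations gives $O(VE)\cdot 1 + O(VE)\cdot w = O(VEw)$ energy; the loop control is free by Section~\ref{sec:control-logic}, and the $O(E)$ discarded comparisons of the final check are lower order. Since the values we are allowed to leave in memory are precisely the \id{distance} and \id{predecessor} arrays (the algorithm's output), no further cleanup is charged.

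For the matching $\Omega(VEw)$ lower bound I would exhibit a worst-case instance --- e.g.\ a path $v_1 \to v_2 \to \cdots \to v_V$ whose edges are enumerated in an order that forces every edge to relax in each of the $V-1$ passes --- on which the routine performs $\Theta(VE)$ firing relaxations; each firing executes the unconstrained word overwrite $v.\mathit{distance} \mathrel{=} \id{tempDist}$, whose previous estimate ranges over a set of size $\Omega(2^{w})$ and therefore cannot be erased for fewer than $\Omega(w)$ bits. This pins the bound at $\Theta(VEw)$.

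The main obstacle is the energy accounting of the relaxation step: one must argue not just that the \id{distance}/\id{predecessor} update costs $O(w)$, but that it genuinely costs $\Theta(w)$ and cannot be cheapened by bookkeeping internal to the loop --- which is exactly what the following reversible variant (Theorem~\ref{thm:bellman-ford-reversible}) sidesteps by spending $\Theta(VE)$ extra space. A secondary point requiring care is confirming that \id{tempDist} can always be uncomputed for free; this needs $u.\mathit{distance}$ to be unchanged between its computation and uncomputation, which fails only for self-loops and is handled as a degenerate case.
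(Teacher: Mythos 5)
Your proposal is correct and follows essentially the same route as the paper's proof: the energy bound comes from the $\Theta(VE)$ relaxations, each of which performs unconstrained word overwrites of \id{distance} and \id{predecessor} at a cost of $w$ bits, while time and space follow the textbook analysis. You are somewhat more careful than the paper --- in particular your explicit worst-case instance justifying the $\Omega(VEw)$ lower bound and your accounting for the reversible uncomputation of \id{tempDist} --- but these are refinements of the same argument rather than a different approach.
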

\iffull
\begin{proof}
In the primary nested for loops which iterate over the verticies and edges, we potentially overwrite the distance between two nodes with increasingly smaller distances, and overwrite the edges making up the shortest path. These each require $w$ energy to perform and are potentially called $\Theta(VE)$ times. This for loop is also what dominates the time complexity, as can be seen in any standard analysis. The algorithm requires storing the graph, as well as two arrays of size $V$, thus running in  $\Theta(V+E)$ space.
\end{proof}
\fi

\begin{theorem} \label{thm:bellman-ford-reversible}
Reversible Bellman-Ford runs in  $\Theta(VE)$ time,  $\Theta(VE)$ space, and  $0$ energy.
\end{theorem}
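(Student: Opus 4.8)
The plan is to take the standard Bellman-Ford algorithm and make it reversible by Complete Logging (Section~\ref{sec:EnergyReductionTechniques}): since Bellman-Ford runs in $\Theta(VE)$ time, storing one word of undo information per irreversible step costs only $\Theta(VE)$ space, which is exactly the claimed bound, and the logging itself adds only a constant-factor time overhead.

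First I would break the algorithm into its three phases: (1) the $O(V)$ initialization that writes $\infty$ into each non-source distance and $\id{null}$ into each predecessor; (2) the main phase of $V-1$ rounds, each relaxing all $E$ edges, i.e.\ $\Theta(VE)$ inner iterations; and (3) the $O(E)$ negative-cycle check. Every loop here ranges over a fixed integer interval or over the (never modified) vertex and edge lists, so by Theorem~\ref{thm:simplefor} each can be run reversibly with constant-factor overhead. The tests ``$\id{tempDist} < v.\id{distance}$'' inside a relaxation are not protected, since the body may overwrite $v.\id{distance}$; this is precisely the case the log stack is built for. Before performing a relaxation I would push onto the log stack the comparison bit together with the old contents of $v.\id{distance}$ and $v.\id{predecessor}$, which is enough to undo both overwrites and to recover the test in reverse. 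The scratch value $\id{tempDist}$ is produced from an underlined (zero) variable as $u.\id{distance}+w$ and is cleared again by subtracting $u.\id{distance}$ and $w$ back off at the end of each inner iteration, so it leaves no residue.

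Next I would bound the log. The only inherently irreversible operations are the two overwrites $v.\id{distance}\mathrel{\leftarrow}\id{tempDist}$ and $v.\id{predecessor}\mathrel{\leftarrow} u$, each occurring $O(VE)$ times and each needing $O(w)$ bits, i.e.\ $O(1)$ words, of undo data; the comparison bits contribute only $O(VE)$ further bits. Hence the log occupies $\Theta(VE)$ words, which dominates the $\Theta(V+E)$ needed for the graph and the two arrays, giving $\Theta(VE)$ total space and leaving the running time at $\Theta(VE)$. To satisfy the model's requirement that the machine finish in its initial configuration plus a copy of the output, I would, on termination, reversibly \proc{Copy} the distance array, the predecessor array, and a one-bit ``negative cycle?'' flag into fresh memory, then unroll the whole computation — executing the logged inverse of all three phases — so that every working variable and the log stack return to empty. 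The copy, made by a reversible operation, survives the unroll; the unroll costs another $\Theta(VE)$ time, so all bounds hold, and since every step was either inherently reversible or inverted from the log the energy is $0$.

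The step I expect to be the main obstacle is the bookkeeping in the relaxation: checking that the comparison bit plus the old $v.\id{distance}$ and $v.\id{predecessor}$ really do suffice to invert an inner iteration — so that, running backward, the restored $v.\id{distance}$ makes the logged bit consistent and the loop indices together with $\id{tempDist}$ are all recoverable — and that no inner iteration ever pushes more than $O(1)$ words, so the log stays within the $\Theta(VE)$ budget rather than blowing up by a factor of $w$ or $V$. Folding the ``fail'' branch of the negative-cycle check into this framework, by treating the failure indicator as part of the output rather than as a halt, is a minor but necessary detail.
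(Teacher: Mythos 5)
Your proposal matches the paper's proof: both clear $\id{tempDist}$ by subtracting $u.\id{distance}$ and $w$ back off at the end of each inner iteration, and both handle the genuinely irreversible overwrites of $v.\id{distance}$ and $v.\id{predecessor}$ by logging their old values, yielding a $\Theta(VE)$-word log and hence $\Theta(VE)$ space with zero energy. Your additional bookkeeping (comparison bits, the final copy-and-unroll, the fail branch) only fills in details the paper leaves implicit.
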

\iffull
\begin{proof}
First, at the end of each for loop involving the variable tempDist, we can subtract $distance[u]$ and $w$ from the value to ensure it is zero before we next need to set it, thus avoiding the energy cost of zeroing it out. Rewriting $v.$distance and $v.$predecessor do not appear to have an obvious way of making them reversible. We can; however, simply log the entire algorithm and pay the cost in writing their values to our log sheet whenever they are overwritten. This potentially requires $\Theta(VE)$ values to be stored in the log, resulting in a total space complexity of $\Theta(VE)$.
\end{proof}
\fi

\subsection{Floyd-Warshall}
\iffull
The Floyd-Warshall algorithm calculates the shortest paths between all pairs of nodes in $\Theta(V^3)$ time and $\Theta(V^2)$ space. Aside from initializing the path matrix to contain the edge weights, pseudocode for the algorithm is as follows \cite{clrs}:
\begin{pcode}
$\proc{FloydWarshall}()$: \+
     \id{for} $k=1$ \id{to} $n$:    \+   
            //simple for loop adds $\lg n$ extra energy or space
         \id{for} $i=1$ \id{to} $n$:   //another $\lg n$ \+
               \id{for} $j=1$ \id{}to $n:$    //one last $\lg n$ giving a total of $\lg^3n$ after nesting \+
                    $\id{path}[i][j] = \min(\id{path}[i][j], \id{path}[i][k]+\id{path}[k][j])$    
                            //unconstrained replacement takes w energy

\end{pcode}
\fi

Frank \cite{Frank99} argues that the Floyd-Warshall algorithm can be adapted to run reversibly with $\Theta(V^3)$ space. This is a substantial increase in space to make the program reversible and thus save energy.

\begin{theorem}\label{thm:floyd-warshall}
Floyd-Warshall runs in  $\Theta(V^3)$ time,  $\Theta(V^2)$ space, and  $\Theta(V^3w)$ energy.
\end{theorem}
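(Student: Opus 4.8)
The plan is to bound time, space, and energy separately, leaning on the reversible control-logic primitives of Section~\ref{sec:control-logic} so that the loop scaffolding contributes only constant factors to time and space and nothing to energy.

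Time and space follow the standard textbook analysis. Floyd--Warshall is three nested loops over $1,\dots,V$ with $O(1)$ work in the body, so it runs in $\Theta(V^3)$ time; the only persistent storage is the $V\times V$ path matrix plus $O(1)$ scratch words per loop level, so it uses $\Theta(V^2)$ space, and the initialization pass that loads edge weights (together with the $\infty$ and $0$ entries) touches $\Theta(V^2)$ cells and is dominated by the main loop. In our model the one point to verify is that the three loops qualify as \emph{simple for loops} in the sense of Theorem~\ref{thm:simplefor}: each of $k$, $i$, $j$ ranges over a fixed $1,\dots,V$ and the body never alters the loop variable or its bound, so each loop carries only constant-factor time and space overhead and zero energy, and nesting three of them keeps that overhead constant.

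For the energy upper bound I would note that the body's only irreversible step is the assignment $\id{path}[i][j] = \min(\id{path}[i][j],\, \id{path}[i][k]+\id{path}[k][j])$. The addition $\id{path}[i][k]+\id{path}[k][j]$ is reversible; the comparison inside $\min$ costs $O(1)$ energy per execution (or can be done reversibly at the price of one log bit that is later cleaned up), contributing only an $O(V^3)$ lower-order term; but overwriting $\id{path}[i][j]$ with a value not promised to equal its old contents is an unconstrained word replacement, costing at most $w$ energy. Since the body runs $\Theta(V^3)$ times and everything else --- loop counters, the \proc{Copy} operations into zeroed memory during initialization, index arithmetic into the matrix --- is free, the total energy is $O(V^3w)$.

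The matching lower bound is where I expect the real work to be. In the Energy Word RAM model it is essentially immediate, since the $\Theta(V^3)$ overwrites are exactly the \proc{Set} operation, which is charged $w$ energy by definition. In the Transdichotomous model one must instead exhibit inputs on which a constant fraction of the $\Theta(V^3)$ updates genuinely destroy $\Omega(w)$ bits: for instance a dense graph whose edge weights are chosen so that at the relevant triples $(i,j,k)$ the current entry $\id{path}[i][j]$ is strictly larger than $\id{path}[i][k]+\id{path}[k][j]$ and is incompressible given the rest of the machine state, so its old value cannot be reconstructed from the post-assignment configuration; Landauer's principle then forces $\Omega(w)$ energy per such step. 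The subtlety --- and the main obstacle --- is pinning down exactly which implementation we are analyzing (a naive one that always performs the overwrite versus one that skips no-op updates) and arguing that the weight construction keeps enough of the updates information-destroying; once that is settled, the $\Theta(V^3w)$ energy, $\Theta(V^3)$ time, and $\Theta(V^2)$ space bounds all assemble from the pieces above.
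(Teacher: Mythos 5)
Your time, space, and energy upper-bound analysis matches the paper's proof, which is only two sentences long: the algorithm performs $\Theta(V^3)$ updates, each consisting of a comparison, an addition, and a replacement, and each unconstrained replacement of $\id{path}[i][j]$ costs $w$ energy, giving $\Theta(V^3 w)$. Where you diverge is the lower bound, and here you are doing work the model does not ask for. In both the Energy Word RAM and the Energy Transdichotomous RAM, the energy charge is a property of the \emph{operation} --- $\proc{Set}$ costs $w$ by definition, and in the transdichotomous model the charge is $\lg$ of the ratio of the operation's input space to its output space --- not of the actual information content of the data on a particular execution. An unconstrained overwrite $(a,b)=(b,b)$ costs exactly $w$ on every invocation, even if the old and new values happen to coincide, so no adversarial weight construction or incompressibility/Landauer argument is needed: once the implementation is fixed to the paper's pseudocode (which performs the unconstrained replacement on every iteration, as its comment ``unconstrained replacement takes $w$ energy'' indicates), the $\Omega(V^3 w)$ bound is immediate from the same operation count as the upper bound. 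Your instinct that the answer depends on which implementation is analyzed is correct --- a version that skips no-op updates or uses an \keyw{assert} promise would be a different algorithm with a different cost, and indeed the paper's Theorem~\ref{thm:floyd-warshall-reversible} is exactly such a variant --- but for the standard algorithm the theorem is about, the lower bound requires no input-specific argument.
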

\iffull
\begin{proof}
The algorithm acts on a matrix of size $v^2$ and performs $\Theta(V^3)$ updates which consist of a comparison, addition, and possible replacement. Each replacement requires $w$ energy, yielding a total energy complexity of $\Theta(V^3w)$.
\end{proof}
\fi

\begin{theorem} \label{thm:floyd-warshall-reversible}
Reversible Floyd-Warshall runs in $\Theta(V^3)$ time,  $O(V^3)$ space, and  $0$ energy.
\end{theorem}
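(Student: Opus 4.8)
The only obstruction to running the textbook algorithm reversibly is the in-place update $\text{path}[i][j] \gets \min(\text{path}[i][j],\, \text{path}[i][k] + \text{path}[k][j])$, which overwrites — and so destroys — the old value of $\text{path}[i][j]$. The plan is to remove this destruction by spending $\Theta(V^3)$ space: instead of a single $V \times V$ matrix, maintain $V+1$ matrices $D_0, D_1, \dots, D_V$, where $D_0$ is a freshly \proc{Copy}-ied edge-weight matrix (with $\infty$ represented by a large saturating sentinel) and $D_k$ is the distance matrix after the $k$-th outer iteration. Each layer $D_k$ is allocated immediately before it is filled, so all of its cells are known to be zero; we underline them accordingly.

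First I would give the forward pass. For $k = 1$ to $V$: allocate $D_k$, then for $i = 1$ to $V$ and $j = 1$ to $V$ compute $D_k[i][j]$ from the previous layer without touching it. Concretely: \proc{Copy} $D_{k-1}[i][k]$ into a zeroed temporary $t$, add $D_{k-1}[k][j]$ to $t$ (saturating), set $\underline{D_k[i][j]} \gets \min(D_{k-1}[i][j],\, t)$, and then uncompute $t$ by subtracting $D_{k-1}[k][j]$ and calling \proc{DestroyCopy}$(D_{k-1}[i][k], t)$. Every step here is reversible: in-place addition/subtraction, \proc{Copy}, and \proc{DestroyCopy} are reversible word operations, and writing $\min(a,b)$ into a known-zero cell while leaving $a$ and $b$ in place is a bijection on the relevant state — in the Transdichotomous framing, $(a, b, \underline{c}) = (a, b, \min(a,b))$ has input and output spaces of equal size, hence costs $0$ energy; in the Word RAM it is realized by one comparison, a protected \keyw{if} that adds $a$ or $b$ into $c$, and an uncomputation of the comparison bit against the still-intact $a, b$. (If next-hop/predecessor data is wanted, layer a $\text{next}$-matrix in exactly the same way: the branch taken is a deterministic function of the $D_{k-1}$ entries, so the layered $\text{next}$-update is again a bijection into zeroed cells.) All three nested loops are simple \keyw{for} loops in the sense of Theorem~\ref{thm:simplefor}, so the control flow adds only constant-factor time and space overhead and zero energy.

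After the forward pass, $D_V$ holds the all-pairs shortest-path matrix; \proc{Copy} it into the designated output region. Then run the forward pass backwards: for $k = V$ down to $1$, uncompute each cell of $D_k$ using the still-present layer $D_{k-1}$ (the exact inverse of the per-cell procedure above) and deallocate $D_k$; finally \proc{DestroyCopy} $D_0$ against the input weight matrix. This leaves the machine in its initial state apart from the output copy, as the model requires. Peak space is $V+1$ matrices of $\Theta(V^2)$ words, i.e.\ $\Theta(V^3)$; the forward and reverse passes each perform $\Theta(V^3)$ constant-time reversible operations, giving $\Theta(V^3)$ time; and since no irreversible operation is ever executed, the energy cost is $0$.

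The points that need care are the $\infty$/overflow bookkeeping — the additions must use a saturating convention so that "no path" compares correctly, and the saturating add must itself be written reversibly, which it is since its inputs survive in $D_{k-1}$ and its output lands in a zeroed temporary — and, more fundamentally, the observation that reversibility of the update \emph{forces} us to keep $D_{k-1}$ intact while producing $D_k$. There is no way to recycle a layer in place without reintroducing exactly the information loss we are avoiding; this is the source of the $\Theta(V^3)$ space blow-up and is precisely what Frank's construction pays for. That unavoidable space cost, rather than any subtlety in the per-operation analysis, is the main obstacle.
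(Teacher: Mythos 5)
Your proposal is correct and rests on the same idea as the paper's proof: the only irreversible step in Floyd--Warshall is the overwrite of $\id{path}[i][j]$, and both arguments pay $\Theta(V^3)$ space to retain that information while everything else (simple for loops, addition, min into a fresh cell) is done reversibly. The paper phrases this as logging each replaced value to the log stack and unrolling, whereas you materialize all $V+1$ intermediate matrices with every write landing in a known-zero cell; this is a cosmetic difference in bookkeeping, not in substance.
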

\iffull
\begin{proof}
Simple for loops, as well as addition and minimum can all be performed reversibly; however, we will need to log the replacements done during the updates for the algorithm. Thus running the algorithm reversibly will require $O(V^3)$ space, despite constant-factor overheads in time.
\end{proof}
\fi

\subsection{All Pairs Shortest Path via $(\min, +)$ Matrix Multiplication}
Another algorithm for solving APSP involves using the adjacency matrix representation of a graph $A$ and noticing that the relaxations over the edges can be expressed by calculating a new matrix, $C$, whose entries are given by $c_{ij} = \underaccent{k}{\min} (a_{ik}+a_{kj})$. Further, this operation is associative, so we can speed up the calculation by using repeated squaring. Thus we have $O(\lg V)$ iterations over $(V^2)$ elements which take $O(V)$ time to compute. 
Frank \cite{Frank99} claims without proof that this leads to a reversible algorithm that runs in $\Theta(V^3\lg V)$ time and $\Theta(V^2\lg V)$ space. We give a proof of this result.

\begin{theorem} \label{thm:repeated-APSP-reversible}
Reversible APSP using repeated squaring with (min,+) matrix multiplication runs in $O(V^3\lg V)$ time,  $O(V^2 \lg V)$ space, and  $0$ energy.
\end{theorem}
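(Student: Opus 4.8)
The plan is to realize repeated squaring under the $(\min,+)$ product while keeping \emph{every} intermediate power resident, so that no many-to-one step ever has to be inverted in place. Recall that for a graph on $V$ vertices without a negative-weight cycle the $(\min,+)$ power $A^{(2^m)}$ already equals the all-pairs shortest-path weight matrix once $2^m \ge V-1$, so it suffices to form the $m+1 = \Theta(\lg V)$ matrices $A_0 = A$ and $A_{t+1} = A_t \star A_t$ for $t = 0,\dots,m-1$ with $m = \lceil \lg(V-1)\rceil$. I would store all of these matrices simultaneously; since each is $V \times V$ words, this alone accounts for the claimed $\Theta(V^2 \lg V)$ space, and the argument below shows the computation needs only $O(V)$ further scratch words at any instant.

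The core step is to show that a single multiplication $C \mathrel{\leftarrow} A \star B$ into a freshly zeroed block $C$ is a fully reversible operation preserving $A$ and $B$. Write it with three nested simple \keyw{for} loops (Theorem~\ref{thm:simplefor}): the outer two range over the row $a$ and column $b$, and for each fixed $(a,b)$ the value $C[a][b] = \min_k (A[a][k]+B[k][b])$ is computed inside a \keyw{log} block. In that block a running minimum $m$ is initialized reversibly from the $k=1$ term and then updated over $k=2,\dots,V$: the candidate $A[a][k]+B[k][b]$ is formed in a zeroed scratch word by two reversible additions; the test ``candidate $<\,m$'' is a general \keyw{if} (Theorem~\ref{thm:if}), which logs one bit and, on iterations where the minimum changes, displaces the old $w$-bit value of $m$ into the log (all in-place overwrites being avoided by the usual arithmetic juggling of Section~\ref{sec:LoggingAndUnrolling}); the scratch word is then emptied by subtracting the two terms back out. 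After the loop $m$ holds $C[a][b]$, so I \proc{Copy} it into the (still zero) cell $C[a][b]$ and then \keyw{unroll}, which replays the loop backwards against the still-resident $A$ and $B$, restoring $m$ and the scratch to zero and emptying the log. Hence each entry is produced using only $O(Vw)=O(V\lg V)$ bits $=O(V)$ words of transient log, all reclaimed before the next entry; the multiplication leaves behind exactly $C$, with $A,B$ untouched. Its inverse is symmetric: for each $(a,b)$, recompute $m$ from $A,B$ by the same logged loop, \proc{DestroyCopy} it out of $C[a][b]$ to zero that cell, then \keyw{unroll} --- this is precisely the two-step unrolling pattern of Theorem~\ref{RSubroutineThm}.

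Given the squaring step as a reversible map $\Phi_t\colon (A_t,\underline{0})\mapsto (A_t,A_{t+1})$, the whole algorithm is a \proc{Copy} of the input into the $A_0$ slot followed by $\Phi_{m-1}\circ\cdots\circ\Phi_0$, all reversible. To meet the ``$0$ energy, machine restored'' requirement, after this forward pass I would \proc{Copy} $A_m$ into the output region and then run the inverse pass $\Phi_0^{-1}\circ\cdots\circ\Phi_{m-1}^{-1}$, ending with a \proc{DestroyCopy} that clears the $A_0$ slot against the untouched input $A$; the machine is back in its start state plus a copy of the output. Time is $\Theta(V^3)$ per forward or inverse squaring and $\Theta(\lg V)$ squarings each way, so $\Theta(V^3\lg V)$; space is $\Theta(V^2\lg V)$ words for the stored powers plus $O(V)$ scratch; and energy is $0$ because every primitive used (\proc{Copy}, \proc{DestroyCopy}, reversible add/subtract, simple \keyw{for}, logged \keyw{if}, \keyw{log}/\keyw{unroll}) is reversible.

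The step I expect to be the real obstacle is exactly this space accounting for the multiplication: a naive ``log everything'' over a $V\times V$ product accrues $\Theta(V^3)$ bits of garbage, which already blows the $O(V^2\lg V)$-word budget, so the proof genuinely hinges on interleaving the unrolling with the computation at the granularity of one output entry --- and on the input matrices remaining resident, so per-entry garbage can be both replayed (for the uncompute during reversal) and discarded (via \keyw{unroll} during the forward pass) without ever being stored en masse. A secondary point to nail down is the boundary behaviour of repeated squaring (the $\infty$/zero-diagonal initialization of $A_0$ and squaring $\lceil\lg(V-1)\rceil$ times, with the standard no-negative-cycle caveat), together with the remark that this computes shortest-path \emph{weights}; threading witness indices through each $\min$ to recover explicit paths inflates space by only a constant factor and leaves the asymptotics unchanged.
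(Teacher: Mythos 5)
Your proposal is correct and follows essentially the same route as the paper's proof: store all $\Theta(\lg V)$ intermediate matrices for the $O(V^2\lg V)$ space bound, compute each $(\min,+)$ entry inside a logged block of $O(wV)$ transient bits that is unrolled immediately after copying out the result, and reverse a whole squaring by recomputing each entry from the still-resident predecessor matrix, deleting the copy, and unwinding. Your version is more explicit about the per-entry interleaving of unrolling (the point that prevents a $\Theta(V^3)$-bit log) and about the final inverse pass restoring the machine, but these are elaborations of the paper's argument rather than a different approach.
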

\iffull
\begin{proof}
In this algorithm, information is lost when we overwrite our previous entries after calculating the minimum, an irreversible operation. To make this reversible, we can simply store a log of all intermediate matrices, using $O(V^2 \lg V)$ space. From each intermediate state, we can recompute the values needed to unroll the last step of the matrix squaring. We need only recompute these values a single time, leading to a constant-factor increase in time, giving the desired result. 
Let us take a closer look at an individual matrix multiply. Here we can loop over every new entry in our matrix and calculate each term individually. Given a graph represented by adjacency matrix $W = (w_{i,j})$, and a matrix $L^{(m)} = (l^{(m)}_{i,j})$ representing the shortest paths between two vertices with path length at most $m$. Each entry is updated as 
$$l^{(m+1)}_{i,j} = \underaccent{1\leq k \leq |V|}{\min} \left ( l^{(m)}_{i,j} + w_{k,j} \right )$$
This section of the algorithm perform $\Theta(V)$ reversible additions but also $O(V)$ irreversible operations to perform the $\min$ over all of the entries. We can log all of these operations with $O(wV)$ space overhead and then proceed to unwind leaving our new matrix entry. Now to unwind a whole matrix multiply, we will simply recalculate each individual entry, delete the copy, and unwind. Now that we know we can reverse each matrix multiply given the previous one, we will simply store all $O(\lg n)$ intermediate matrices which are calculated. This leads to $O(V^2 \lg V)$ space and a constant factor overhead in time.
\end{proof}
\fi

\begin{theorem} \label{thm:repeated-APSP}
APSP using repeated squaring with (min,+) matrix multiplication runs in $O(V^3\lg V)$ time,  $O(V^2)$ space, and  $O(wV^3\lg V)$ energy.
\end{theorem}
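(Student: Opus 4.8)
The plan is to analyze the naive repeated-squaring APSP algorithm exactly as described in the paragraph preceding the statement, charging time, space, and Landauer energy to its elementary operations in the same style as the Floyd--Warshall analysis (Theorem~\ref{thm:floyd-warshall}). Recall the algorithm: starting from the weighted adjacency matrix $A = L^{(1)}$, it repeatedly forms $L^{(2m)} = L^{(m)} \star L^{(m)}$ under the $(\min,+)$ semiring, where $(B \star C)_{ij} = \min_{1 \le k \le V}(b_{ik} + c_{kj})$; after $\lceil \lg(V-1)\rceil$ squarings, $L^{(m)}$ with $m \ge V-1$ holds every shortest-path weight. I would first fix this description precisely and then account for the three resources in turn.

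For time and space: one $(\min,+)$ product is computed by the standard triply nested loop over $i,j,k$, taking $\Theta(V^3)$ word operations, and there are $\Theta(\lg V)$ such products, giving $\Theta(V^3\lg V)$ time. The three loop counters are maintained with simple \keyw{for} loops, so by Theorem~\ref{thm:simplefor} they add only constant-factor time overhead and contribute no energy. For space, we keep the current matrix and the one being built, i.e.\ $\Theta(V^2)$ words; since this is the irreversible algorithm we discard each intermediate matrix rather than logging it, so the space bound stays $\Theta(V^2)$.

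For energy: the additions $a_{ik}+c_{kj}$ can be performed reversibly into a scratch register and uncomputed after use, costing no energy, and each comparison against the running minimum costs $O(1)$ bits (following the convention used for Floyd--Warshall, these lower-order terms are folded in). The only genuinely expensive step is overwriting the running-minimum cell whenever a strictly smaller candidate is encountered while scanning $k$, which destroys a $w$-bit value and so costs $\Theta(w)$ energy. In the worst case (a strictly decreasing sequence of candidates) this overwrite fires $\Theta(V)$ times for a single entry, giving $O(wV)$ energy per entry, hence $O(wV^3)$ per product and $O(wV^3\lg V)$ over all $\Theta(\lg V)$ products; this dominates the $O(V^3\lg V)$ from comparison bits, yielding the claimed $O(wV^3\lg V)$ energy.

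The main thing needing care — and the only real obstacle — is confirming that all the ``cheap'' operations (loop indices, comparison-result bits, the additive scratch register) can genuinely be made free or $O(1)$-bit, so that the $w\cdot V^3\lg V$ term from running-minimum overwrites really is dominant; and flagging that this accounting is deliberately pessimistic. A smarter implementation that writes each final entry only once achieves the $O(wV^2\lg V)$ energy of the semi-reversible variant (cf.\ Theorem~\ref{thm:repeated-APSP-reversible}), but the present theorem concerns the algorithm as literally written, so no such optimization is applied.
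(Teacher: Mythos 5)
Your proposal is correct and follows essentially the same route as the paper's own (much terser) proof: time is $\Theta(V^3)$ per product over $\Theta(\lg V)$ squarings, space is $O(V^2)$ because intermediate matrices are discarded rather than logged, and the energy bound comes from charging $w$ bits to each overwrite of the running minimum in the inner loop, which fires $O(V^3\lg V)$ times in total. Your added care about the cheap operations (loop counters, comparison bits, uncomputed scratch additions) and the explicit worst case for the running-minimum overwrites only makes explicit what the paper leaves implicit.
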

\iffull
\begin{proof}
The time analysis is the same as with the reversible case. In terms of space, we only need to store the current matrix, and thus only need $O(V^2)$ space. At every calculation of a new matrix element, we overwrite the previous one, expending a word of energy for each of these computations.
\end{proof}
\fi

We now present a new variation on APSP which demonstrates a non-trivial trade-off between energy and space. By exploiting reversible subroutines, we're able to reach the APSP with repeated squaring bounds on time and space, but beat it in energy cost. The reversible, semi-reversible, and standard APSP using repeated squaring demonstrate there are semi-reversible algorithms that actually achieve bounds not reached by the fully reversible or fully irreversible counterparts.

\begin{theorem}\label{thm:repeated-APSP-semi-reversible}
Semi-reversible APSP using repeated squaring with (min,+) matrix multiplication runs in $O(V^3\lg V)$ time,  $O(V^2)$ space, and  $O(wV^2\lg V)$ energy.
\end{theorem}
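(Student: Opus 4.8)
The plan is to reuse, as a black box, the reversible single-step $(\min,+)$ matrix squaring that underlies the proof of Theorem~\ref{thm:repeated-APSP-reversible}, and to make the overall repeated-squaring loop only \emph{semi}-reversible by irreversibly discarding each intermediate matrix as soon as the next one has been built. Concretely, recall that one squaring step --- given the previous shortest-path matrix $A_i$ together with zeroed scratch memory --- can be carried out as a fully reversible subroutine in the sense of Theorem~\ref{RSubroutineThm}: we compute each of the $V^2$ output entries $(A_{i+1})_{jk} = \min_{\ell}\left((A_i)_{j\ell}+(A_i)_{\ell k}\right)$ in $O(V)$ time using an $O(wV)$-bit log for the running minimum, copy the finished entry into a fresh $A_{i+1}$ array, and then unroll that log before moving on to the next entry. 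After all $V^2$ entries have been produced the log is empty again, so the subroutine leaves exactly $A_i$, the new matrix $A_{i+1}$, and zeroed scratch, using $O(V^2)$ words of space and $O(V^3)$ time, at zero energy.

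First I would set $A_1 = W$ (the input adjacency matrix) and iterate this reversible squaring $\lceil\lg(V-1)\rceil = O(\lg V)$ times, producing $A_2, A_3, \ldots$ up to the final matrix. The one new ingredient is this: immediately after the $i$-th squaring has returned $A_{i+1}$, and for every $i \ge 2$, I irreversibly zero out the $V^2$ words holding $A_i$. Because the squaring subroutine has already unrolled all of its internal logging, $A_{i+1}$ carries no hidden dependence on $A_i$ --- it is just data --- so this is a clean irreversible deletion costing exactly $w$ energy per word, i.e.\ $wV^2$ energy per discarded matrix, after which the computation continues from $A_{i+1}$ exactly as the irreversible algorithm of Theorem~\ref{thm:repeated-APSP} would. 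Over the $O(\lg V)$ rounds these deletions contribute $O(wV^2\lg V)$ energy, and they are the only source of energy in the algorithm.

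Then the three complexity claims fall out. Time is $O(\lg V)$ calls to an $O(V^3)$-time subroutine plus $O(\lg V)$ erasures of $V^2$ words, i.e.\ $O(V^3\lg V)$. At any instant we hold at most three matrices ($W$, the current $A_i$, and the $A_{i+1}$ being built) together with the $O(wV)$-bit (i.e.\ $O(V)$-word) scratch of the current squaring, so the live data is $O(V^2)$ words. Energy is $O(wV^2\lg V)$ as just computed, and every reversible squaring contributes nothing. Finally I would check the model's cleanup obligation: at termination the machine holds the untouched input $W$ and the output (the final matrix), with all scratch and all intermediate matrices zeroed, so no further reversal is needed --- the discarded intermediates are neither the input nor the output, hence never have to be recovered.

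The main obstacle is really just a point requiring care rather than a genuine difficulty: justifying the per-step space bound. The $O(wV)$-bit minimum log must be unrolled after each individual entry rather than accumulated across the $V^2$ entries (otherwise a single squaring would cost $\Theta(wV^3)$ space, blowing the bound), and one must confirm that the squaring subroutine genuinely returns ``$A_i$, $A_{i+1}$, zeroed scratch'' and nothing else, so that erasing $A_i$ is both a legal $wV^2$-energy operation and sufficient to cap the live data at $O(V^2)$. I would also remark that this places semi-reversible APSP at a trade-off point unattainable by either extreme: it matches the $O(V^2)$ space of the irreversible algorithm (Theorem~\ref{thm:repeated-APSP}) and the $O(V^3\lg V)$ time of the reversible one (Theorem~\ref{thm:repeated-APSP-reversible}), while spending a factor $\Theta(V)$ less energy than the irreversible version.
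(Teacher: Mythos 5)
Your proposal is correct and follows essentially the same route as the paper's proof: make the inner $(\min,+)$ entry computation a fully reversible, unrolled subroutine (Theorem~\ref{RSubroutineThm}) so that the only remaining irreversibility is erasing $\Theta(V^2)$ words of stale matrix data per squaring round, giving $O(wV^2\lg V)$ energy with unchanged time and space. The only difference is bookkeeping granularity --- you charge the energy to deleting each whole intermediate matrix at once, while the paper charges $w$ per overwritten entry --- which yields identical bounds.
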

\begin{proof}
To begin, we will examine how each individual entry in the matrix is updated. Say we have a graph represented by adjacency matrix $W = (w_{i,j})$, and a matrix $L^{(m)} = (l^{(m)}_{i,j})$ representing the shortest paths between two vertices with path length at most $m$. Each entry is updated as 
$$l^{(m+1)}_{i,j} = \underaccent{1\leq k \leq |V|}{\min} \left ( l^{(m)}_{i,j} + w_{k,j} \right )$$
This subroutine runs in $O(V)$ time and $O(wV)$ energy and can thus be trivially made reversible by logging everything, using $O(V)$ time and space. We replace our normal update function with the new reversible one, and by Theorem~\ref{thm:fxncall} we have a new, more energy efficient algorithm. The subroutine does not use asymptotically more time than before, the temporary use of $O(V)$ space is much smaller than that needed to store the matrices and is freed upon completion of the subroutine, and the energy cost drops by a factor of $V$ which reduces the algorithms total energy cost by a factor of $V$.
\end{proof}

\section{Future Directions}
\label{sec:future}

This paper built up a framework for designing and analyzing the energy cost of
algorithms caused by irreversibility, and started the quest for positive
results for basic algorithms and data structures.  In many cases, we obtained
fully reversible versions of algorithms, but other problems seem more resistant.
For example, is there a reversible all-pairs shortest path algorithm with only
constant factor overheads in time and space?
We managed to give a reduced-energy semi-reversible algorithm for the problem,
but a fully reversible algorithm still seems elusive.
Shortest-path algorithms more generally seem like a category that are
difficult to make reversible, as they use very little space and make frequent
use of rewriting old values.  We anticipate other graph problems such as
max-flow/min-cut may also be challenging and interesting for similar reasons.

There are more fundamental algorithms that should be given high priority given
their use in many other results: hashing, predecessor data structures
(e.g., van Emde Boas trees), max-flow/min-cut, Fast Fourier Transforms,
and dynamic programming.
Geometric algorithms offer more nontrivial challenges to attain reversibility,
such as line intersection, orthogonal range finding, convex hull, and Delaunay
triangulations.
We also see the field of machine learning being an interesting target for
analysis in the semi-reversible model: these algorithms often have
significantly higher time complexities than space complexities, fundamental
updates (such as Bayes' rule) which appear reversible, and many conditional
updates or data overwrites.

One important question for any practical application is how to deal with
long-running programs. Although we are perfectly happy to log some auxiliary
information during the execution of a specific program, it may be more
problematic to maintain reversibility for the entire operating system of a
computer or a long-lived database. This is an area we believe ideas like
semi-reversibility and periodic rebuilding will become particularly important.

There are some areas where we see slight extensions of the model opening up
interesting questions. First, incorporating randomness seems a practical
necessity and carries interesting thermodynamic implications depending on how
it is modeled. Assuming there is an energy cost associated with the production
of randomness (say, equal to the number of random bits), this may give further
reason to investigate exactly how much randomness is needed for an algorithm's
correctness. Streaming algorithms and other models where the working space is
much smaller than the problem input seem like a rich source of problems.
Because we now use sublinear space, our trivial transform is no longer
applicable.  Further, the larger the gap in space and time, the less ability
we have to accrue garbage. Finally, succinct data structures, which try to
minimize the bits of space used up to sublinear factors, seem like another
challenge: many of our transforms double or triple the space being used by an
algorithm, while in the succinct setting, this overhead must be considered.

Finally, a major open direction is to obtain lower bounds.  The additional
constraints on semi-reversible algorithm design might allow showing algorithms
cannot be obtained without some minimum time-space-energy trade-off.

\section*{Acknowledgments}

We thank Martin Demaine and Kevin Kelley for helpful early discussions about
this project, in particular early formulations of the models of computation.
We also thank Maria L. Messick and Licheng Rao for useful discussion and help clarifying our pseudocode and model.

\let\realbibitem=\bibitem
\def\bibitem{\par \vspace{-1.5ex}\realbibitem}

\bibliographystyle{alpha}
\bibliography{EnergyBib}
\end{document}